\newtheorem{proposition}{Proposition}
\newtheorem{lemma}{Lemma}
\newtheorem{remark}{Remark}
\newtheorem{corollary}{Corollary}
\theoremstyle{remark}
\DeclareMathOperator*{\argmin}{arg\,min}
\newcommand{\indep}{\perp \!\!\! \perp}
\newcommand{\bX}{\mathbf{X}}
\newcommand{\bY}{\mathbf{Y}}
\newcommand{\bR}{\mathbf{R}}
\newcommand{\bD}{\mathbf{D}}
\newcommand{\bI}{\mathbf{I}}
\newcommand{\bSigma}{\mathbf{\Sigma}}
\newcommand{\bE}{\mathbf{E}}
\newcommand{\bG}{\mathbf{G}}
\newcommand{\bP}{\mathbf{P}}
\newcommand{\bV}{\mathbf{V}}
\newcommand{\bH}{\mathbf{H}}
\newcommand{\bs}{\mathbf{s}}
\newcommand{\bQ}{\mathbf{Q}}
\newcommand{\bZ}{\mathbf{Z}}
\newcommand{\bS}{\mathbf{S}}
\newcommand{\bW}{\mathbf{W}}
\newcommand{\bU}{\mathbf{U}}
\newcommand{\bC}{\mathbf{C}}
\newcommand{\bL}{\mathbf{L}}
\newcommand{\bd}{\mathbf{d}}
\newcommand{\br}{\mathbf{r}}
\newcommand{\bx}{\mathbf{x}}
\newcommand{\cG}{\mathcal{G}}
\newcommand{\bzero}{\mathbf{0}}
\newcommand{\bLambda}{\bm{\Lambda}}
\newcommand{\bTheta}{\bm{\Theta}}
\newcommand{\bgamma}{\bm{\gamma}}
\newcommand{\bPi}{\bm{\Pi}}
\newcommand{\bbeta}{\bm{\beta}}
\newcommand{\bepsilon}{\bm{\epsilon}}
\newcommand{\bbP}{\mathbb{P}}
\newcommand{\bbR}{\mathbb{R}}
\newcommand{\Lp}{\left(}
\newcommand{\Rp}{\right)}
\newcommand{\Ls}{\left[}
\newcommand{\Rs}{\right]}
\DeclarePairedDelimiter\abs{\lvert}{\rvert}
\DeclarePairedDelimiter\norm{\lVert}{\rVert}
\providecommand{\keywords}[1]
{
  \small	
  \textbf{\textit{Keywords---}} #1
}
\DeclareFontFamily{U}{mathx}{\hyphenchar\font45}
\DeclareFontShape{U}{mathx}{m}{n}{
      <5> <6> <7> <8> <9> <10>
    <10.95> <12> <14.4> <17.28> <20.74> <24.88>
    mathx10
}{}
\DeclareSymbolFont{mathx}{U}{mathx}{m}{n}
\DeclareMathAccent{\widecheck}{0}{mathx}{"71}
\begin{document}

\title{Controlled Variable Selection from Summary Statistics Only? \\A Solution via GhostKnockoffs and Penalized Regression}

	\author[1$^*$]{Zhaomeng Chen}
	\author[2,3$^*$]{Zihuai He}
    \author[4]{Benjamin B. Chu}
    \author[2]{Jiaqi Gu}
    \author[1]{Tim Morrison} 
    \author[1,4]{\hspace{9mm}Chiara Sabatti} 
    \author[1,5]{Emmanuel Candès}
        
    \affil[1]{Department of Statistics, Stanford University}
	\affil[2]{Department of Neurology and Neurological Sciences, Stanford University}
	\affil[3]{Department of Medicine (Biomedical Informatics Research), Stanford University}
	\affil[4]{Department of Biomedical Data Science, Stanford University}
	\affil[5]{Department of Mathematics, Stanford University}
	\date{}                     
	\setcounter{Maxaffil}{0}
	\renewcommand\Affilfont{\itshape\small}
	\maketitle
 
	\def\thefootnote{*}\footnotetext{Equal contribution.}

\newcommand{\ejc}[1]{\textcolor{red}{[EJC: #1]}}
\newcommand{\jg}[1]{\textcolor{purple}{[JG: #1]}}
\newcommand{\zc}[1]{\textcolor{blue}{[ZC: #1]}}

\begin{abstract}
Identifying which variables do influence a response while controlling false positives pervades statistics and data science. In this paper, we consider a scenario in which we only have access to summary statistics, such as the values of marginal empirical correlations between each dependent variable of potential interest and the response. This situation may arise due to privacy concerns, e.g., to avoid the release of sensitive genetic information. We extend GhostKnockoffs \cite{ghostknockoffs} and introduce variable selection methods based on penalized regression achieving false discovery rate (FDR) control. We report empirical results in extensive simulation studies, demonstrating enhanced performance over previous work. We also apply our methods to genome-wide association studies of Alzheimer's disease, and evidence a significant improvement in power.
\end{abstract}
\keywords{Variable selection, replicability, summary statistics, false discovery rate (FDR), knockoffs, genome-wide association study (GWAS), pseudo-lasso}

\section{Introduction} \label{section:intro}
\subsection{Background and contributions}
Modern large-scale studies frequently involve a multitude of explanatory variables potentially associated with an outcome we would like to better understand. Oftentimes, the goal is to select those explanatory variables that are meaningfully associated with the response variable. For instance, with recent advances in genome sequencing technologies and genotype imputation techniques, one can now gather tens of millions of variants from hundreds of thousands of samples in large-scale genetic studies, with the aim of pinpointing which genetic variants are biologically associated with specific diseases. This information could provide mechanistic insights and potentially aid the development of targeted drugs. In statistics, this challenge is typically framed as a multiple testing problem. Further, due to the sheer number of hypotheses considered and the cost of following false leads, it is generally required to control some form of error rate on the false positives.

In this paper, we focus on controlling the false discovery rate (FDR), which is the expected proportion of false selections among all selected variables. Compared to the more stringent familywise error rate (FWER) control, keeping the FDR under a nominal level allows for more discoveries while maintaining a reasonable statistical guarantee on the rate of false positives. Several methods for FDR control have been proposed in the literature, with the Benjamini-Hochberg procedure being particularly popular \citep{bh}. However, these approaches often assume a parametric model or the existence of valid $p$-values, which remains difficult, and even problematic, in high-dimensional settings. 

\cite{model-x} proposed the model-X knockoffs, a broad and flexible framework which allows the statistician to select variables that retain dependence with the response conditional on all other covariates while maintaining FDR control. Model-X knockoffs differs from previous approaches in that (1) it makes no modeling assumptions on the distribution of the response $Y$ we wish to study conditional on the family of covariates $X$, and (2) it does not require the construction of valid $p$-values. Instead, the crucial assumption is that the distribution of $X$ is known. The main idea in \cite{model-x} is to generate fake variables $\widetilde{X}$, \textit{knockoffs}, which we can view as negative controls and can be used to tease apart variables that do influence the response from those who do not. Model-X knockoffs has proved effective in a number of real-world applications, particularly in GWAS; see \cite{bates2020causal}, \cite{sesia2021false} and \cite{ghostknockoffs} for examples.

To deploy model-X knockoffs, researchers must have in hand the covariates and responses from all samples. However, in certain situations, individual-level data that may reveal sensitive personal information is not readily accessible. For example, due to privacy concerns, many GWAS studies only publish summary statistics of the original data \citep{pasaniuc2017dissecting}. Yet in such cases, we would still like to develop controlled variable selection methods that rely solely on summary statistics. In genetic studies, this would enable us to utilize available summary data from different data centers to conduct meta-analysis, enhancing the effective sample size and improving variable selection power. On this front, \cite{ghostknockoffs} proposed the framework of GhostKnockoffs, which implements the knockoffs procedure with the marginal correlation difference feature importance statistic directly from summary statistics. As we shall review next, the main idea is to generate knockoff $Z-$scores directly without creating knockoff variables; all that is needed are marginal correlations between the response and the features under study. In details, with $n$ being the sample size and $p$ the number of variables being assayed, the method operates with only $\bX^\top \bY$ and $\norm{\bY}_2^2$, where $\bX$ is the $n \times p$ matrix of covariates, and $\bY$ is the $n \times 1$ response vector. 

In this paper, we extend the family of GhostKnockoffs methods to incorporate feature importance statistics obtained from penalized regression. We first consider in Section \ref{section:partialghost} the situation in which the empirical covariance of the covariate-response pair $(X,Y)$ is available; with the above notation, this means that the summary statistics $\bX^\top \bX$, $\bX^\top \bY$, $\norm{\bY}_2^2$ are available along with the sample size $n$. Unsurprisingly, we observe substantial power improvement over the method of \cite{ghostknockoffs} because we can now employ far more effective test statistics. Next, in Section \ref{section:fullghost}, we consider the case where the empirical covariance $\bX^\top \bX$ of the features is not available. There, we propose new imputation methods that consistently outperform \cite{ghostknockoffs} in comprehensive synthetic and semi-synthetic simulations and rigorously control the FDR under suitable conditions. Finally, in Section \ref{section:gwas} we apply our methods to a meta-analysis of nine large-scale array-based genome-wide association and whole-exome/-genome sequencing studies of Alzheimer’s disease, in which our methods yield more discoveries than \cite{ghostknockoffs}. We note that existing work in the genetics literature has implemented variable selection methods based on penalized regression with summary statistics, e.g., \cite{lassosum} and \cite{susie_rss}. However, none of these provide any guarantee of FDR control. In fact, as we note in the main text, these methods can be leveraged in our approach to create knockoffs versions that do control the FDR. 

\subsection{Code availability and reproducibility}
The software and example code that reproduce the results presented in this paper can be found at \url{https://github.com/biona001/ghostknockoff-gwas-reproducibility/tree/main/chen_et_al}. Simulation results in Section \ref{simulation_partial}, Section \ref{subsubsec:full_ghost_indep} and Section \ref{subsubsec:full_ghost_auto} can be exactly reproduced. Due to data accessibility issue, we only provide code without real data for Section \ref{subsubsec:full_ghost_real} and Section \ref{section:gwas}.

\section{Model-X Knockoffs and GhostKnockoffs} \label{section:model-x}

To begin with, we define the controlled variable selection problem and give a brief review of model-X knockoffs and GhostKnockoffs. For a more detailed exposition, we refer readers to \cite{model-x}, \cite{fixed-x}, and \cite{ghostknockoffs}. In the following, we use boldface letters for vectors and matrices.\footnote{As an exception, we use $X$, $\widetilde{X}$, and $Y$ to represent generic covariates, their knockoffs, and the response.} We use $\bX_j\in \bbR^{n}$ and $\bx_i\in \bbR^{p}$ to respectively represent the $j$th column and $i$th row of the covariate matrix $\bX$. 

\subsection{Problem statement} \label{sub section:intro-ps}

Given covariates $X \in \bbR^{p}$ and a response $Y\in \bbR$, we are interested in understanding which variables influence $Y$. We formulate this selection problem as testing the \textit{conditional independence hypotheses} $\mathcal{H}_0^j:X_j\indep Y\mid X_{-j}$ for $1\le j \le p$, where $X_{-j}$ is a shorthand for all the variables except the $j$th; that is $X_{-j} =\{X_1,...,X_{j-1},X_{j+1},...,X_n\}$. In words, we should reject $\mathcal{H}_0^j$ if we believe that  $X_j$ can help better predict the outcome than if we only had available the values of all the other variables. Put differently, $X_j$ has information about $Y$ which cannot be subsumed by the information contained in all the other variables.  By conditioning on $X_{-j}$, these hypothesis tests aim to weed out variables whose relationship to $Y$ is driven by residual correlations with other covariates. 

Let $\mathcal{H}_0 \subset [p]$ be the set of indices for which the null conditional independence hypothesis $\mathcal{H}_0^j$ is true, and let ${\mathcal{S}} \subset [p]$ be the set of indices of the hypotheses rejected by a selection procedure. The false discovery rate (FDR) is the expected fraction of false positives among the selected, defined as 
$$\text{FDR}:=\mathbb{E}\Ls \frac{|{\mathcal{S}}\cap\mathcal{H}_0|}{|\hat{\mathcal{S}}|}\Rs$$
with the convention that $0/0 = 0$. 
Our goal is to make as many rejections as possible while controlling the FDR below a user-specified level $q$. 

In this paper, we consider the setting in which, instead of observing i.i.d.~samples from the distribution of ($X,Y$), we only have some summary statistics of the i.i.d.~samples. In particular, we will show how one can, quite remarkably, perform tests of conditional independence when we do not directly observe the i.i.d.~samples. Throughout this paper, we assume that $X\sim \mathcal{N}(\bzero,\bSigma)$ where $\bSigma$ is known (or, in practice, can be estimated).

\subsection{Model-X knockoffs} \label{sub section:intro-kf}
\subsubsection{The procedure}

Suppose we observe $n$ i.i.d.~samples $(X_i,Y_i)$, $1\le i\le n$, arranged in a data matrix $\bX \in \bbR^{n\times p}$ and response vector $\bY \in \bbR^{n}$. In the model-X knockoffs framework \cite{model-x}, we assume we know the distribution $P_X$ of the covariates $X$ while having no knowledge of the conditional distribution $Y\mid X$. The model-X approach is well-suited to genetic applications where reference panels may be available to estimate $P_X$ or where we have good models of linkage disequilibrium.

To implement model-X knockoffs, we first generate a matrix $\widetilde{\bX} 
\in \mathbb{R}^{n\times p}$ of knockoffs such that the following two conditions hold: 
\begin{align} \textbf{(Exchangeability):}& \; (\bX_j, \widetilde{\bX}_j, \bX_{-j}, \widetilde{\bX}_{-j})\stackrel{d}{=} (\widetilde{\bX}_j, \bX_j, \bX_{-j}, \widetilde{\bX}_{-j}),\; \forall \; 1 \le j \le p \label{exchangeability}\\ 
\textbf{(Conditional independence):}& \;\widetilde{\bX} \indep \bY\mid \bX. \label{Conditional independence}\end{align} 
Roughly, the first says that we cannot distinguish between $[\bX\;\widetilde{\bX}]$ and $[\bX\;\widetilde{\bX}]_{\text{swap}(j)}$, where $[\bX\;\widetilde{\bX}]_{\text{swap}(j)}$ is obtained from $[\bX\;\widetilde{\bX}]$ by swapping the $j$th and $(j+p)$th columns. The second condition implies that $\widetilde{\bX}$ does not provide any new information about $Y$ conditional on $X$ and is guaranteed if $\widetilde{\bX}$ is constructed without looking at $\bY$. If these properties hold, it can be shown that $\bX_j$ and $\widetilde{\bX}_j$ are indistinguishable conditional on $\bY$ for each $j \in \mathcal{H}_0$.

Next, we define feature importance statistics $\bW = w([\bX, \widetilde{\bX}], \bY) \in \mathbb{R}^p$ to be any function of $\bX$, $\widetilde{\bX}$ and $\bY$ such that a flip-sign property holds; namely, switching a column $\bX_j$ with its knockoff $\widetilde{\bX}_j$ flips the sign of  the $j$th component of the output; formally, $w_j([\bX, \widetilde{\bX}]_{\text{swap}(j)}, \bY) = -  w_j([\bX, \widetilde{\bX}], \bY)$. 
Common choices include $W_j = |\bX_j^\top \bY| - |\widetilde{\bX}_j^\top \bY|$ (marginal correlation difference statistic) and $W_j = |\hat{\beta}_j(\lambda_{\text{CV}})| - |\hat{\beta}_{j+p}(\lambda_{\text{CV}})|$ (Lasso coefficient difference statistic), where $\hat{\bbeta}(\lambda_{\text{CV}})$ is the solution to the Lasso problem 
$$\argmin_{\bbeta\in\mathbb{R}^{2p}} \frac{1}{2}||\bY-[\bX\;\widetilde{\bX}]\bbeta||_2^2 + \lambda_{\text{CV}}||\mathbf{\bbeta}||_1,$$ 
and $\lambda_{\text{CV}}$ is usually chosen by cross-validation.

Finally, the knockoff filter selects the variables $\mathcal{S}=\{j:W_j \ge T\}$, where 
\begin{equation} \label{eqn:knockoffs_threshold} T = \text{min}\left\{t \in \mathcal{W}: \frac{1 +\#\{j:\; W_j \le -t \}}{\#\{j:\; W_j \ge t\}\vee 1}\le q\right\}.\end{equation} 
Here, $\mathcal{W}=\{|W_j|:j=1,...,p\} \backslash \{0\}$, and $T=+\infty$ if $\mathcal{W}$ is empty. Intuitively, the threshold $T$ is chosen to be the most liberal one such that an estimate of FDP is bounded by $q$. \cite{model-x} showed that this procedure controls the FDR of the conditional testing problem at level $q$.

\subsubsection{Gaussian knockoff sampler} \label{subsub section:gaussian-kf}

Under the assumption that the rows of the data matrix $\bX$ are i.i.d.~from the Gaussian distribution $\mathcal{N}(\bzero,\bSigma)$, we can generate a knockoff vector $\widetilde{\bx}_i$ for each row $\bx_i$ of the data matrix $\bX$ by sampling $\widetilde{\bx}_i\sim{\mathcal{N}(\bP^\top \bx_i,\bV)}$ independently across rows, where $\bP = \bI - \bSigma^{-1}\bD$, $\bV = 2\,\bD - \bD\bSigma^{-1}\bD$, $\bD=\text{diag}\{\bs\}$, and $\bs\in \mathbb{R}^p$ is a vector of free parameters usually obtained by solving a convex optimization problem that depends on $\bSigma$ \citep{model-x}. See Appendix \ref{app:s} for details of computing $\bs$. Concatenating all the knockoff vectors then gives a valid matrix $\widetilde{\bX} \in \bbR^{n\times p}$ of knockoffs. In matrix form, the construction above is 
\begin{equation}\label{CRT_all}
    \widetilde{\bX}=\bX \bP+\bE\bV^{1/2},
\end{equation} where $\bE$ is an $n$ by $p$ matrix with i.i.d.~standard Gaussian entries, independent of $\bX$ and $\bY$. For later reference, we summarize the Gaussian knockoff sampler in Algorithm \ref{alg:gaussiankf} and denote it as $\cG$.

\begin{algorithm}[H]
\caption{Gaussian Knockoff Sampler $\cG$}\label{alg:gaussiankf}
\begin{algorithmic}[1]
\STATE \textbf{Input}: $\bX$ and $\bSigma$.
\STATE Compute $\bs$ by solving a convex optimization problem as defined in \eqref{eqn:sdp}.
\STATE Compute $\bD=\text{diag}\{\bs\}, \bP = \bI - \bSigma^{-1}\bD$, and $\bV = 2\,\bD - \bD\bSigma^{-1}\bD$.
\STATE Simulate $\bE \in \bbR^{n \times p}$ whose entries are i.i.d.~standard Gaussian variables.
\STATE \textbf{Output}: $\widetilde{\bX}=\bX \bP+\bE\bV^{1/2}.$
\end{algorithmic}
\end{algorithm}

\subsection{GhostKnockoffs with marginal correlation difference statistic} \label{sub section:intro-gk_marginal}

The original model-X knockoffs procedure relies on having access to the covariates and responses from all data points, i.e., the matrix of covariates $\bX$ and the response vector $\bY$. Henceforth, we call these \textit{individual-level data}. In many application scenarios, however, individual-level data are not available due to privacy concerns. Instead, we only have access to some summary statistics of $\bX$ and $\bY$, e.g., the empirical covariance matrix of the covariaties and the empirical covariance between each covariate and the response.

\cite{ghostknockoffs} proposed GhostKnockoffs, which implements the knockoffs procedure with marginal correlation difference statistic when only $\bX^\top \bY$ and $||\bY||_2^2$ are available. The key idea of \cite{ghostknockoffs} is to sample the knockoff $Z$-score $\widetilde{\bZ}_s$ from $\bX^\top \bY$ and $||\bY||_2^2$ directly, in a way such that \begin{equation} \label{eqn:Z_s_required} \widetilde{\bZ}_s\mid \bX, \bY \stackrel{d}{=}\widetilde{\bX}^\top \bY\mid \bX, \bY, \end{equation} where $\widetilde{\bX}=\cG(\bX,\bSigma)$ is the knockoff matrix generated by the Gaussian knockoff sampler (Algorithm \ref{alg:gaussiankf}). If we use $\bW=\bZ_s-\widetilde{\bZ}_s$ (where $\bZ_s=\bX^\top \bY$) as the feature importance statistic and run the knockoff filter, the resulting rejection set will have the same distribution as that of the knockoffs procedure with marginal correlation difference statistic. Therefore, the two procedures are statistically identical. In particular, they both control the FDR.

Specifically, \cite{ghostknockoffs} showed that for $\bP$ and $\bV$ computed in step 3 of Algorithm \ref{alg:gaussiankf},  \begin{equation} \label{eqn:Z_s} \widetilde{\bZ}_s = \bP^\top \bX^\top \bY + ||\bY||_2\bZ  \ \text{where} \ \bZ \sim \mathcal{N}(\bzero,\bV) \;\text{is independent of} \;\bX \;\text{and} \;\bY \end{equation} satisfies \eqref{eqn:Z_s_required} as detailed in Appendix \ref{ghostknockoffs_proof}. All this is summarized in Algorithm \ref{alg:ghostkf}. In the following sections, we refer to Algorithm \ref{alg:ghostkf} as GhostKnockoffs with marginal correlation difference statistic (\textit{GK-marginal}).\\

\begin{algorithm}[H]
\caption{GhostKnockoffs with Marginal Correlation Difference Statistic (\textit{GK-marginal})}\label{alg:ghostkf}
\begin{algorithmic}[1]
\STATE \textbf{Input}: $\bX^\top \bY$, $||\bY||_2^2$, and $\bSigma$.
\STATE Compute $\bs$, $\bP$, and $\bV$ as in Algorithm \ref{alg:gaussiankf}.
\STATE Compute the feature importance statistics $\bW = \abs{\bZ_s} - \abs{\widetilde{\bZ}_s}$, where $\widetilde{\bZ}_s$ is generated according to \eqref{eqn:Z_s}.
\STATE Input $\bW$ into the knockoffs selection procedure.
\STATE \textbf{Output}: Knockoffs selection set.
\end{algorithmic}
\end{algorithm}

\section{GhostKnockoffs with Penalized Regression: Known Empirical Covariance} \label{section:partialghost} 
\subsection{Setting} \label{full_ghost_setting}

As we have just seen, GhostKnockoffs-marginal gives a way to test conditional hypotheses while maintaining FDR control when only the summary statistics $\mathbf{\bX}^\top \bY$ and $\norm{\bY}_2^2$ are available to the analyst. Now, we consider the setting in which we have knowledge of the empirical covariance matrix $\bX^\top \bX$ and the sample size $n$, in addition to $\mathbf{\bX}^\top \bY$ and $\norm{\bY}_2^2$. These quantities only reveal sample averages of relevant quantities, as opposed to all the individual-level information.

In this section, we propose a variable selection method that utilizes only $\bX^\top \bX$, $\mathbf{\bX}^\top \bY$, $\norm{\bY}_2^2$, and $n$. Our method achieves FDR control and  power comparable to the knockoffs procedure with the cross-validated Lasso coefficient difference statistic defined in Section \ref{section:model-x}. This is interesting because the latter usually outperforms GhostKnockoffs with the marginal correlation difference statistic by a significant margin. Notably, for a fixed tuning parameter $\lambda$, we show that our procedure is equivalent to a corresponding knockoffs method using the Lasso coefficient difference statistic with the same penalty level $\lambda$.

\subsection{GhostKnockoffs with the Lasso}  \label{procedure2}
Recall that in the knockoffs procedure with the Lasso coefficient difference statistic, we solve the optimization problem  
\begin{equation} \label{Lasso}
\hat{\bm{\beta}}(\lambda) \in \argmin_{\bm{\beta}\in\mathbb{R}^{2p}} \frac{1}{2}||\bY-[\bX\;\widetilde{\bX}]\bm{\beta}||_2^2 + \lambda||\bm{\beta}||_1, \end{equation} where $\widetilde{\bX}=\cG(\bX,\bSigma).$
We then define the \textit{Lasso coefficient difference} feature importance statistics by $W_j=\abs{\hat{\beta}_j(\lambda)}-\abs{\hat{\beta}_{j+p}(\lambda)}$ for $1\le j\le p$. If we have access to individual-level data, $\lambda$ is usually chosen by cross-validation 
(\cite{model-x} and \cite{lcd_power}).\footnote{In the case that $Y$ is binary, one may think that utilizing (penalized) logistic regression would give much better power than Lasso. In Appendix \ref{app:binary response}, we show that this intuition may not be correct through simulations, even when $Y$ is generated according to a logistic regression model.}

As a first step, we would like to run a statistically equivalent procedure using $\bX^\top \bX, \bX^\top \bY$, $\norm{\bY}_2^2$, and $n$ for a \textit{fixed} $\lambda$. Note that, with $\lambda$ fixed, \eqref{Lasso} depends on the data only through $$\begin{bmatrix}
\bX^\top \bX & \bX^\top \widetilde{\bX} \\
\widetilde{\bX}^\top \bX & \widetilde{\bX}^\top \widetilde{\bX}
\end{bmatrix}$$ and $$\begin{bmatrix} \bX^\top \bY \\
\widetilde{\bX}^\top \bY
\end{bmatrix}.$$ 
Define the Gram matrix of $[\bX, \widetilde{\bX}, \bY]$ 
$$\mathcal{T}(\bX,\widetilde{\bX},\bY) =  [\bX, \widetilde{\bX}, \bY]^\top \, [\bX, \widetilde{\bX}, \bY]. 
$$ 
The Gram matrix can of course be equivalently reconstructed from $(\norm{\bY}_2^2,\bX^\top \bY,\widetilde{\bX}^\top \bY,\bX^\top \bX,\widetilde{\bX}^\top \bX,\widetilde{\bX}^\top \widetilde{\bX})$. The main idea is to sample from the joint distribution of $\mathcal{T}(\bX,\widetilde{\bX},\bY)$ using the Gram matrix of $[\bX, \bY]$ only. Based on this, we can then generate the solution to the Lasso problem \eqref{Lasso} (in distribution) for a fixed $\lambda$.\footnote{Careful readers may realize that the solution of the Lasso problem does not depend on $\norm{\bY}_2^2$. Here we include $\norm{\bY}_2^2$ as an input of to be able to make a more general statement later that goes beyond the Lasso.} This is achieved via the following Proposition \ref{prop:partial_ghost}, which says in words that if we generate `fake' data matrices $\widecheck{\bX}$ and $\widecheck{\bY}$ that lead to the same Gram matrix as that of $\bX$ and $\bY$, then the distribution of $\mathcal{T}$ remains unchanged if we replace the original data matrices by the fake data matrices.

\begin{proposition} \label{prop:partial_ghost}
Suppose $\widecheck{\bX} \in \mathbb{R}^{n \times p}$ and $\widecheck{\bY} \in \mathbb{R}^{n}$ are constructed such that $[\widecheck{\bX}\ \widecheck{\bY}]^\top [\widecheck{\bX}\ \widecheck{\bY}] = [\bX\ \bY]^\top [\bX\ \bY]$. Setting $\widetilde{\bX} = \cG({\bX},\bSigma)$ and $\widetilde{\widecheck{\bX}} = \cG(\widecheck{\bX},\bSigma)$ as the outputs of Algorithm \ref{alg:gaussiankf},\footnote{Note that  $\widecheck{\bX}$ may not be a data matrix with i.i.d.~rows and covariance matrix $\bSigma$ and we should call $\widetilde{\widecheck{\bX}}$ the pseudo-Gaussian knockoff data matrix.} we have $$\mathcal{T}(\bX,\widetilde{\bX},\bY)\mid \bX,\bY  \stackrel{d}{=} \mathcal{T}(\widecheck{\bX},\widetilde{\widecheck{\bX}},\widecheck{\bY})\mid \bX,\bY.$$
\end{proposition}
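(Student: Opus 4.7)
The plan is to isolate the stochastic content of both Gram matrices arising from the Gaussian knockoff sampler, and reduce the proposition to a distributional equality between two functionals of i.i.d.\ standard Gaussian noise matrices. Writing $\widetilde{\bX} = \bX\bP + \bE\bV^{1/2}$ and $\widetilde{\widecheck{\bX}} = \widecheck{\bX}\bP + \widecheck{\bE}\bV^{1/2}$ as in Algorithm \ref{alg:gaussiankf}, with $\bE,\widecheck{\bE}$ independent $n\times p$ standard Gaussian matrices (also independent of $\bX,\bY$) and $\bP,\bV$ depending only on $\bSigma$, a block-by-block expansion of $\mathcal{T}(\bX,\widetilde{\bX},\bY)$ shows that, conditional on $\bX,\bY$, this matrix is a fixed deterministic function of $[\bX,\bY]^\top[\bX,\bY]$ together with the noise-dependent pair $\big([\bX,\bY]^\top\bE,\,\bE^\top\bE\big)$. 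The same expansion applies to $\mathcal{T}(\widecheck{\bX},\widetilde{\widecheck{\bX}},\widecheck{\bY})$ with $[\widecheck{\bX},\widecheck{\bY}]$ and $\widecheck{\bE}$ in place of $[\bX,\bY]$ and $\bE$.

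Since the deterministic blocks agree by the Gram hypothesis, the proof reduces to showing
\[
\big([\bX,\bY]^\top\bE,\,\bE^\top\bE\big) \stackrel{d}{=} \big([\widecheck{\bX},\widecheck{\bY}]^\top\widecheck{\bE},\,\widecheck{\bE}^\top\widecheck{\bE}\big),
\]
with $\bX,\bY,\widecheck{\bX},\widecheck{\bY}$ treated as fixed, which is permissible because the noise matrices are independent of them. The key linear-algebra step is that any two $n \times (p+1)$ matrices sharing the same Gram matrix differ by left-multiplication by an orthogonal matrix: comparing thin SVDs of $[\bX,\bY]$ and $[\widecheck{\bX},\widecheck{\bY}]$ (which must share the same singular values and right singular vectors, since these are determined by the common Gram matrix) and extending the left singular factors to orthonormal bases of $\mathbb{R}^n$ yields an orthogonal $\bQ \in \mathbb{R}^{n\times n}$ with $[\widecheck{\bX},\widecheck{\bY}] = \bQ[\bX,\bY]$.

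With such $\bQ$ fixed, set $\bE' := \bQ^\top\widecheck{\bE}$. Rotational invariance of the i.i.d.\ standard Gaussian law gives $\bE' \stackrel{d}{=} \widecheck{\bE} \stackrel{d}{=} \bE$, and substituting $\widecheck{\bE} = \bQ\bE'$ yields
\[
[\widecheck{\bX},\widecheck{\bY}]^\top\widecheck{\bE} = [\bX,\bY]^\top\bQ^\top\bQ\bE' = [\bX,\bY]^\top\bE', \qquad \widecheck{\bE}^\top\widecheck{\bE} = (\bE')^\top\bE',
\]
which delivers the required distributional identity and completes the proof. I expect the main obstacle to be the linear-algebra construction of $\bQ$, in particular cleanly handling rank deficiency and the high-dimensional regime $n < p+1$, where the thin SVDs still share the same right singular factor but the left singular factors must be extended compatibly to orthonormal bases of all of $\mathbb{R}^n$; once this is in place, the Gaussian-noise manipulation in the final step is routine.
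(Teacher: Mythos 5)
Your proposal is correct and follows essentially the same route as the paper: expand the Gram matrix $\mathcal{T}$ block-by-block so that, given $\bX,\bY$, everything reduces to the joint law of $\bigl([\bX\ \bY]^\top\bE,\ \bE^\top\bE\bigr)$, align $[\widecheck{\bX}\ \widecheck{\bY}]$ with $[\bX\ \bY]$ by an $n\times n$ orthogonal matrix (the paper's Lemma \ref{lm:orthogonal}, proved via eigendecomposition of the common Gram matrix, which is the same content as your thin-SVD argument), and finish by rotational invariance of the Gaussian noise. Your handling of rank deficiency by extending the left singular factors matches what the paper's lemma does, so no gap remains.
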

Proof of Proposition \ref{prop:partial_ghost} is provided in Appendix \ref{proof:partial_ghost}. Specifically, Proposition \ref{prop:partial_ghost} suggests that summary statistics ($\bX^\top \bX, \bX^\top \bY, ||\bY||_2^2$, $\bSigma$) are sufficient for sampling the Gram matrix $\mathcal{T}(\bX,\widetilde{\bX},\bY)$.


\begin{algorithm}[H]
\caption{GhostKnockoffs with Penalized Regression: Known Empirical Covariance}\label{alg:partialghost}
\begin{algorithmic}[1]
\STATE \textbf{Input}: $\bX^\top \bX, \bX^\top \bY, ||\bY||_2^2$, $\bSigma$, and $n$. \vspace{1mm}
\STATE Find $\widecheck{\bX}$ and $\widecheck{\bY}$ such that $[\widecheck{\bX}\ \widecheck{\bY}]^\top [\widecheck{\bX}\ \widecheck{\bY}] = [\bX\ \bY]^\top [\bX\ \bY]$ by eigen-decomposition or Cholesky decomposition. 
\STATE Generate $\widetilde{\widecheck{\bX}} = \cG(\widecheck{\bX},\bSigma)$ via Algorithm \ref{alg:gaussiankf}.
\STATE Run the standard knockoffs procedure (at level $q$) with the Lasso coefficient difference statistic on $\widecheck{\bX}$ and $\widetilde{\widecheck{\bX}}$ for a fixed penalty level $\lambda$ or use the methods from Sections \ref{GK-sqrtlasso} and \ref{GK-lassomax}. 
\STATE \textbf{Output}: Knockoffs selection set.
\end{algorithmic}
\end{algorithm}

We are now able to write down a procedure, namely, Algorithm \ref{alg:partialghost}, which is statistically equivalent to the corresponding individual-level knockoffs procedure using the Lasso coefficient difference statistic (or any statistic defined in Sections \ref{GK-sqrtlasso} and \ref{GK-lassomax}). In step 2, $\widecheck{\bX}$ and $\widecheck{\bY}$ can be obtained by performing the eigen-decomposition or Cholesky decomposition of $[\bX\ \bY]^\top [\bX\ \bY]$. Brief procedures to construct $\widecheck{\bX}$ and $\widecheck{\bY}$ via eigen-decomposition are provided in Appendix \ref{construction}. 
All we need to do is to run the knockoffs procedure with $\widecheck{\bX}$ and $\widetilde{\widecheck{\bX}}$ in lieu of ${\bX}$ and $\widetilde{{\bX}}$. We say that the procedure is equivalent since the rejection sets have the same distribution. In particular, this proves that Algorithm \ref{alg:partialghost} controls the FDR.

\begin{corollary} \label{corr:partial_ghost}
Consider a knockoffs feature importance statistic $\bW=\mathbf{f}(\mathcal{T}(\bX,\widetilde{\bX},\bY),\bU) \in \mathbb{R}^p$, which is a deterministic function of $\mathcal{T}(\bX,\widetilde{\bX},\bY)$ and an independent random variable $\bU$. Define $\widehat{\bW}=\mathbf{f}(\mathcal{T}(\widecheck{\bX},\widetilde{\widecheck{\bX}},\bY),\bU)$. Let $\mathcal{S}_1$ (resp.~$\mathcal{S}_2$) be the rejection set obtained from applying the knockoffs filter on $\bW$ (resp.~$\widehat{\bW}$). 
Then $\mathcal{S}_1 \mid \bX, \bY \stackrel{d}{=} \mathcal{S}_2 \mid \bX, \bY$. Thus, if $\bW$ obeys the flip-sign property, both procedures have equal FDR at most equal to $q$. 
\end{corollary}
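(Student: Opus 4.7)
The plan is to chain three equalities in distribution: from Proposition~\ref{prop:partial_ghost} at the level of the Gram matrix $\mathcal{T}$, then through the deterministic map $\mathbf{f}(\cdot,\bU)$ to obtain the corresponding equality for $\bW$ and $\widehat{\bW}$, and finally through the deterministic knockoff filter to obtain the rejection-set equality. The FDR bound for $\mathcal{S}_2$ then follows from the FDR bound for $\mathcal{S}_1$, which is the standard model-X knockoffs guarantee.

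First, I would invoke Proposition~\ref{prop:partial_ghost} to write
$\mathcal{T}(\bX,\widetilde{\bX},\bY)\mid \bX,\bY \stackrel{d}{=} \mathcal{T}(\widecheck{\bX},\widetilde{\widecheck{\bX}},\widecheck{\bY})\mid \bX,\bY$ (reading the $\bY$ on the right-hand side of the corollary statement as a typo for $\widecheck{\bY}$, consistent with the proposition). Since $\widecheck{\bX}$ and $\widecheck{\bY}$ are constructed as deterministic functions of $[\bX\ \bY]^\top [\bX\ \bY]$, and hence of $(\bX,\bY)$, the only randomness once we condition on $(\bX,\bY)$ is carried by the two independent Gaussian noise matrices drawn in the two calls to $\cG$ (one producing $\widetilde{\bX}$, the other $\widetilde{\widecheck{\bX}}$).

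Next, since $\bU$ is independent of $(\bX,\bY)$ and of both Gaussian noise matrices, applying the deterministic map $\mathbf{f}(\cdot,\bU)$ to each side preserves the conditional equality in distribution, giving $\bW\mid\bX,\bY \stackrel{d}{=} \widehat{\bW}\mid\bX,\bY$. The knockoff filter~\eqref{eqn:knockoffs_threshold} is a deterministic function of its input statistic alone, so this in turn yields $\mathcal{S}_1\mid\bX,\bY \stackrel{d}{=} \mathcal{S}_2\mid\bX,\bY$.

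For the FDR claim, I would observe that $\mathcal{S}_1$ is exactly the output of the standard model-X knockoffs procedure applied with a valid Gaussian knockoff $\widetilde{\bX}=\cG(\bX,\bSigma)$ and a flip-sign statistic $\bW$, so the main theorem of \cite{model-x} gives $\mathrm{FDR}(\mathcal{S}_1)\le q$. The null set $\mathcal{H}_0$ depends only on the law $P_{X,Y}$ and not on the realization, so the false discovery proportion is a deterministic function of the rejection set; the conditional equality of $\mathcal{S}_1$ and $\mathcal{S}_2$ therefore transfers to equal conditional FDPs and, upon taking expectations, to $\mathrm{FDR}(\mathcal{S}_2)=\mathrm{FDR}(\mathcal{S}_1)\le q$. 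There is no really hard step here: the main care required is bookkeeping, namely verifying the independence structure of $\bU$ and of the two Gaussian noise matrices so that Proposition~\ref{prop:partial_ghost} may be pushed forward through the auxiliary randomness without introducing unwanted cross-dependence.
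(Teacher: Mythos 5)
Your proposal is correct and follows essentially the same route as the paper: Proposition~\ref{prop:partial_ghost} pushed through the deterministic map $\mathbf{f}(\cdot,\bU)$ (using independence of $\bU$) gives $\bW\mid\bX,\bY \stackrel{d}{=} \widehat{\bW}\mid\bX,\bY$, the deterministic knockoff filter transfers this to the rejection sets, and equality of conditional FDPs plus the standard model-X guarantee for $\mathcal{S}_1$ yields the bound for $\mathcal{S}_2$. Your extra bookkeeping (the $\widecheck{\bY}$ typo, the independence structure of $\bU$, and the explicit transfer of the FDR bound) only makes explicit what the paper's terse proof leaves implicit.
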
 

\begin{proof}
Proposition \ref{prop:partial_ghost} gives $\bW \mid \bX, \bY \stackrel{d}{=} \widehat{\bW} \mid \bX, \bY$. Since the selection set is uniquely determined by the values of $\bW$ (or $\widehat{\bW}$), it follows that $\mathcal{S}_1 \mid \bX, \bY \stackrel{d}{=} \mathcal{S}_2 \mid \bX, \bY$. Therefore, the procedures have the same FDR.
\end{proof}

We can easily adapt the method above to accommodate other types of regularization, such as Ridge regression and Elastic Net.

\subsection{GhostKnockoffs with the square-root Lasso} \label{GK-sqrtlasso}
In Section \ref{procedure2}, we assumed that the tuning parameter $\lambda$ in \eqref{Lasso} is fixed. In practice, one may choose the penalty level using information from the Gram matrix of $[\bX, \bY]$, and the sample size $n$. Since individual-level data is not available, we are unable to use data-splitting approaches such as cross-validation. 

An alternative way to define feature importance is to use the square-root Lasso \citep{sqrtlasso}, for which the choice of a reasonable tuning parameter is convenient. The square-root Lasso applied to the knockoffs setting solves \begin{equation} \label{sqrtlasso}
\hat{\bbeta}(\lambda) \in \argmin_{\bbeta\in\mathbb{R}^{2p}} ||\bY-[\bX\;\widetilde{\bX}]\bbeta||_2 + \lambda||\bbeta||_1,
\end{equation}
and a good choice of $\lambda$ is given by \begin{equation} \label{lambda}
\lambda=\kappa \cdot \mathbb{E} 
\Ls\frac{\norm{[\bX\;\widetilde{\bX}]^\top \bepsilon}_\infty}{\norm{\bepsilon}_2}|\bX,\widetilde{\bX}\Rs,
\end{equation} where $\bepsilon \sim \mathcal{N}(\bzero,\bI_n)$ and $\kappa$ is a unitless hyperparameter \citep{sqrtlasso2}. This value is a scalar multiple of the expected value of the minimal penalty level required such that all the coefficients are shrunk to zero under the global null model. The square-root Lasso has the benefit that the value of the hyperparameter does not depend on the details of the distribution of $Y$ conditional on $X$. We also found that the performance of our procedure does not depend very sensitively on the choice of $\kappa$. In our data examples, we take $\kappa = 0.3$.

In the setting where we only know about values of the summary statistics, we simply replace ($\bX,\widetilde{\bX},\bY$) by ($\widecheck{\bX}, \widetilde{\widecheck{\bX}}, \widecheck{\bY})$ in \eqref{sqrtlasso}. Further, we note that for any orthogonal matrix $\bQ$,  \begin{align*}
([\bX\;\widetilde{\bX}]^\top \bQ^\top \bepsilon,\bepsilon^\top \bepsilon)\mid \bX, \widetilde{\bX}  &\stackrel{d}{=} ([\bX\;\widetilde{\bX}]^\top \bQ^\top \bepsilon,\bepsilon^\top \bQ\bQ^\top \bepsilon)\mid \bX, \widetilde{\bX} 
 \\ &\stackrel{d}{=} ([\bX\;\widetilde{\bX}]^\top \bepsilon,\bepsilon^\top \bepsilon)\mid \bX, \widetilde{\bX}, 
\end{align*} where the second equality follows from $\bQ^\top \bepsilon \stackrel{d}{=} \bepsilon$. Therefore, the value of the hyperparameter in \eqref{lambda} remains unchanged if we multiply $[
\bX \; \widetilde{\bX}]$ by $\bQ$ on the left. This implies that \eqref{lambda} is a deterministic function of $[\bX\;\widetilde{\bX}]^\top [\bX\;\widetilde{\bX}]$. Hence, the feature importance statistic is a function of $\mathcal{T}(\bX,\widetilde{\bX},\bY)$. Following Corollary \ref{corr:partial_ghost}, we can apply the knockoffs procedure with the square-root Lasso and matrices 
$(\widecheck{\bX}, \widetilde{\widecheck{\bX}})$ in lieu of $({\bX}, \widetilde{{\bX}})$. Upon choosing 
\begin{equation} \label{lambda_used}
\lambda=\kappa \; \mathbb{E} 
\Ls\frac{\norm{[
\widecheck{\bX}\;\widetilde{\widecheck{\bX}}]^\top \bepsilon}_\infty}{\norm{\bepsilon}_2}\mid \widecheck{\bX},\widetilde{\widecheck{\bX}}\Rs,
\end{equation} 
we get a procedure, which is statistically indistinguishable from that we would get if we were performing all the same steps with $\bX$ and $\widetilde{\bX}$. (In practice, we compute the value in \eqref{lambda_used} via Monte Carlo simulation.)  In the sequel,  we call the resulting procedure summary statistics GhostKnockoffs with square-root Lasso importance statistic (\textit{GK-sqrtlasso}). Note that GK-sqrtlasso controls the FDR as the flip-sign property of the feature importance statistic holds. This is because swapping a variable with its knockoff does not change the value of the hyperparameter. Therefore, by Corollary \ref{corr:partial_ghost}, applying the knockoff filter to the square-root Lasso feature importance statistics yields FDR control. 

\subsection{GhostKnockoffs with the Lasso-max} \label{GK-lassomax}
In the standard fixed-X knockoffs setting, cross-validation is also not feasible, since doing so would violate the sufficiency condition required for the feature importance statistics. As one possible alternative, \cite{fixed-x} considered using as the feature importance statistic the value of $\lambda$ on the Lasso path at which feature $X_j$ first enters the model. Formally, they define the feature importance statistic 
$$W_j = \text{sup}\{\lambda:\hat{\beta}_j(\lambda) \neq 0\}-\text{sup}\{\lambda:\hat{\beta}_{j+p}(\lambda) \neq 0\},$$
where $\hat{\bbeta}(\lambda)$ is as in \eqref{Lasso}. We call this statistic the Lasso-max statistic. Intuitively, a larger penalty level is required to shrink an important feature to zero, so we should expect $W_j$ to be large and positive for non-nulls.

By Corollary \ref{corr:partial_ghost}, with the Lasso-max statistic Algorithm \ref{alg:partialghost} produces a rejection set that has the same distribution as the rejection set obtained from the corresponding individual-data-based knockoffs procedure. We call this summary-statistic-based procedure GhostKnockoffs with Lasso-max statistic (\textit{GK-lassomax}).

We remark that choices of other tuning parameters and feature importance statistics are also possible. For instance, we may choose $\lambda$ to minimize the Stein's unbiased risk estimate (SURE) associated with \eqref{Lasso}. We shall however focus on the two approaches we have described.

\subsection{Numerical simulations} \label{simulation_partial}

We consider a variety of simulation settings in which we compare the performance of the proposed GhostKnockoffs with square-root Lasso and Lasso-max statistics (GK-sqrtlasso and GK-lassomax, defined in Sections \ref{GK-sqrtlasso} and \ref{GK-lassomax}), GhostKnockoffs with marginal correlation difference statistic (GK-marginal, defined in Section \ref{section:model-x}), and the knockoffs procedure with (cross-validated) Lasso coefficient difference statistic with individual-level data (KF-lassocv). Note that the first three are statistically equivalent to the corresponding knockoffs procedures with individual-level data.

\subsubsection{Independent features} \label{simulation_partial_indep}

In the first set of simulations (Figure \ref{fig:partial_ghost_indep}), we generate random samples $\bx_i \stackrel{iid} {\sim} \mathcal{N}(\bzero,\bI_p)$ and $Y_i = \bbeta^\top \bx_i + \sqrt{n}\epsilon_i$, where $\epsilon_i \stackrel{iid}{\sim} \mathcal{N}(0,1)$ for $i 
\in \{1,2,...,n\}$.\footnote{The simulation setting is designed in a way that the signal-to-noise ratio has the same scale as $n$ varies.} We consider three settings of varying dimensionality measured by the ratio $p/n$: $(n,p) \in \{(600,200), (400,400), (200, 600)\}$. In each of the three settings, we create a sparse vector $\bbeta$ by selecting 30 coordinates to be non-zero uniformly at random. The signs of these non-zero coordinates are assigned to be either positive or negative with equal probability. We vary the signal amplitudes such that we explore a wide power range below. For the square-root Lasso, we average over 200 Monte Carlo samples to calculate $$\lambda=\kappa \cdot \mathbb{E} 
\big[\frac{\norm{[\bX\;\widetilde{\bX}]^\top \bepsilon}_\infty}{\norm{\bepsilon}_2}\mid \bX,\widetilde{\bX}\big].$$ The target FDR is 20\%. Each point on the curves represents the average of the results from 200 replications.

\begin{figure}[htbp] 
  \centering
  \includegraphics[width=0.8\textwidth]{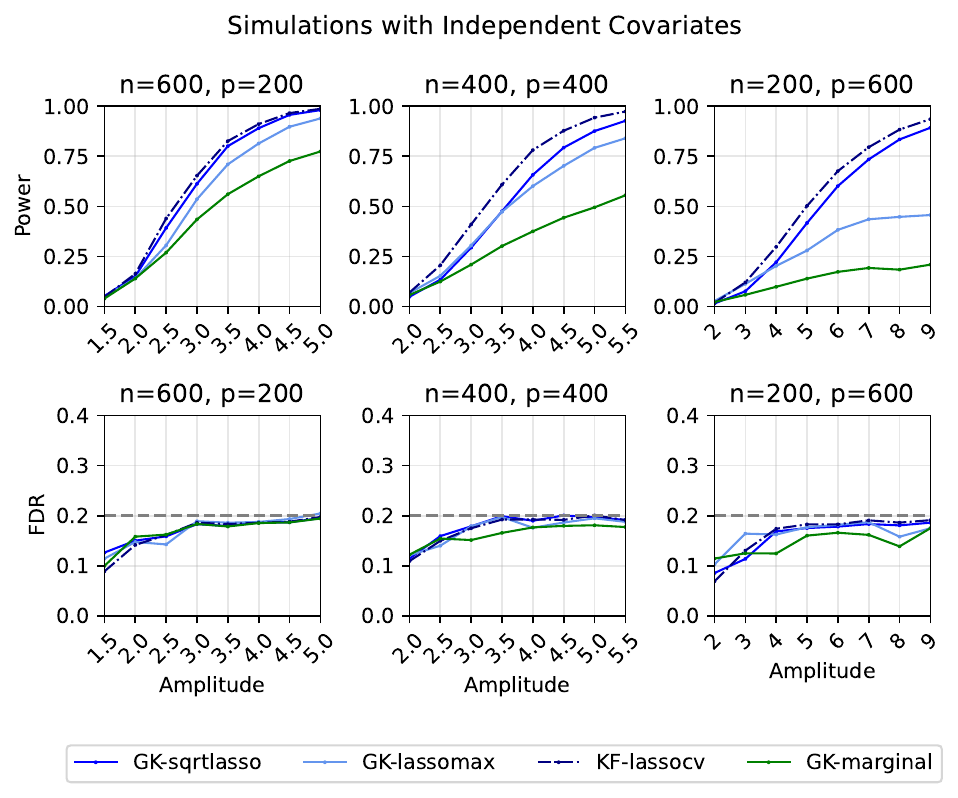}
  \caption{Power and FDR plots for independent features and a Gaussian linear model with varying dimensions. Each point is an average over 200 replications.}
  \label{fig:partial_ghost_indep}
\end{figure}

We observe that GK-sqrtlasso and GK-lassomax generally demonstrate greater power than GK-marginal. This enhanced performance is not surprising, as GK-sqrtlasso and GK-lassomax (1) have access to additional information via $\bX^\top \bX$, and (2) employing a joint modeling algorithm such as Lasso generally provides a better assessment of variable importance for understanding conditional (in)dependence since such a model explicitly adjusts for the effects from all the other variables. We also note the presence of power gaps between GK-lassocv and GK-sqrtlasso/GK-lassomax, likely due to the fact that we are unable to perform cross-validation without individual-level data. All methods control the FDR at the desired level.

\subsubsection{AR(1) features} \label{simulation_partial_auto}
In the second set of simulations (Figures \ref{fig:partial_ghost_auto}), we generate $\bx_i \stackrel{iid} {\sim} \mathcal{N}(\bzero,\bSigma_{\rho})$ for $i \in \{1,2,...,n\}$, where $\Ls\bSigma_{\rho}\Rs_{s,t} = \rho^{|s-t|}$ for $1 \le s,t \le p$. As before, we generate $Y_i = \bbeta^\top \bx_i + \sqrt{n}\epsilon_i$, where $\epsilon_i \stackrel{iid}{\sim} N(0,1)$ for $i 
\in \{1,2,...,n\}$. We consider the same three $(n,p)$ combinations. In each of the three cases, we create a sparse vector $\bbeta$ exactly as before, except that we fix the signal amplitudes to 4, 4, and 7 respectively to explore a wide power range. We vary $\rho$ in $\{0,0.1,0.2,...,0.8\}$ The target FDR is set to be 20\%. Each point represents the average of the results from 200 replications.

\begin{figure}[htbp]
  \centering
  \includegraphics[width=0.8\textwidth]{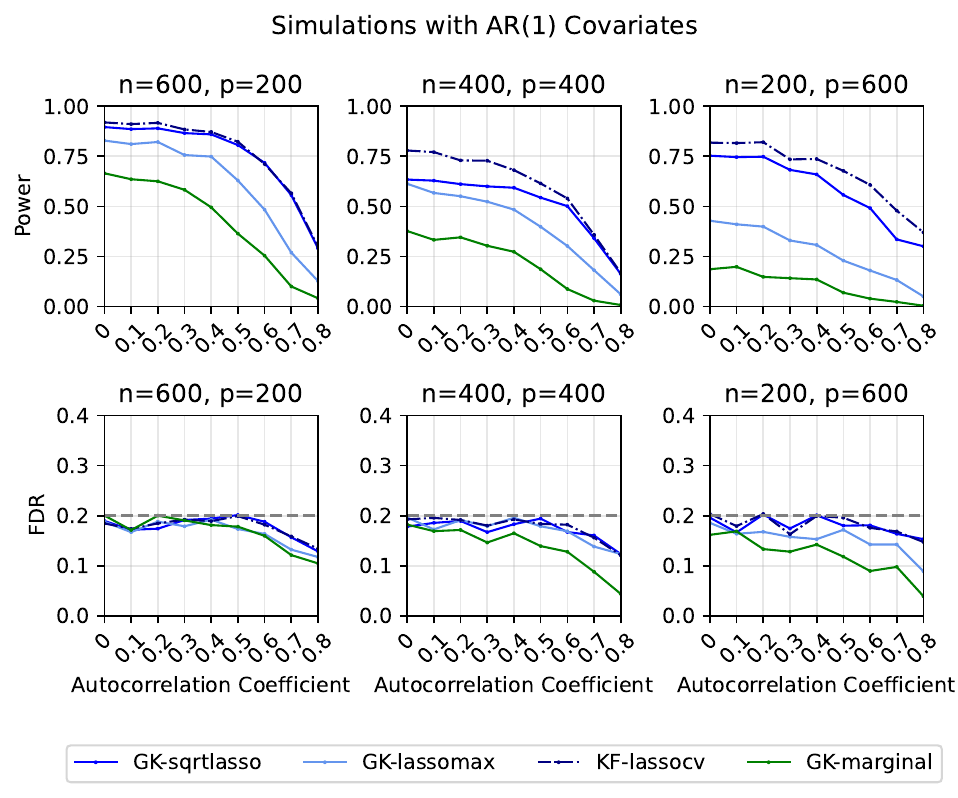}
   \caption{Power and FDR plots for AR(1) features and a Gaussian linear model with varying dimensions. Each point is an average over 200 replications.}
   \label{fig:partial_ghost_auto}
\end{figure}

Again, we observe that GK-sqrtlasso and GK-lassomax generally have greater power than GK-marginal. All methods have (almost) decreasing power as the autocorrelation coefficient increases, since it becomes harder to separate true signals from null variables that are correlated with them. All methods control the FDR at the desired level.

\section{GhostKnockoffs with Penalized Regression: Missing Empirical Covariance} \label{section:fullghost}

\subsection{Setting} \label{sec:pseudo-lasso-setting}

Thus far, we have discussed how incorporating the additional information from $\bX^\top \bX$ and $n$ could enhance our ability to detect significant features. However, in applications such as genetics, $\bX^\top \bX$ may not be  available. In this section, we propose alternative procedures when the scientist only knows about $\bX^\top \bY$,  $\norm{\bY}^2$ and the sample size $n$. As before, we assume that $X\sim \mathcal{N}(\bzero,\bSigma)$, where the covariance matrix $\bSigma$ is known (or can be estimated from other data sources).

\subsection{GhostKnockoffs with pseudo-lasso} \label{sec:pseudo-lasso}


The idea of our method is to modify the Lasso objective function so that it can be constructed from the available summary statistics. It turns out that the solution of our modified objective function is proportional to that of the scout procedure (with known precision matrix) proposed by \cite{scout}. We will see through simulation studies that our procedure improves the power of the original GhostKnockoffs method of \citep{ghostknockoffs} while maintaining FDR control.

\subsubsection{The procedure}
Recall that in the knockoffs procedure with the Lasso statistic, we solve the following optimization problem: $$\hat{\bbeta}(\lambda) = \argmin_{\bbeta\in\mathbb{R}^{2p}} \frac{1}{2n} \bbeta^\top  \begin{bmatrix} \bX^\top \bX & \bX^\top \widetilde{\bX} \\
 \widetilde{\bX}^\top \bX & \widetilde{\bX}^\top \widetilde{\bX}
\end{bmatrix}\bbeta - \frac{1}{n} \bbeta^\top \begin{bmatrix}
\bX^\top \bY\\
\widetilde{\bX}^\top \bY
\end{bmatrix} + \lambda||\bbeta||_1.$$ 

To mimic the form of the loss function when we do not observe the empirical covariance of the features, we may want to substitute them with their population version: i.e.~we swap $\bX^\top \bX/n$ and $\widetilde{\bX}^\top \widetilde{\bX}/n$ with $\bSigma$ and $\bX^\top \widetilde{\bX}/n$ with $\bSigma - \bD$. As usual, $\bD=\text{diag}\{\bs\}$ is obtained by solving the convex optimization problem \eqref{eqn:sdp}. In the language of fixed-X knockoffs \citep{fixed-x}, this is equivalent to regarding $\widetilde{\bX}$ as a fixed-X knockoff of $\bX$ and replacing $\bX^\top \bX/n$ by $\bSigma$.\footnote{We remark that similar objective functions have been used in, for example, \cite{lassosum} and \cite{susie_rss}.}  This yields Algorithm \ref{alg:fullghost}.

\begin{algorithm}[H] 
\caption{GhostKnockoffs with Penalized Regression: Missing Empirical Covariance}\label{alg:fullghost}
\begin{algorithmic}[1]
\STATE \textbf{Input}: $\bX^\top \bY, ||\bY||_2^2, \bSigma$ and $n$. \vspace{1mm}
\STATE Simulate $\bZ\sim \mathcal{N}(\bzero, \bV)$, where $\bV$ is defined as in Algorithm \ref{alg:ghostkf}.
\STATE Solve $\hat{\bbeta}(\lambda) = \argmin_{\bbeta\in\mathbb{R}^{2p}} \frac{1}{2} \bbeta^\top  \begin{bmatrix}
\bSigma & \bSigma-\bD \\
\bSigma-\bD & \bSigma
\end{bmatrix}\bbeta - \frac{1}{n}\bbeta^\top \begin{bmatrix}
\bX^\top \bY \vspace{1mm} \\ 
\bP^\top \bX^\top \bY+\norm{\bY}_2\bZ
\end{bmatrix} + \lambda||\bbeta||_1,$ where $\bD$ and $\bP$ are defined as in Section \ref{subsub section:gaussian-kf} and $\lambda$ is fixed or as chosen in Section \ref{section:full_tuning} \vspace{1mm}
\STATE Run the standard knockoffs procedure (at level $q$) with importance statistic $W_j=\abs{\hat{\beta}_j(\lambda)}-\abs{\hat{\beta}_{j+p}(\lambda)}.$
\STATE \textbf{Output}: Knockoffs selection set.
\end{algorithmic}
\end{algorithm}

We call this procedure GhostKnockoffs with pseudo-lasso statistic (\textit{GK-pseudolasso}). We show below that Algorithm \ref{alg:fullghost} controls the FDR of selections at level $q$. Before doing so, we first state a general proposition that includes GK-marginal as a special case.

\begin{proposition} \label{proposition:full_ghost}
Suppose $\bV$ and $\bP$ are defined as in Algorithm \ref{alg:ghostkf}, $\bZ\sim \mathcal{N}(\bzero, \bV)$ is independent of $\bX$ and $\bY$, and $\widetilde{\bX}=\cG(\bX,\bSigma)$. Consider a knockoffs feature importance statistic $\bW=\mathbf{g}(\norm{\bY}_2^2,\bX^\top \bY,\widetilde{\bX}^\top \bY,\bU) \in \mathbb{R}^p$, which is a deterministic function of $\norm{\bY}_2^2,\bX^\top \bY, \widetilde{\bX}^\top \bY$ and an independent random variable $\bU$. Define $\widehat{\bW}=\mathbf{g}(\norm{\bY}_2^2,\bX^\top \bY,\bP^\top \bX^\top \bY+\norm{\bY}_2\bZ,\bU)$.  Let $\mathcal{S}_1$ (resp. $\mathcal{S}_2$) be the rejection set obtained from applying the knockoffs filter on $\bW$ (resp. $\widehat{\bW}$). Then $\mathcal{S}_1 \mid \bX, \bY \stackrel{d}{=} \mathcal{S}_2 \mid \bX, \bY$. Thus, if $\bW$ obeys the flip-sign property,  both procedures have equal FDR at most equal to $q$.
\end{proposition}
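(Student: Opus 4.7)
The plan is to reduce this proposition almost entirely to the key distributional identity already established in \cite{ghostknockoffs} and re-derived in Appendix \ref{ghostknockoffs_proof}. That identity, stated as \eqref{eqn:Z_s_required} with the explicit form \eqref{eqn:Z_s}, says precisely that
\[
\bP^\top \bX^\top \bY + \norm{\bY}_2 \bZ \;\big|\; \bX, \bY \;\stackrel{d}{=}\; \widetilde{\bX}^\top \bY \;\big|\; \bX, \bY,
\]
where $\bZ \sim \mathcal{N}(\bzero,\bV)$ is independent of $(\bX,\bY)$ and $\widetilde{\bX}=\cG(\bX,\bSigma)$. I would simply invoke this fact rather than re-deriving it, since the paper cleanly isolates it.

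Given this identity, the first step of the argument is to enrich both sides with the remaining arguments fed into $\mathbf{g}$. Conditional on $\bX$ and $\bY$, the quantities $\norm{\bY}_2^2$ and $\bX^\top \bY$ are deterministic and therefore appear as identical fixed arguments on both sides. The auxiliary randomness $\bU$ is independent of $(\bX,\bY,\widetilde{\bX},\bZ)$ by assumption and hence, conditionally on $(\bX,\bY)$, it is independent of the third slot on each side and has the same (marginal) distribution on both sides. Concatenating these independent pieces preserves equality in distribution, giving
\[
(\norm{\bY}_2^2,\bX^\top \bY,\widetilde{\bX}^\top \bY,\bU)\mid \bX,\bY \;\stackrel{d}{=}\; (\norm{\bY}_2^2,\bX^\top \bY,\bP^\top \bX^\top \bY+\norm{\bY}_2\bZ,\bU)\mid \bX,\bY.
\]
Applying the deterministic map $\mathbf{g}$ to both sides then yields $\bW\mid \bX,\bY \stackrel{d}{=} \widehat{\bW}\mid \bX,\bY$.

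The second step is to pass from feature importance statistics to rejection sets. Since the knockoff filter threshold in \eqref{eqn:knockoffs_threshold} and the selection set $\{j : W_j \ge T\}$ are deterministic functions of $\bW$, one more application of the ``deterministic function preserves equality in distribution'' principle gives $\mathcal{S}_1 \mid \bX,\bY \stackrel{d}{=} \mathcal{S}_2 \mid \bX,\bY$, and marginalizing recovers the unconditional statement. When $\bW$ obeys the flip-sign property, the standard model-X knockoffs theorem of \cite{model-x} ensures that the procedure based on $\bW$ controls FDR at level $q$; the distributional equality transfers the same bound to the procedure based on $\widehat{\bW}$.

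The main (and really only) technical content lies inside the cited lemma \eqref{eqn:Z_s_required}--\eqref{eqn:Z_s}, which requires checking that, under the Gaussian knockoff sampler, $\widetilde{\bX}^\top \bY$ conditional on $(\bX,\bY)$ is Gaussian with mean $\bP^\top \bX^\top \bY$ and covariance $\norm{\bY}_2^2 \bV$. Since that calculation is handled in Appendix \ref{ghostknockoffs_proof}, the rest of the argument here is purely a bookkeeping exercise: a lift of a one-dimensional distributional equality to a joint one via the independent randomness $\bU$, followed by an appeal to the measurability of the knockoff filter.
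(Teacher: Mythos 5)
Your proposal is correct and follows essentially the same route as the paper's proof: both reduce the claim to the distributional identity $\widetilde{\bX}^\top \bY\mid\bX,\bY \stackrel{d}{=} \bP^\top \bX^\top \bY + \norm{\bY}_2\bZ\mid \bX,\bY$ established in Appendix \ref{ghostknockoffs_proof}, deduce $\bW\mid\bX,\bY \stackrel{d}{=}\widehat{\bW}\mid\bX,\bY$, and then transfer the rejection set distribution and FDR bound through the deterministic knockoff filter. Your version merely spells out the bookkeeping (conditional determinism of $\norm{\bY}_2^2$ and $\bX^\top\bY$, independence of $\bU$) that the paper compresses into ``As a result.''
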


\begin{proof}
In Appendix \ref{ghostknockoffs_proof}, we prove that $$\widetilde{\bX}^\top \bY\mid\bX,\bY \stackrel{d}{=} \bP^\top \bX^\top \bY + ||\bY||_2\bZ\mid \bX,\bY.$$ As a result, $\bW \mid \bX, \bY \stackrel{d}{=} \widehat{\bW} \mid \bX, \bY$. Since the selection set is uniquely determined by the values of $\bW$ (or $\widehat{\bW}$), it follows that $\mathcal{S}_1 \mid \bX, \bY \stackrel{d}{=} \mathcal{S}_2 \mid \bX, \bY$. Therefore, the procedures have the same FDR.
\end{proof}

Set $\lambda$ to be a fixed numerical constant. Consider the feature importance statistics $\bW$ defined by $W_j=\abs{\hat{\beta}_j(\lambda)}-\abs{\hat{\beta}_{j+p}(\lambda)},$ where $\hat{\bbeta}(\lambda)$ is the solution to \begin{equation}
\label{eqn:fullghost}
\argmin_{\bbeta\in\mathbb{R}^{2p}} \frac{1}{2} \bbeta^\top  \begin{bmatrix}
\bSigma & \bSigma-\bD \\
\bSigma-\bD & \bSigma
\end{bmatrix}\bbeta - \frac{1}{n} \bbeta^\top \begin{bmatrix}
\bX^\top \bY \vspace{1mm} \\ 
\widetilde{\bX}^\top \bY
\end{bmatrix} + \lambda||\bbeta||_1, 
\end{equation} and $\widetilde{\bX}=\cG(\bX,\bSigma)$ is the Gaussian knockoff data matrix. The feature importance statistic in Algorithm \ref{alg:fullghost} is thus obtained by replacing $\widetilde{\bX}^\top \bY$ by $\bP^\top \bX^\top \bY+\norm{\bY}_2\bZ$ in \eqref{eqn:fullghost}. Since $\bW$ is determined  by $\norm{\bY}_2^2, \bX^\top \bY$ and $\widetilde{\bX}^\top \bY$, it follows from  Proposition \ref{proposition:full_ghost} that the rejection set of Algorithm \ref{alg:fullghost} has the same distribution as that obtained from running the knockoff filter on $\bW$.

Thus to prove that Algorithm \ref{alg:fullghost} controls the FDR of rejections at level $q$, it suffices to verify the flip-sign property of the feature importance statistic for $\bW$ (see Section \ref{section:model-x}). This is a consequence of the following lemma:

\begin{lemma} \label{lemma:Lasso_lemma}
Consider the problem 
\begin{equation} \label{Lasso_lemma}
\argmin_{\bbeta\in\mathbb{R}^{2p}} \frac{1}{2}\bbeta^\top \bC\bbeta - \bd^\top \bbeta + \lambda||\bbeta||_1 + \gamma\norm{\bbeta}^2_2.
\end{equation}
Let $\bPi_S$ be any permutation matrix which swaps the jth and (j+p)th entries of a 2p-dimensional vector for each $j\in S  \subset \{1,...,p\}$. Assume that $\bC$ is $S$-swap invariant in the sense that $\bPi_S^\top \bC\bPi_S = \bC$. Then $\hat{\bbeta}$ is a solution to \eqref{Lasso_lemma} if and only if $\Pi_S\hat{\bbeta}$ is a solution to the same problem with $\bd$ and $\bPi_S \bd$ swapped. In other words, swapping the entries of $\bd$ has the effect of swapping the corresponding entries of the solution.
\end{lemma}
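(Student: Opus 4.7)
The argument is a standard symmetry computation on the convex objective. Let $F(\bbeta;\bd) := \frac{1}{2}\bbeta^\top \bC\bbeta - \bd^\top\bbeta + \lambda\|\bbeta\|_1 + \gamma\|\bbeta\|_2^2$ denote the function being minimized. The plan is to show that the change of variable $\bbeta\mapsto\bPi_S\bbeta$ carries the minimization with linear term $\bd$ into the same minimization with linear term $\bPi_S\bd$.

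First I would record that $\bPi_S$ is a symmetric orthogonal involution (so $\bPi_S^\top=\bPi_S$ and $\bPi_S^2=\bI$), being a product of disjoint transpositions of the coordinate pairs $(j,j+p)$ for $j\in S$. Then I would verify, term by term, that $F(\bPi_S\bbeta;\bPi_S\bd) = F(\bbeta;\bd)$ for every $\bbeta\in\mathbb{R}^{2p}$: the quadratic term is unchanged because $(\bPi_S\bbeta)^\top \bC (\bPi_S\bbeta)=\bbeta^\top(\bPi_S^\top\bC\bPi_S)\bbeta=\bbeta^\top\bC\bbeta$ by the $S$-swap invariance hypothesis; the linear term is unchanged because $(\bPi_S\bd)^\top(\bPi_S\bbeta) = \bd^\top(\bPi_S^\top\bPi_S)\bbeta = \bd^\top\bbeta$ by orthogonality; and the $\ell_1$ and squared $\ell_2$ penalties are trivially invariant under any coordinate permutation.

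Because $\bbeta\mapsto\bPi_S\bbeta$ is a bijection on $\mathbb{R}^{2p}$, this identity transports argmins: $\hat{\bbeta}$ minimizes $F(\cdot;\bd)$ if and only if $\bPi_S\hat{\bbeta}$ minimizes $F(\cdot;\bPi_S\bd)$, which is exactly the biconditional stated in the lemma. There is no substantial obstacle---the lemma is essentially packaging the observation that each term of the objective commutes with the partial swap $\bPi_S$. The only subtleties worth tracking are the direction of the swap in the linear term and the need to invoke the precise hypothesis $\bPi_S^\top\bC\bPi_S=\bC$ for this specific $S$, rather than any stronger full-swap invariance.
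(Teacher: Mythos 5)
Your proposal is correct and follows essentially the same route as the paper's proof: a change of variables $\bbeta\mapsto\bPi_S\bbeta$ combined with the term-by-term invariance of the quadratic form (via $\bPi_S^\top\bC\bPi_S=\bC$), the linear term, and the permutation-invariant $\ell_1$ and $\ell_2$ penalties, followed by transporting argmins through the bijection. The only cosmetic difference is that you state the identity as $F(\bPi_S\bbeta;\bPi_S\bd)=F(\bbeta;\bd)$ up front while the paper performs the substitution inside the objective with data $\bPi_S\bd$; the content is identical.
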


\begin{proof}
Consider the objective with problem data $\Pi_S \bd$:
$$\frac{1}{2}\bbeta^\top \bC\bbeta - (\bPi_S \bd)^\top \bbeta +\lambda\norm{\bbeta}_1+\gamma\norm{\bbeta}_2^2 = \frac{1}{2}\bbeta^\top \bC\bbeta - \bd^\top \bPi_S^\top \bbeta +\lambda\norm{\bbeta}_1+\gamma\norm{\bbeta}_2^2.$$ Set $\bbeta'=\bPi_S^\top \bbeta$ so that $\bbeta=\bPi_S \bbeta'$. Upon changing variables, the objective takes the form $$\frac{1}{2}(\bbeta')^\top \bPi_S^\top \bC\Pi_S\bbeta'-\bd^\top \bbeta'+\lambda\norm{\bPi_S \bbeta'}_1+\gamma\norm{\bPi_S \bbeta'}_2^2 = \frac{1}{2}(\bbeta')^\top \bC\bbeta'-\bd^\top \bbeta'+\lambda\norm{\bbeta'}_1+\gamma\norm{ \bbeta'}_2^2,$$ where the equality follows because $\bPi_S^\top \bC\bPi_S=\bC$ and because the 1-norm and 2-norm are invariant under permutation. Now, the objective on the right-hand side is the objective with data $\bd$. If $\hat{\bbeta}$ is the solution with data $\bd$, it follows that $\bPi_S \hat{\bbeta}$ is the solution with data $\bPi_S \bd$, and vice versa. This proves the lemma.
\end{proof}

\begin{corollary}
Algorithm \ref{alg:fullghost} with a fixed $\lambda$ controls the FDR of rejections at level $q$.
\end{corollary}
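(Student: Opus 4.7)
The plan is to first reduce the claim to a flip-sign check on an ``oracle'' version of the statistic by invoking Proposition \ref{proposition:full_ghost}, and then establish that flip-sign property via Lemma \ref{lemma:Lasso_lemma}. Concretely, the statistic produced by Algorithm \ref{alg:fullghost} has the form $\widehat{\bW} = \mathbf{g}(\|\bY\|_2^2, \bX^\top \bY, \bP^\top \bX^\top \bY + \|\bY\|_2 \bZ, \bU)$, where $\mathbf{g}$ is the deterministic map that solves the pseudo-Lasso problem in step 3 and returns the coefficient-difference vector, with $\bU$ collecting any auxiliary randomness. Applying Proposition \ref{proposition:full_ghost}, it suffices to show that the corresponding oracle statistic $\bW = \mathbf{g}(\|\bY\|_2^2, \bX^\top \bY, \widetilde{\bX}^\top \bY, \bU)$ (obtained by plugging in $\widetilde{\bX} = \cG(\bX,\bSigma)$) obeys the flip-sign property; the model-X FDR control theorem of \cite{model-x} then delivers the conclusion.

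For the flip-sign check, I work with the version of \eqref{eqn:fullghost} using the genuine $\widetilde{\bX}^\top \bY$ in the linear term. The key structural fact is that the quadratic form is governed by $\bC = \begin{bmatrix} \bSigma & \bSigma - \bD \\ \bSigma - \bD & \bSigma \end{bmatrix}$, which depends only on $\bSigma$ and $\bD$ and not on the realizations of $\bX$ or $\widetilde{\bX}$. Consequently, swapping columns $\bX_j$ with $\widetilde{\bX}_j$ for $j \in S$ leaves $\bC$ untouched and merely applies the permutation $\bPi_S$ to the linear coefficient $\bd = \tfrac{1}{n}[\bX, \widetilde{\bX}]^\top \bY$. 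Lemma \ref{lemma:Lasso_lemma} (with $\gamma = 0$) then transports the solution as $\hat{\bbeta} \mapsto \bPi_S \hat{\bbeta}$, which swaps $|\hat\beta_j|$ with $|\hat\beta_{j+p}|$ for each $j \in S$ and thereby flips the sign of $W_j$, as required.

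The remaining hypothesis of Lemma \ref{lemma:Lasso_lemma} that I need to verify is the $S$-swap invariance $\bPi_S^\top \bC \bPi_S = \bC$. Writing $\bPi_S$ in block form as $\begin{bmatrix} \bI_{\bar{S}} & \bI_S \\ \bI_S & \bI_{\bar{S}} \end{bmatrix}$, where $\bI_S$ and $\bI_{\bar{S}}$ are the diagonal indicators of $S$ and its complement in $\{1,\ldots,p\}$, the computation collapses to the identities $\bI_{\bar{S}} \bD \bI_S = \bI_S \bD \bI_{\bar{S}} = 0$ and $\bI_{\bar{S}} \bD \bI_{\bar{S}} + \bI_S \bD \bI_S = \bD$, both immediate from the fact that $\bD$ is diagonal. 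Conceptually this is just the familiar knockoff exchangeability restated at the population-covariance level, which $\bC$ is engineered to satisfy. The only mild obstacle I anticipate is keeping the block-matrix bookkeeping straight for arbitrary $S$; otherwise the argument is a direct stitching together of Proposition \ref{proposition:full_ghost}, Lemma \ref{lemma:Lasso_lemma}, and the standard knockoff filter FDR guarantee.
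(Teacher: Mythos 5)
Your proposal is correct and follows essentially the same route as the paper: Proposition \ref{proposition:full_ghost} reduces the claim to the flip-sign property of the oracle statistic built from $\widetilde{\bX}^\top \bY$, and Lemma \ref{lemma:Lasso_lemma} applied with $\bC = \begin{bmatrix} \bSigma & \bSigma-\bD \\ \bSigma-\bD & \bSigma \end{bmatrix}$ and $\bd = \tfrac{1}{n}[\bX\ \widetilde{\bX}]^\top \bY$ supplies that property, since $\bC$ is $S$-swap invariant. Your explicit block-matrix verification of $\bPi_S^\top \bC \bPi_S = \bC$ simply fills in the step the paper labels ``easy to show,'' so there is nothing to correct.
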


\begin{proof}
It is easy to show that $\begin{bmatrix}
\bSigma & \bSigma-\bD \\
\bSigma-\bD & \bSigma
\end{bmatrix}$ is $S$-swap invariant for any $S\subset \{1,...,p\}$. Taking $$\bC = \begin{bmatrix}
\bSigma & \bSigma-\bD \\
\bSigma-\bD & \bSigma
\end{bmatrix}$$ and $$\bd = \frac{1}{n} \begin{bmatrix}
\bX^\top \bY \vspace{1mm} \\ 
\widetilde{\bX}^\top \bY
\end{bmatrix}$$ in Lemma \ref{lemma:Lasso_lemma}
establishes the flip-sign property of $\bW$ and, therefore, the FDR control of Algorithm \ref{alg:fullghost} for a fixed $\lambda$.
\end{proof}

In practice, to ensure numerical stability, we add a small positive constant multiple of the identity matrix to $$\begin{bmatrix}
\bSigma & \bSigma-\bD \\
\bSigma-\bD & \bSigma
\end{bmatrix}$$ when solving for $\hat{\bbeta}$. This is equivalent to incorporating a small Ridge penalty into the objective function. It is easy to see that the lemma proved above guarantees that this modification does not compromise the FDR control as $$\begin{bmatrix}
\bSigma+c\bI & \bSigma-\bD \\
\bSigma-\bD & \bSigma + c\bI
\end{bmatrix}$$ is also $S$-swap invariant for any $c\in \mathbb{R}$ and any $S\subset \{1,...,p\}.$

\subsubsection{Choice of tuning parameter} \label{section:full_tuning}

Several methods can be used to tune the value of the hyperparameter $\lambda$. We here consider two approaches.

\paragraph{Method 1 (lasso-min)} Pretend a homogeneous Gaussian linear model holds, i.e. $\bY = \bX\bbeta^*+\sigma\bepsilon$ for some $\bbeta^* \in \mathbb{R}^p$, $\sigma > 0$ and $\bepsilon \sim N(\bzero,\bI_n)$. 

Focus on  \eqref{eqn:fullghost} first and imagine that we have a method for computing $\lambda$ that depends on data only through $\norm{\bY}_2^2,\bX^\top \bY$, and $\widetilde{\bX}^\top \bY$. Note that the objective in Algorithm \ref{alg:fullghost} only substitutes $\widetilde{\bX}^\top \bY$ in \eqref{eqn:fullghost} with $\bP^\top \bX^\top \bY+\norm{\bY}_2\bZ$. Therefore, by Proposition \ref{proposition:full_ghost} if we set $\lambda$ via the same functional and work with $\bP^\top \bX^\top \bY+\norm{\bY}_2\bZ$ in lieu of $\widetilde{\bX}^\top \bY$, we shall  achieve FDR control with this data-driven value of the hyperparameter $\lambda$. This holds of course with the proviso that our selection of hyperparameter is symmetric in the sense that it produces  feature importance statistic obeying the flip-sign property. 

To set the tuning parameter $\lambda_0$ in \eqref{eqn:fullghost}, we use the common choice of taking a constant multiple of the expected value of the minimum $\lambda$ value such that $\hat{\bbeta}(\lambda) =\bzero_{2p}$ under the null model $\bY=\sigma\bepsilon$.  By the Karush–Kuhn–Tucker (KKT) conditions \citep{convex_opt}, this results in a tuning parameter of the form $$\lambda_0 = \kappa \cdot \frac{\sigma}{n} \cdot \mathbb{E}[\norm{\begin{bmatrix}
\bX &
\widetilde{\bX}
\end{bmatrix}^\top \bepsilon}_\infty],$$ where $\kappa$ is a hyperparameter between 0 and 1. Since $\begin{bmatrix}
\bX &
\widetilde{\bX}
\end{bmatrix}$ is a data matrix whose rows are iid samples from $$\mathcal{N}\Lp\bzero,\begin{bmatrix}
\bSigma & \bSigma-\bD \\
\bSigma-\bD & \bSigma
\end{bmatrix}\Rp,$$ $\mathbb{E}[\norm{\begin{bmatrix}
\bX &
\widetilde{\bX}
\end{bmatrix}^\top \bepsilon}_\infty]$ is a numerical constant, which can be estimated arbitrarily well via Monte Carlo simulations. We use the approach from \cite{variance-estimation} to give an estimate of $\sigma$, which crucially requires knowing only $\norm{\bY}_2^2,\bX^\top \bY$, and $\widetilde{\bX}^\top \bY$. \cite{variance-estimation} showed that the estimator is consistent and asymptotic normal in the high-dimensional regime. Specifically, in our setting, we estimate $\sigma$ by $$ \widehat{\sigma}_0=\sqrt{\text{max}\Lp\frac{2p+n+1}{n(n+1)}\norm{\bY}_2^2-\frac{1}{n(n+1)}\bY^\top \begin{bmatrix} \bX & \widetilde{\bX} \end{bmatrix}\begin{bmatrix}
\bSigma & \bSigma-\bD \\
\bSigma-\bD & \bSigma
\end{bmatrix}^{-1}\begin{bmatrix} \bX & \widetilde{\bX} \end{bmatrix}^\top \bY,0\Rp}.$$

In sum, a choice for $\lambda$ in Algorithm \ref{alg:fullghost} is this: 
\begin{enumerate}
  \item Approximate $\mathbb{E}[\norm{\bR^\top \bepsilon}_\infty]$ via Monte Carlo simulations, where $\bR \in \mathbb{R}^{n\times 2p}$ has iid $\mathcal{N}\Lp\bzero,\begin{bmatrix}
\bSigma & \bSigma-\bD \\
\bSigma-\bD & \bSigma
\end{bmatrix}\Rp$ rows, $\bepsilon \sim N(\bzero,\bI_n)$ is independent.
  \item Compute  $$\widehat{\sigma}_0=\sqrt{\text{max}\Lp\frac{2p+n+1}{n(n+1)}\norm{\bY}_2^2-\frac{1}{n(n+1)}\begin{bmatrix} \bY^\top \bX & \bY^\top \bX\bP+\norm{\bY}_2\bZ^\top  \end{bmatrix}\begin{bmatrix}
\bSigma & \bSigma-\bD \\
\bSigma-\bD & \bSigma
\end{bmatrix}^{-1}\begin{bmatrix} \bX^\top \bY \\ \bP^\top \bX^\top \bY+\norm{\bY}_2\bZ \end{bmatrix},0\Rp},$$
where $\bZ$ is independent of everything else. 
  \item Output $\lambda \approx \kappa \cdot\frac{\widehat{\sigma}_0}{n}\cdot\mathbb{E}[\norm{\bR^\top \bepsilon}_\infty]$ where the approximation sign $\approx$ reminds us that the expectation is only approximate. 
\end{enumerate}

As in the square-root Lasso case, we observe that the power of our method is not very sensitive to the choice of $\kappa$. We use $\kappa=0.6$ in our simulations below. 
In Appendix \ref{app:tuning_parameter}, we provide details of computation of $\lambda$ and prove that Algorithm \ref{alg:fullghost} maintains FDR control with the computed $\lambda$.

\paragraph{Method 2 (pseudo-sum)} An alternative way of choosing $\lambda$ is to adapt the pseudo-summary statistics approach
proposed by \cite{pseudo_summary}. Set $\br= \bX^\top \bY/n$ and $\widetilde{\br}=\bP^\top \br+\norm{\bY}_2\bZ/n$. The main idea of \cite{pseudo_summary} is to generate training summary statistics $\br_t$ and validation summary statistics $\br_v$ from $\br$ and $\widetilde{\br}$ based on the training and validation sample sizes $n_t$ and $n_v$ respectively (in this paper we take $n_t=0.8n$ and $n_v=0.2n$). Following \cite{pseudo_summary}, we generate the training summary statistics
$$\begin{bmatrix}
\br\\
\widetilde{\br}
\end{bmatrix}_t = \begin{bmatrix}
\br\\
\widetilde{\br}
\end{bmatrix} + \sqrt{\frac{n_v}{n\times n_t}}\bR,$$ where $$\bR\sim \mathcal{N}\left(\bzero,\begin{bmatrix}
\bSigma & \bSigma-\bD \\
\bSigma-\bD & \bSigma
\end{bmatrix}\right),$$
and the validation summary statistics$$\begin{bmatrix}
\br\\
\widetilde{\br}
\end{bmatrix}_v =\frac{1}{n_v}\Ls n\begin{bmatrix}
\br\\
\widetilde{\br}
\end{bmatrix} - n_t\begin{bmatrix}
\br\\
\widetilde{\br}
\end{bmatrix}_t\Rs.$$ 

Given a sequence of candidate $\lambda$ values, we choose that which  maximizes an approximation $f(\lambda)$ of the correlation between the predicted values and the true values on the pseudo-validation set.\footnote{Unlike the previous approach, this tuning parameter choice will not induce the exact flip-sign property. However, we observe empirically that our method is robust to this issue, and no FDR inflation occurred. In theory, one could randomly swap all the variables with their corresponding knockoffs and compute the average of all the $\lambda$ values obtained. In the limit, the average will give a data-driven value of $\lambda$ that is invariant to swapping variables with their knockoffs due to symmetry.}  Specifically, \cite{pseudo_summary} considered the approximation

\begin{equation} \label{eqn:fullghost_pseudo_f}
f(\lambda)=\frac{ \hat{\bbeta}^\top _{t,\lambda} \begin{bmatrix}
\br\\ \widetilde{\br} \end{bmatrix}_v}{\sqrt{\hat{\bbeta}^\top _{t,\lambda}\begin{bmatrix}
\bSigma & \bSigma-\bD \\
\bSigma-\bD & \bSigma
\end{bmatrix}\hat{\bbeta}_{t,\lambda}}},\end{equation} where 
\begin{equation}
\label{eqn:fullghost_pseudo}
\hat{\bbeta}_{t,\lambda}   = \argmin_{\bbeta\in\mathbb{R}^{2p}} \frac{1}{2} \bbeta^\top  \begin{bmatrix}
\bSigma & \bSigma-\bD \\
\bSigma-\bD & \bSigma
\end{bmatrix}\bbeta - \bbeta^\top \begin{bmatrix}
\br\\
\widetilde{\br}
\end{bmatrix}_t + \lambda||\bbeta||_1.
\end{equation}

Therefore, we choose the $\lambda$ value that maximizes \eqref{eqn:fullghost_pseudo_f} among a set of candidate values. Since the objective function \eqref{eqn:fullghost} is convex in $\bbeta$, we may employ the BASIL framework proposed by \cite{basil}, which implements a batch version of the strong rules introduced in \cite{strong_rules}. BASIL can be directly applied to compute the solution path of \eqref{eqn:fullghost_pseudo} efficiently. 

Note that there exist other ways to choose the penalty level $\lambda$ using $\bX^\top \bY,\norm{\bY}_2$ and $n$ (for example, the Lassosum by \cite{lassosum}). We do not attempt to claim an optimal strategy.

\paragraph{Connection with the scout procedure} 
It turns out that step 3 of Algorithm \ref{alg:fullghost} is closely related to the scout procedure \citep{scout}. The scout procedure defines a family of covariance-regularized regression methods that achieve superior prediction via shrinking the inverse covariance matrix. It includes the Lasso, Ridge and Elastic Net as special cases. In Appendix \ref{app:scout}, we show that the solution of objective function \eqref{eqn:fullghost} is proportional to that of the scout procedure (with known precision matrix $\bSigma^{-1}$). This connection provides a justification on why the objective function \eqref{eqn:fullghost} is effective.

\subsubsection{GhostKnockoffs with other feature importance statistics} \label{section:susie_rss}

In the previous sections, we presented a feature importance statistic based on summary statistics that leads to better power than the marginal correlation difference statistic. By Proposition \ref{proposition:full_ghost}, GhostKnockoffs techniques can be combined with any other feature importance statistics that i) are based on the summary statistics $\bX^\top \bY$,  $\norm{\bY}^2$ and the sample size $n$ and ii) satisfy the flip-sign property. The procedures generated will still guarantee FDR control. In our simulation studies, we found that using the posterior inclusion probability (PIP) produced by the SuSiE-RSS model \citep{susie_rss} as the feature importance statistic also results in consistent power improvement over GK-marginal. SuSiE-RSS is based on the Sum of Single Effects (SuSiE) model proposed by \cite{susie}, which assumes a Bayesian linear model with true coefficients $\bbeta$ represented as the sum of multiple one-hot (random) individual effect vectors. \cite{susie_rss} combines SuSiE with a modified likelihood function to accommodate applications in which only summary statistics are available (see \cite{susie_rss} for details).\footnote{We used the \textit{susie\_rss} function inside the R package \texttt{susieR} in our simulations.} We call the resulting procedure GhostKnockoffs with SuSiE-RSS statistic and denote it by \textit{GK-susie-rss}. We include this method in the simulation section below.

\subsection{Variants of GhostKnockoffs}

The methods we presented so far can be adapted to work with various related procedures. We give three examples below for illustration. 

\subsubsection{Multi-knockoffs} 
\label{section:multi-knockoffs}

The knockoffs procedure is a randomized procedure which could produce very different selection sets on different runs. This is especially true when the knockoffs rejection set is small. In fact, the offset on the numerator in \eqref{eqn:knockoffs_threshold} implies that knockoffs either rejects more than $\lceil\frac{1}{q}\rceil$ hypotheses, where $q$ is the target FDR level, or rejects nothing. To improve the stability of the knockoffs procedure, \cite{multipleknockoffs} proposed simultaneous multi-knockoffs, which is substantially more stable and powerful than knockoffs when the rejection set is small and maintains FDR control in general.

The idea of \cite{multipleknockoffs} is to create $M$ (instead of one) knockoff copies for every feature so that they jointly satisfy an extended exchangeability condition.\footnote{Specifically, the extended exchangeability condition says that if we permute variables with their corresponding (multiple) knockoffs arbitrarily, the joint distribution remains unchanged.} If $X \sim \mathcal{N}(\bzero,\bSigma)$, \cite{multipleknockoffs} showed that $\widetilde{X} \in \bbR^{pM}$ is a valid $M$ multi-knockoff for $X\in \bbR^{p}$ if $\begin{bmatrix} X & \widetilde{X} \end{bmatrix} \sim \mathcal{N}(\bzero,\bG)$, where $$\bG=\begin{bmatrix}
    \bSigma & \bSigma - \bD & \cdots & \bSigma - \bD\\
    \bSigma - \bD & \bSigma & \cdots & \bSigma - \bD \\
    \vdots & \vdots & \ddots & \vdots\\
    \bSigma - \bD & \cdots & \cdots & \bSigma
\end{bmatrix} \in \mathbb{R}^{(M+1)p \times (M+1)p},$$ 
Here, $\bD=\text{diag}\{\bs\}$, and $\bs$ is obtained by solving a more restrictive convex optimization problem than in \eqref{eqn:sdp} which guarantees that $\bG$ is positive semi-definite (see \cite{multipleknockoffs} for details). In data matrix form, we generate valid $M$ multi-knockoffs by 
$$\widetilde{\bX}=\bX \bP+\bE\bV^{1/2},$$ 
where $\bP=\begin{bmatrix} \bI-\bSigma^{-1}\bD & \cdots & \bI-\bSigma^{-1}\bD \end{bmatrix} \in \mathbb{R}^{p\times Mp},$ $\bE\in \mathbb{R}^{n\times Mp}$ has i.i.d.~standard normal entries, and
\[
    \bV = \begin{bmatrix}
    2\bD - \bD \bSigma^{-1}\bD & \bD - \bD \bSigma^{-1} \bD & \cdots &  \bD - \bD\bSigma^{-1} \bD\\
    \bD - \bD \bSigma^{-1} \bD & 2\bD - \bD \bSigma^{-1}\bD & \cdots & \bD - \bD \bSigma^{-1} \bD \\
    \vdots  & \vdots & \ddots & \vdots\\
    \bD - \bD \bSigma^{-1} \bD & \bD - \bD \bSigma^{-1} \bD & \cdots & 2\bD - \bD \bSigma^{-1}\bD
    \end{bmatrix}.
\] 

\cite{multipleknockoffs} generalized the knockoffs threshold \eqref{eqn:knockoffs_threshold} and the flip-sign property to produce FDR-controlling rejection sets after generating multiple knockoffs via this procedure. 

In the summary statistics settings, upon redefining $\bP$, $\bV$ and $\bs$ as above and replacing the standard knockoffs filter by the multi-knockoffs filter, Algorithms \ref{alg:ghostkf} and \ref{alg:partialghost} produce rejection sets that have the same distribution as those produced by their corresponding versions with individual-level data. For Algorithm \ref{alg:fullghost}, we simply need to further replace $$\begin{bmatrix}
\bSigma & \bSigma-\bD \\
\bSigma-\bD & \bSigma
\end{bmatrix}$$ by $\bG$.

\subsubsection{Group knockoffs}\label{sec:group_knockoff}

When variables are highly correlated, selection procedures become conservative. For example, if a non-null variable $X_j$ is highly correlated with a null variable $X_k$, it becomes difficult to reject $X_j \indep Y | X_{-j}$. This is an important practical concern because highly correlated features are ubiquitous in many settings, particularly GWAS datasets. To overcome this challenge, \textit{group knockoffs} \citep{group_knockoffs} can be useful; please see \cite{chu2023second}, whose algorithms we employ in the data analyses of Section \ref{section:gwas}. In group knockoffs, the object of inference is shifted from single variables to groups of highly correlated variables. Specifically, suppose we partition $p$ features into $g$ groups and reorder all features such that features of the same group are in adjacent columns of $\bX$. The objective is to test group conditional independence hypothesis:
\begin{align*}
H_{\gamma}^0: X_{\gamma} \indep Y \mid X_{-\gamma}
\end{align*} 
where $\gamma \in \{1,...,g\}$ denotes a group and $X_\gamma$ is the vector of features in group $\gamma$. When these groups have strong correlation, single-variable knockoffs may struggle to identify signals, but group knockoffs retain power to identify significant groups. As in Section \ref{section:multi-knockoffs}, all methods described in this paper apply to group knockoffs after redefining $\bD$ to the equivalent version in group knockoffs. In Appendix \ref{sec:group_knockoff_appendix}, we 
detail the construction of group knockoffs and examples of importance scores at the group level for inference.

\subsubsection{Conditional randomization test}
The conditional randomization test (CRT) \citep{model-x} is an alternative method to test the conditional independence hypotheses $H_j:X_j\indep Y\mid X_{-j}$ for $1\le j \le p$. By generating a valid `CRT $p$-value' $p_j$ for each hypothesis ${H}_j$, existing multiple testing procedures, including the Benjamini-Hochberg procedure \citep{bh} and the selective SeqStep+ filter \citep{crt_shuangning}, can be used to simultaneously test $H_1,\ldots,H_p$ with FDR control.\footnote{In general, CRT $p-$values may not be independent of each other or satisfy the PRDS property \citep{by}. Therefore, applying the Benjamini-Hochberg procedure on CRT $p-$values does not guarantee FDR control theoretically. However, as noted in \cite{model-x}, the FDR is usually under control empirically.} As shown in \cite{model-x} and \cite{wang}, doing so can improve the power of multiple testing with greater computational complexity.

In Appendix \ref{ghostcrt}, we introduce Ghostknockoffs for CRT (\textit{GhostCRT}), which adopts techniques introduced in this paper to the framework of CRT.

\subsection{Numerical simulations} \label{sec:full_ghost_sim}

We conduct simulations on synthetic data as well as semi-synthetic data generated from a real-world genetic dataset. Specifically, we apply GhostKnockoffs with pseudo-lasso statistic (GK-pseudolasso, defined in Algorithm \ref{alg:fullghost} with tuning parameter $\lambda$ chosen by either lasso-min or pseudo-sum from Section \ref{section:full_tuning}) and GhostKnockoffs with SuSiE-RSS statistic (GK-susie-rss, defined in Section \ref{section:susie_rss}). We  compare their performance with GhostKnockoffs with marginal correlation difference statistic (GK-marginal, defined in Section \ref{section:model-x}) and the knockoffs procedure with (cross-validated) Lasso coefficient difference statistic based on individual-level data (KF-lassocv). We also demonstrate empirically the robustness of our procedures by showing the FDR control when only an estimate of the true covariance matrix $\bSigma$ is available and when the features are discrete.

\subsubsection{Simulations based on real-world genetic data} \label{subsubsec:full_ghost_real}

To mimic the dependency structure among features in real-world applications, we generate synthetic data based on the whole genome sequencing (WGS) data from the Alzheimer's Disease Sequencing Project (ADSP). The data are obtained from the ADSP consortium following the SNP/Indel Variant Calling Pipeline and data management tool (VCPA) \citep{leung2019vcpa}. The ADSP WGS data records counts of minor alleles of genetic variants over 16,906 individuals. Using reference populations from the 1000 Genomes Consortium \citep{10002015global}, we estimate ancestry rates of each individual by SNPWeights v2.1 \citep{chen2013improved} and extract 6,952 individuals with estimated European ancestry rate greater than 80\%. We further restrict our simulations to 2,000 randomly selected genetic variants within 0.5Mb distance to the APOE gene (chr19:44909011-45912650; hg38), whose $\varepsilon$2 allele and $\varepsilon$4 allele are known to be respectively the strongest genetic protective factor and the strongest genetic risk factor for Alzheimer's disease \citep{SerranoPozo2021,Belloy2023}, and with minor allele frequency (MAF) larger than $ 0.01$. Since our simulations focus on performance at identifying relevant clusters of tightly linked variants, we simplify the simulation design by pruning variants to eliminate pairs with absolute correlation greater than $0.75$. To do so, we first compute the correlation matrix $[\text{cor}(X_j,X_k)]_{2000\times 2000}$ of the 2,000 selected variants over the 6,952 extracted individuals using the shrinkage estimate in the R package \texttt{corpcor} \citep{schafer2005shrinkage} and apply hierarchical clustering (single-linkage with cutoff value $0.25$) on the distance matrix $[1-|\text{cor}(X_j,X_k)|]_{2000\times 2000}$. As a result, we obtain 512 variant clusters such that pairwise correlation between any pair of variants from different clusters is in  $[-0.75,0.75]$. By randomly choosing one representative variant from each cluster, we include $p=512$ tested genetic variants in the simulation study.

For each replicate, we obtain synthetic data by randomly sampling $n=3,000$ individuals without replacement and collecting the sampled individuals’ records on the $p=512$ tested genetic variants as the $n\times p$ covariate 
matrix $\bX$. We further sample another $n=3,000$ individuals without replacement as the reference panel on which we compute the correlation matrix $\bSigma$ using the shrinkage estimate in the R package \texttt{corpcor} \citep{schafer2005shrinkage}. Based on the covariate matrix $\bX$, we generate the response vector $\bY=(Y_1,\ldots,Y_n)^\top $ from either the linear model ({continuous response}),
 $$Y_i = \beta_1X_{i1}+...+\beta_{p}X_{ip}+\epsilon^C_i,\quad\text{where }\epsilon^C_i\sim N(0,3^2),$$
or the mixed-effect logit model ({binary response}),
 $$Y_i\sim \text{Bernounli}(\mu_i),\quad \text{where }g(\mu_i) = \beta_0 +\beta_1X_{i1}+...+\beta_{p}X_{ip}+\epsilon^B_i, \text{ } \epsilon^B_i\sim N(0,1^2)\text{ and }g(x) = \log\Big(\frac{x}{1-x}\Big).$$
Specifically, $\beta_0$ under the mixed-effect logit model is $-\log(9)$ so that the prevalence (or the expected proportion of $Y_i=1$) is $10\%$. $\epsilon^C_i$'s and $\epsilon^B_i$'s reflect variation due to unobserved covariates. Only $10$ randomly selected coefficients $\beta_j$ are nonzero, with value $\beta_j = \frac{1}{\sqrt{20\cdot m_j(1-m_j)}}$, where $m_j$ is the MAF of the $j$-th variant. 

With the relevant summary statistics computed, we apply GK-pseudolasso and GK-susie-rss and compare their performances with GK-marginal and KF-lassocv. 

Over 1000 replicates under both the linear model and the mixed-effect logit model, average power and FDR of different methods with respect to different target FDR levels are visualized in Figure \ref{fig:semi_synthetic}. 
Under both models, we observe that GK-pseudolasso with both ways of selecting the tuning parameter and GK-susie-rss are uniformly more powerful than GK-marginal. The performance of the proposed methods is very close to that of KF-lassocv. Despite the covariance matrix being estimated using an independent sample and the entries of $X$ being discrete, the FDRs of our proposed methods are controlled in both settings, suggesting the robustness of our methods. 

\paragraph{GhostKnockoffs with discrete features} We note that discrete covariates do not follow a Gaussian distribution. However, the knockoffs procedure ensures FDR control whenever the feature importance statistics $W_j = w(T_j, T_{p+j})$, where $w$ is an anti-symmetric function, and $\mathbf{T} \in \mathbb{R}^{2p}$ is distributionally invariant upon swapping $T_j$ with $T_{j+p}$ for each null $j$. Using Lemma \ref{lemma:Lasso_lemma}, we know that Algorithm \ref{alg:fullghost} controls the FDR if swapping the $j-$th entry of $\bZ = \bX^\top \bY$ and the $j-$th entry of $\tilde{\bZ} =\bP^\top \bX^\top \bY+\norm{\bY}_2\bZ$ does not change their joint distribution for each null $j$. In Appendix \ref{app:supplementary_semi_plots}, we visually demonstrate the approximate preservation of this distributional invariance. This, along with the robustness of knockoffs \citep{model-x, barber2020robust}, helps in explaining why we have not observed FDR inflation with discrete covariates.

\begin{figure}[htbp] 
  \centering
  \includegraphics[width=0.7\textwidth]{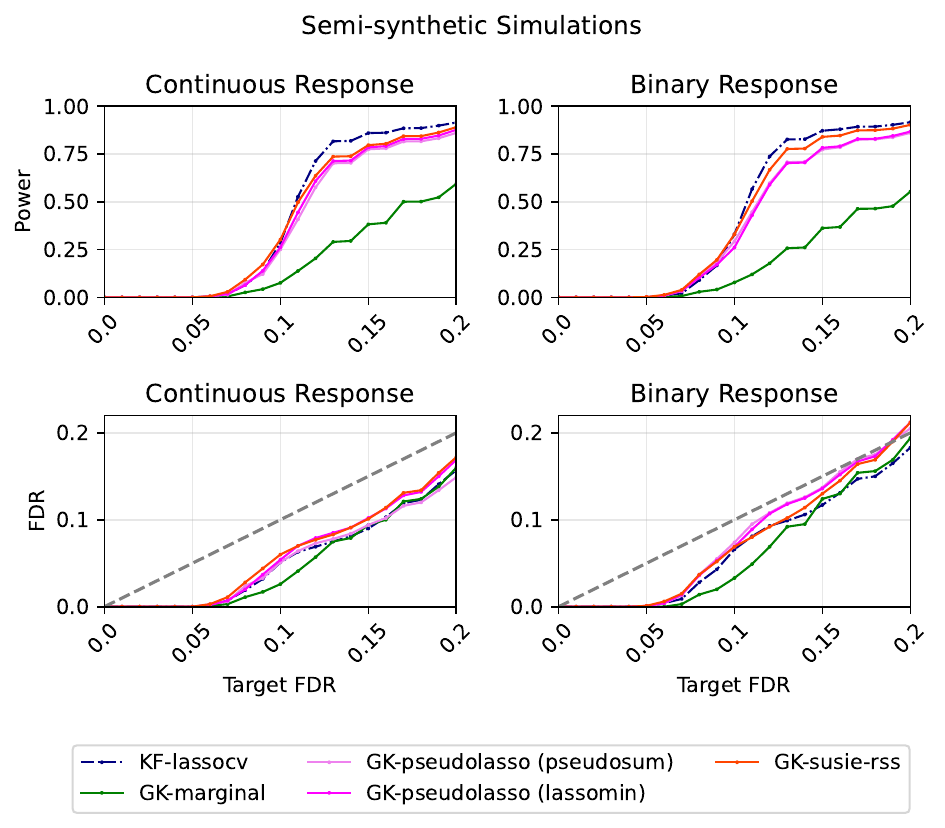}
  \caption{Average power and FDR over 1000 replications with respect to different target FDR levels in simulations based on genetic data, where features are genotypes of existing patients, and the response is simulated from a linear model ({continuous response}) or a mixed-effect logit model ({binary response}).}
  \label{fig:semi_synthetic}
\end{figure}

\subsubsection{Independent features} \label{subsubsec:full_ghost_indep}
We revisit  the setting from Section \ref{simulation_partial_indep} in which $\Sigma = I_p$. For the pseudo-sum method for GK-pseudolasso, we optimize over $\lambda$ using a grid of 100 candidate values interpolating between $\lambda_{\text{max}}$ and $\lambda_{\text{max}}/1000$ linearly in log scale, and $$\lambda_{\text{max}}=\frac{1}{n}\mathbb{E}\Ls\left\| \begin{bmatrix}
\bX^\top \bY \\
\bP^\top \bX^\top \bY+\norm{\bY}_2\bZ
\end{bmatrix}\right\|_{\infty}\Rs$$ is the minimal $\lambda$ value that shrinks all the coefficients to zero. To calculate $\mathbb{E}[\norm{\bR^\top \bepsilon}_\infty]$ for the lasso-min parameter method, we use a Monte Carlo estimate averaged over 200 samples. The target FDR is 20\%. Each point represents an average over 200 replications.

Note that when $\bSigma=\bI_p$, the solution to \eqref{eqn:sdp} is $\bD=\bI_p$. It is easy to see that \eqref{eqn:fullghost} gives $$\hat{\bbeta} = \frac{1}{n}S_{\lambda}\Lp\begin{bmatrix} \bX & \widetilde{\bX}\end{bmatrix}^\top \bY\Rp,$$ where the soft-threshold operator $S_{\lambda}(x)=sign(x)(\abs{x}-\lambda)_{+}$ is applied coordinate-wise. Therefore, the method in Section \ref{sec:pseudo-lasso} soft-thresholds the marginal correlation of $\bX$ and $\bY$.

\begin{figure}[htbp] 
  \centering
  \includegraphics[width=0.8\textwidth]{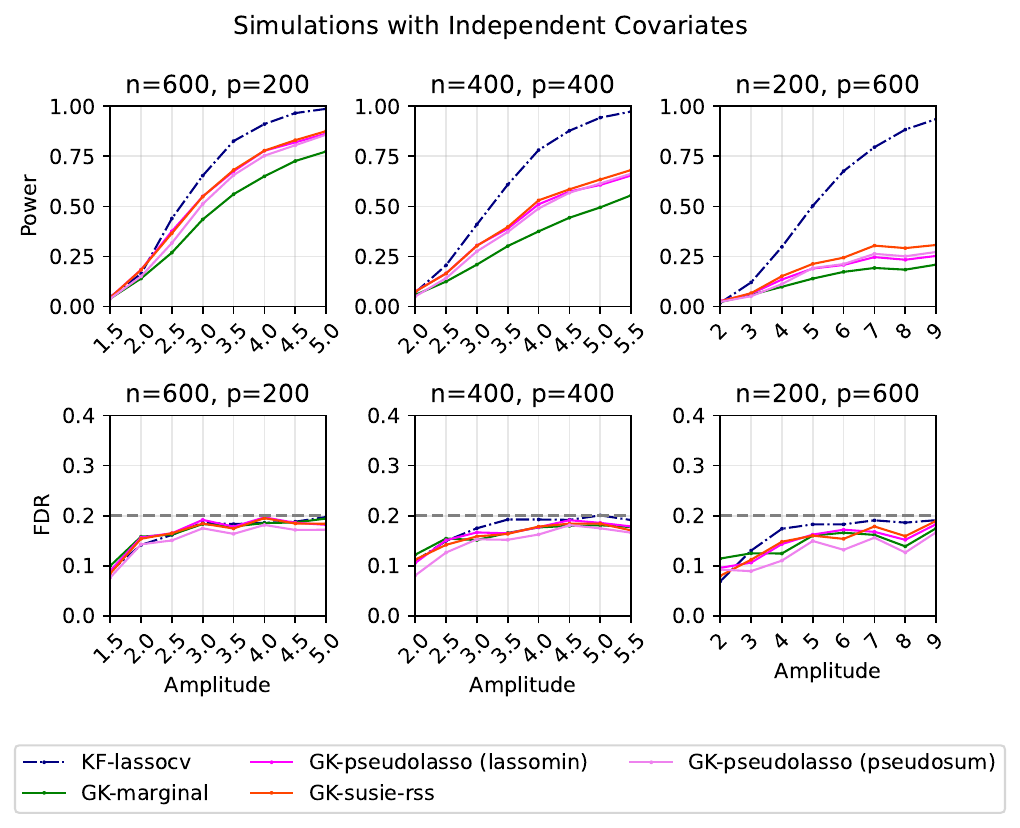}
  \caption{Power and FDR plots for independent features and a Gaussian linear model with varying dimensions. Each point is an average over 200 replications.} 
  \label{fig:full_ghost_indep}
\end{figure}

As shown in Figure \ref{fig:full_ghost_indep}, all three new methods (GK-pseudolasso with lasso-min/pseudo-sum and GK-susie-rss) consistently outperform GK-marginal, and the FDR is always controlled at the expected level, as theoretically guaranteed. As $n/p$ grows, we see that the three new methods have power closer to KF-lassocv. This is further demonstrated in additional simulations in Appendix \ref{app:add_full_gk_plots}.

\subsubsection{AR(1) features} \label{subsubsec:full_ghost_auto}
Figure \ref{fig:full_ghost_auto} shows the corresponding plots when the covariate matrix is generated from an AR(1) distribution. We found similar patterns to those with independent features. The power of all methods drops when the autocorrelation coefficient increases, as it is then harder to separate true signals from other variables.

\begin{figure}[htbp] 
  \centering
  \includegraphics[width=0.8\textwidth]{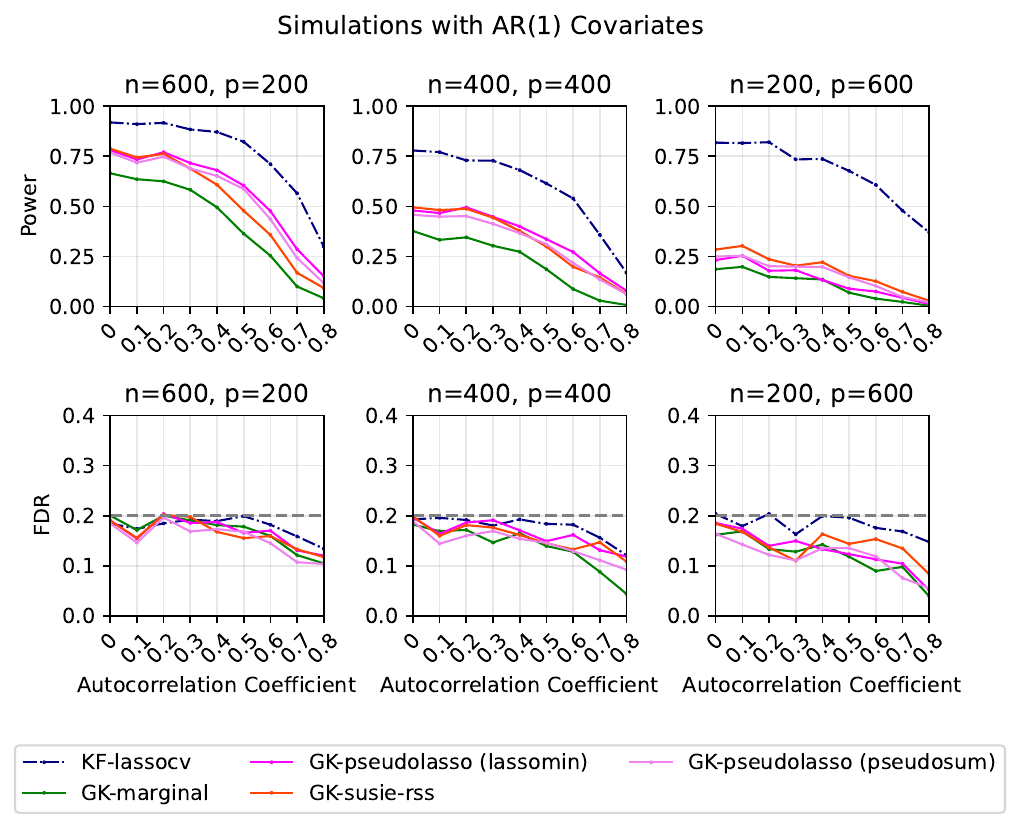}
  \caption{Power and FDR plots for AR(1) features and a Gaussian linear model with varying dimensions. Each point is an average over 200 replications.}
  \label{fig:full_ghost_auto}
\end{figure}

\section{Application to meta-analysis for Alzheimer's disease} \label{section:gwas}

To illustrate the empirical performance of the methods in detecting genetic variants associated with Alzheimer's disease (AD), we apply them to a meta-analysis of nine large-scale array-based genome-wide association and whole-exome/-genome sequencing studies for AD. We include the details of the nine studies in Appendix \ref{app:nine_studies}.

As all studies share the same focus on individuals with European ancestry, we perform a meta-analysis by aggregating their $Z$-scores and obtain the meta-analysis $Z$-score $\bZ_{\text{meta}}$ (see Appendix \ref{app:meta_z_scores} for details). In addition, we obtain the block-diagonal covariance matrix $\bSigma$ with respect to approximately independent linkage disequilibrium blocks provided by \citet{berisa2016approximately}. Within each block, we use the UK Biobank directly genotyped data as the reference panel and compute the covariance matrix via the Pan-UKB consortium (\url{https://pan.ukbb.broadinstitute.org}) with details in Appendix \ref{sec:LDmatrices}. To improve the power in the presence of tightly linked variants, we apply the group knockoffs construction on top of the GhostKnockoff algorithm, as detailed in Section \ref{sec:group_knockoff}. Finally, we implement GK-pseudolasso with tuning parameter chosen by the lasso-min method on the meta-analysis $Z$-score $\bZ_{\text{meta}}$ and the covariance matrix $\bSigma$. To stabilize the GhostKnockoffs procedures, we use $M=5$ multi-knockoffs as defined in Section \ref{section:multi-knockoffs}.

\begin{figure}[htbp]
  \centering
  \begin{subfigure}[b]{0.98\textwidth} 
    \includegraphics[width=\textwidth]{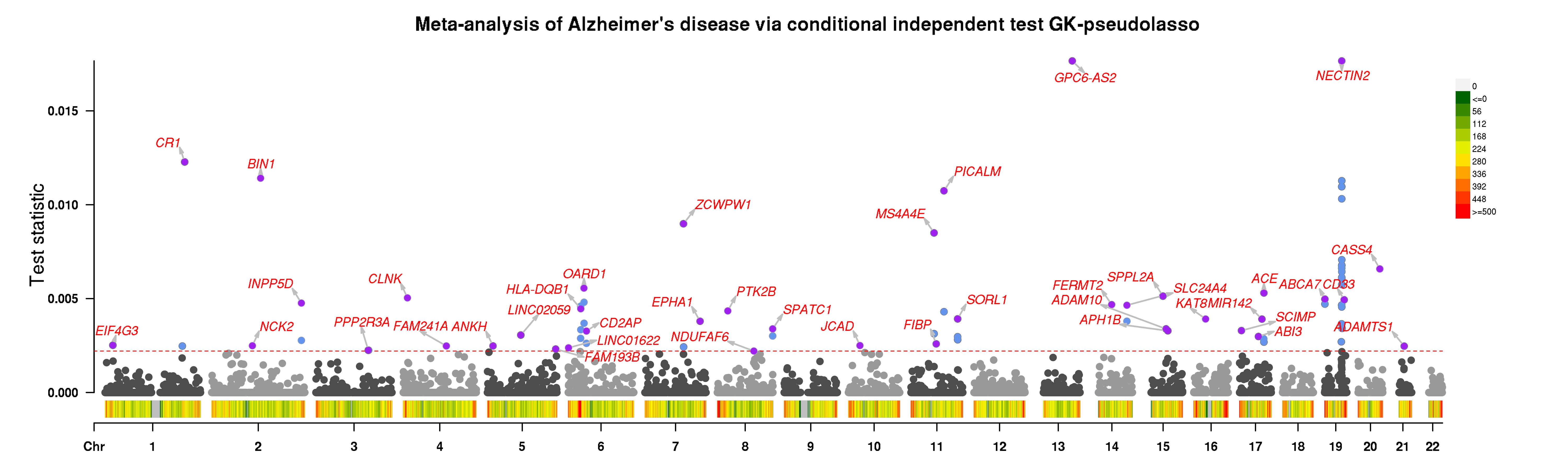}
  \end{subfigure}
  \caption{Graphical representation of the feature importance statistics after  applying the GK-pseudolasso on a meta-analysis of AD. Each point represents a group of genetic variants. With an target FDR level of 0.1, identified groups are highlighted in blue or purple. For each locus with at least one identified group, the name of the locus is presented at the variant group with the largest importance statistic (highlighted in purple). Variant density is shown at the bottom of plot (number of variants per 1Mb).}
  \label{fig:real_data_ghostlasso}
\end{figure}

Figure \ref{fig:real_data_ghostlasso} presents the result of the meta-analysis of the nine studies via our proposed method with target FDR level 0.1. Here, we specify loci based on variant groups and annotate two loci as different loci if they are 1 Mb away from each other. We adopt the most proximal gene’s name as the locus name.\footnote{Specifically, we consider the variant group with the largest group knockoff feature importance statistic within a locus, and then map the locus to the most proximal gene of the variant within the group that has the highest knockoff importance score.} As shown by Table \ref{table:real_data_table1} in Appendix \ref{app:gwas_add_tables}, GK-pseudolasso identifies variant groups in 42 and 63 loci when the target FDR level is 0.1 and 0.2 respectively, substantially more than GK-marginal (10 and 17 when the target FDR level is 0.1 and 0.2, respectively). This is consistent with our simulation results in Section \ref{sec:full_ghost_sim}. In addition, we observe from Table \ref{table:real_data_table1} that GK-susie-rss identifies fewer loci (35 and 47 when the target FDR level is 0.1 and 0.2, respectively), although it exhibits similar power in simulation studies. In Appendix \ref{app:gwas_add_figures}, we analogously visualize results of the meta-analysis via conventional marginal association test (with $p$-value cutoff $5\times 10^{-8}$), GK-marginal (with target FDR level 0.10), and GK-susie-rss (with target FDR level 0.10).

Table \ref{table:real_data_table2} in Appendix \ref{app:gwas_add_tables} shows the top variant with the largest feature importance statistic in each identified group. Most discoveries exhibit relatively strong marginal associations (marginal $p$-value $\le0.05$) in individual studies and the same direction of effects across all studies. Although some loci have an opposite direction of effect in one individual study, such effects are not significant. The consistency across individual studies supports the validity of the proposed method in discovering putative causal variants. In addition, we observe that all top variants of identified groups have small meta-analysis $p$-values (less than 0.05), though some are not smaller than the stringent genome-wide threshold ($5\times10^{-8}$) in marginal association tests with FWER control. 

To further investigate whether the identified groups are functionally enriched, we apply a SNP-to-gene linking strategy proposed by \citep{gazal2022combining} to link the top variants of identified groups to the genes that they potentially regulate. Out of 63 top variants, we find that 34 (54.0\%) can be mapped with functional evidence (e.g., being an expression quantitative trait locus, in a Hi-C linked enhancer region, near the exon of a gene, etc.), where the proportion is significantly higher than the average percentage of the background genome (28.6\%). In summary, the proposed method can identify functional genetic variants with weaker statistical effects missed by conventional association tests.

\section{Discussion} \label{section:discussion}

This paper introduced novel approaches for performing variable selection with FDR control on the basis of summary statistics. We proposed methods for testing conditional independence hypotheses from summary statistics alone. For the methods from Section \ref{section:fullghost}, all we need are essentially the marginal correlations between $X$ and $Y$,\footnote{Along with $\norm{\bY}^2$ and $n$.} which, at first sight, may appear surprising. Our arguments rely on the assumption that the covariates follow a Gaussian distribution, as well as on the linearity and rotational invariance of Gaussian distributions. Since our methods are based on the knockoffs procedure, they do not require any knowledge about the model of $Y$ given $X$. Our methods extend, and generally give better power than, the work by \cite{ghostknockoffs} by employing penalized regression to produce the measure of feature importance. The techniques employed in this paper provide a wrapper that can be combined with a variety of feature selection methods, yielding knockoffs versions that guarantee FDR control.

We applied our methods to genetic studies, in which summary statistics are typically available. Due to linkage disequilibrium, the application of our methods to individual genetic variants may yield conservative results. In a parallel work \cite{chu2023second}, we have developed tools for constructing group knockoffs efficiently and effectively. When combined, our methods offer a powerful new approach to controlled variable selection in GWAS. This is further supported in our companion work \cite{zihuai2023app}, where we see the methods in this paper led to significant scientific discoveries.

\section{Acknowledgement} \label{section:acknowledgement}
Z.C. would like to thank Kevin Guo and Amber Hu for helpful discussions. Z.C. was supported by the Simons Foundation under award 814641. Z.H. was supported by NIH/NIA award AG066206 and AG066515. T.M. was supported by a B.C. and E.J. Eaves Stanford Graduate Fellowship. C.S. was supported by the grants NIH R56HG010812 and NSF DMS2210392. E.J.C. was supported by the Office of Naval Research grant N00014-20-1-2157.

\newpage
\bibliography{refs}
\bibliographystyle{abbrvnat}

\newpage
\appendix

\section{Computation of free parameters $\bs$} 
\label{app:s}
In this paper, we use the semidefinite program (SDP) construction of second-order knockoffs \cite{model-x}. Without loss of generality, we assume that columns of the data matrix $\bX$ have been standardized with mean 0 and variance 1 such that diagonal entries $\bSigma$ are 1. As a result, $\bs$ is the solution of the convex optimization problem.
\begin{equation}
\label{eqn:sdp}
\begin{aligned}
\textrm{minimize} \quad & \sum_{j=1}^p |1-s_j|\\
\textrm{subject to} \quad &s_j \ge 0, \quad  \ 1\le j \le p,\\
  &\text{diag}\{\bs\} \preceq 2\bSigma.    \\
\end{aligned}
\end{equation}

Other methods to compute $\bs$ include the minimum variance-based reconstructability (MVR) construction \citep{mvrknockoffs} and maximum entropy (ME) construction \citep{multipleknockoffs,mvrknockoffs}, which are all compatible with our methods in this paper.

\section{Equivalence of GhostKnockoffs and the Gaussian knockoff sampler in sampling the knockoff $Z$-score $\widetilde{\bZ}_s$} \label{ghostknockoffs_proof}

In this section, we summarize the proof of \citet{ghostknockoffs} that $\widetilde{\bZ}_s$ computed by \eqref{eqn:Z_s} satisfies \eqref{eqn:Z_s_required} as follows.

\begin{lemma}{\rm \citep{ghostknockoffs}}
For any $\bP$ and $\bV$ computed in step 3 of Algorithm \ref{alg:gaussiankf}, we have $$\widetilde{\bZ}_s\mid \bX, \bY \stackrel{d}{=}\widetilde{\bX}^\top \bY\mid \bX, \bY,$$
where $\widetilde{\bZ}_s$ is computed by \eqref{eqn:Z_s} and $\widetilde{\bX}$ is the output of Algorithm \ref{alg:gaussiankf}.
\end{lemma}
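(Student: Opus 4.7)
The plan is to carry out a direct distributional comparison using the explicit matrix form $\widetilde{\bX} = \bX\bP + \bE\bV^{1/2}$ from equation \eqref{CRT_all}, where $\bE \in \mathbb{R}^{n \times p}$ has i.i.d.\ standard Gaussian entries independent of $\bX$ and $\bY$. The strategy has two steps: first, write $\widetilde{\bX}^\top \bY$ in a way that cleanly separates the deterministic (given $\bX, \bY$) and stochastic parts; second, match this distribution to that of $\widetilde{\bZ}_s = \bP^\top \bX^\top \bY + \|\bY\|_2 \bZ$.

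First I would expand
\[
\widetilde{\bX}^\top \bY = \bP^\top \bX^\top \bY + \bV^{1/2} \bE^\top \bY.
\]
Conditional on $\bX$ and $\bY$, the first summand is a constant, so the conditional distribution of $\widetilde{\bX}^\top \bY$ is fully determined by that of $\bV^{1/2} \bE^\top \bY$. Since $\bE$ is independent of $(\bX, \bY)$ with i.i.d.\ $\mathcal{N}(0,1)$ entries, the $j$th coordinate of $\bE^\top \bY$ is $\sum_i E_{ij} Y_i$, so conditional on $\bY$ (and hence on $(\bX, \bY)$) one obtains
\[
\bE^\top \bY \mid \bX, \bY \sim \mathcal{N}(\bzero, \|\bY\|_2^2 \bI_p).
\]
Multiplying by $\bV^{1/2}$ yields $\bV^{1/2} \bE^\top \bY \mid \bX, \bY \sim \mathcal{N}(\bzero, \|\bY\|_2^2 \bV)$.

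Next I would compute the conditional distribution of $\widetilde{\bZ}_s$. Since $\bZ \sim \mathcal{N}(\bzero, \bV)$ is independent of $(\bX, \bY)$, conditional on $(\bX, \bY)$ the term $\bP^\top \bX^\top \bY$ is deterministic and $\|\bY\|_2 \bZ \sim \mathcal{N}(\bzero, \|\bY\|_2^2 \bV)$. Thus
\[
\widetilde{\bZ}_s \mid \bX, \bY \sim \mathcal{N}\bigl(\bP^\top \bX^\top \bY,\, \|\bY\|_2^2 \bV\bigr),
\]
which matches the conditional distribution of $\widetilde{\bX}^\top \bY$ computed above, completing the proof.

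There is really no substantive obstacle here: everything reduces to the Gaussian identity $\bE^\top \bY \mid \bY \sim \mathcal{N}(\bzero, \|\bY\|_2^2 \bI_p)$ together with the linearity of the expressions for $\widetilde{\bX}$ and $\widetilde{\bZ}_s$. The only thing worth emphasizing in the write-up is the independence of $\bE$ (resp.\ $\bZ$) from $(\bX, \bY)$, which is what legitimizes treating $\bP^\top \bX^\top \bY$ as a constant shift when conditioning; beyond that, the argument is purely a matching of mean and covariance for two conditionally Gaussian vectors.
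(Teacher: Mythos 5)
Your proposal is correct and follows essentially the same route as the paper's proof in Appendix \ref{ghostknockoffs_proof}: expand $\widetilde{\bX}^\top \bY = \bP^\top \bX^\top \bY + \bV^{1/2}\bE^\top\bY$, use $\bE^\top\bY \mid \bX,\bY \sim \mathcal{N}(\bzero, \|\bY\|_2^2\bI_p)$ by independence of $\bE$ from $(\bX,\bY)$, and match the resulting conditional Gaussian $\mathcal{N}(\bP^\top\bX^\top\bY, \|\bY\|_2^2\bV)$ with that of $\widetilde{\bZ}_s$. No gaps; the argument is complete.
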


\begin{proof}
By step 5 of Algorithm \ref{alg:gaussiankf}, we have $\widetilde{\bX}=\bX \bP+\bE\bV^{1/2}$, where $\bE$ is an $n$ by $p$ matrix with i.i.d.~standard Gaussian entries, independent of $\bX$. Therefore,
\begin{align*}
\widetilde{\bX}^\top \bY\mid\bX,\bY &\stackrel{}{=} \bP^\top \bX^\top \bY+\bV^{1/2}\bE^\top \bY\mid \bX,\bY.
\end{align*}
Because $\bE^\top \bY\mid \bX,\bY\sim \mathcal{N}(\bzero,||\bY||_2^2\bI_p)$, we have
$$\bE^\top \bY\mid \bX,\bY\stackrel{d}{=}||\bY||_2\bS\mid \bX,\bY,\quad \text{where} \;\bS \sim \mathcal{N}(\bzero,\bI_p) \;\text{is independent of} \;\bX \;\text{and} \;\bY$$

Thus, we have
\begin{align*}
\widetilde{\bX}^\top \bY\mid\bX,\bY &\stackrel{d}{=} \bP^\top \bX^\top \bY + ||\bY||_2\bV^{1/2}\bS\mid \bX,\bY\\
&\stackrel{d}{=}  \bP^\top \bX^\top \bY + ||\bY||_2\bZ\mid \bX,\bY \quad\text{where} \;\bZ \sim \mathcal{N}(\bzero,\bV) \;\text{is independent of} \;\bX \;\text{and} \;\bY.\\
&\stackrel{}{=}  \widetilde{\bZ}_s\mid \bX, \bY .
\end{align*}
\end{proof}

\section{Proof of Proposition \ref{prop:partial_ghost}}\label{proof:partial_ghost}

To prove Proposition \ref{prop:partial_ghost}, we need to first prove Lemma \ref{lm:orthogonal}.

\begin{lemma}\label{lm:orthogonal}
Let $\bZ_1$ and $\bZ_2$ be two real $n$ by $p$ matrices. For any $n$ and $p$, if $\bZ_1^\top \bZ_1 = \bZ_2^\top \bZ_2$, there must exists an orthogonal matrix $\bQ \in \mathbb{R}^{p\times p}$ such that $\bZ_1 = \bQ\bZ_2$.
\end{lemma}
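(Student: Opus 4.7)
The plan is to construct $\bQ$ as an orthogonal transformation that carries $\bZ_2$'s column space onto $\bZ_1$'s, extended arbitrarily on the orthogonal complement. I note that the stated dimension $\bQ\in\mathbb{R}^{p\times p}$ appears to be a typo, since for $\bQ\bZ_2$ to yield an $n\times p$ matrix one needs $\bQ\in\mathbb{R}^{n\times n}$; I will prove the claim with this corrected dimension.

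First, I would define the linear map $T:\mathrm{col}(\bZ_2)\to\mathrm{col}(\bZ_1)$ by $T(\bZ_2\bx)=\bZ_1\bx$ for $\bx\in\mathbb{R}^p$. The identity $\bZ_1^\top\bZ_1=\bZ_2^\top\bZ_2$ immediately gives $\|\bZ_1\bx\|_2^2=\bx^\top\bZ_1^\top\bZ_1\bx=\bx^\top\bZ_2^\top\bZ_2\bx=\|\bZ_2\bx\|_2^2$ for every $\bx$. This single identity does two jobs at once: it shows that $T$ is well-defined (if $\bZ_2\bx=\bZ_2\bx'$ then $\bZ_2(\bx-\bx')=0$, whence $\|\bZ_1(\bx-\bx')\|_2=0$ and thus $\bZ_1\bx=\bZ_1\bx'$), and it shows that $T$ is a linear isometry onto its image. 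In particular, $\mathrm{col}(\bZ_1)$ and $\mathrm{col}(\bZ_2)$ share the same dimension, namely the rank of the common Gram matrix.

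Second, since $T$ is a norm-preserving linear bijection between two equidimensional subspaces of $\mathbb{R}^n$, I would extend it to an orthogonal transformation $\bQ$ on all of $\mathbb{R}^n$ by fixing any orthogonal bijection between the orthogonal complements $\mathrm{col}(\bZ_2)^\perp$ and $\mathrm{col}(\bZ_1)^\perp$ (for instance, by picking orthonormal bases of each complement and mapping one to the other). The resulting $\bQ$ is orthogonal by construction, and $\bQ\bZ_2\bx=T(\bZ_2\bx)=\bZ_1\bx$ for every $\bx\in\mathbb{R}^p$, which is exactly $\bQ\bZ_2=\bZ_1$.

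I do not anticipate any serious obstacle; the only delicate point is the handling of degenerate directions, which is cleanly dispatched by the isometry-extension step rather than by working with singular vectors individually (where repeated singular values would require extra bookkeeping). As a sanity check, one may obtain the same $\bQ$ via the SVD: writing $\bZ_i=\bU_i\bD_i\bV_i^\top$, the equality of Gram matrices matches $\bV_i\bD_i^\top\bD_i\bV_i^\top$ and allows one to take $\bQ=\bU_1\bU_2^\top$ after an arbitrary orthogonal completion on the kernels, which coincides with the construction above on $\mathrm{col}(\bZ_2)$.
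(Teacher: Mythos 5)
Your proof is correct, and you are right that the stated dimension of $\bQ$ is a typo: as the lemma is used in the proof of Proposition \ref{prop:partial_ghost} (where $\bQ^\top$ left-multiplies the $n\times(p+1)$ matrix $[\bX\ \bY]$), the orthogonal matrix must be $n\times n$. Your route, however, differs from the paper's. The paper argues through an explicit eigen-decomposition of the common Gram matrix, $\bZ_1^\top\bZ_1=\bZ_2^\top\bZ_2=\bU\bLambda\bU^\top$ with $\bLambda$ containing only the nonzero eigenvalues, builds the partial isometries $\bQ_k=\bZ_k\bU\bLambda^{-1/2}$ satisfying $\bZ_k=\bQ_k\bLambda^{1/2}\bU^\top$ and $\bQ_k^\top\bQ_k=\bI_r$, and then completes $\bQ_1,\bQ_2$ to full orthogonal matrices $\bV_1,\bV_2$ and takes $\bQ=\bV_1\bV_2^\top$; the key facts are $\bZ_1=\bQ_1\bQ_2^\top\bZ_2$ and that the completion blocks annihilate $\bZ_2$. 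You instead bypass any decomposition: the single identity $\|\bZ_1\bx\|_2=\|\bZ_2\bx\|_2$ shows the map $\bZ_2\bx\mapsto\bZ_1\bx$ is well defined and a linear isometry of $\mathrm{col}(\bZ_2)$ onto $\mathrm{col}(\bZ_1)$ (inner products are preserved by polarization), and a Witt-type extension across the orthogonal complements yields the orthogonal $\bQ$ with $\bQ\bZ_2=\bZ_1$. Both arguments construct, in essence, the same object — an orthogonal map carrying one column space onto the other — but yours is more elementary and cleanly sidesteps the rank and degenerate-eigenvalue bookkeeping, while the paper's gives a fully explicit formula for $\bQ$ in terms of the data (and, incidentally, contains minor slips of its own, e.g.\ the completion blocks $\bQ_k^\perp$ should lie in $\mathbb{R}^{n\times(n-r)}$). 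Either proof suffices for the downstream use in Proposition \ref{prop:partial_ghost}.
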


\begin{proof}
Suppose $\bZ_1^\top \bZ_1 = \bZ_2^\top \bZ_2 = \bU\bLambda \bU^\top $, where $\bU\in R^{p\times r}$ is an orthogonal matrix such that $\bU^\top \bU=\bI_r$,  $\bLambda\in R^{r\times r}$ is diagonal with positive entries and $r$ is the rank of $\bZ_1^\top \bZ_1$. In other words, we perform eigen-decomposition of $\bZ_1^\top \bZ_1 = \bZ_2^\top \bZ_2$ and remove all zero eigenvalues and their corresponding eigenvectors. Note that $\bU\bU^\top $ is a projection matrix that projects any vector onto $\textit{colspace}(\bU)$, the column space of $\bU$.

 It is clear that
$$\textit{colspace}(\bU\Lambda \bU^\top ) \subseteq \textit{colspace}(\bU).$$ Because $\bU = (\bU\bLambda \bU^\top )\bU\bLambda^{-1}$, we also have $$\textit{colspace}(\bU) \subseteq \textit{colspace}(\bU\Lambda \bU^\top ).$$ As a result, we have $\textit{colspace}(\bU\Lambda \bU^\top ) =\textit{colspace}(\bU)$.

Thus, for $k=1,2$, 
$\bU\bU^\top $ is a projection matrix that projects any vector onto the column space of $\bU\bLambda \bU^\top  = \bZ_k^\top \bZ_k$. Because $\textit{colspace}(\bZ_k^\top \bZ_k)=\textit{rowspace}(\bZ_k)$, we have $$\bZ_k =\bZ_k\bU\bU^\top=\bZ_k\bU\bLambda^{-1/2}\bLambda^{1/2}\bU^\top.$$
Let $\bQ_k = \bZ_k\bU\bLambda^{-1/2}$, we have $\bZ_k=\bQ_k\bLambda^{1/2}\bU^\top$ 
and 
$$\bQ_k^\top \bQ_k = \bLambda^{-1/2}\bU^\top \bZ_k^\top \bZ_k \bU\bLambda^{-1/2}= \bLambda^{-1/2}\bU^\top \bU\bLambda \bU^\top \bU\bLambda^{-1/2}=\bI_r, \quad (k=1,2).$$
Thus, we have 
\begin{align*}
\bZ_1 &= \bQ_1\bLambda^{1/2}\bU^\top  = \bQ_1\bQ_2^\top \bQ_2\bLambda^{1/2}\bU^\top  = \bQ_1\bQ_2^\top \bZ_2,\\
\bZ_2 &= \bQ_2\bLambda^{1/2}\bU^\top  = \bQ_2\bQ_1^\top \bQ_1\bLambda^{1/2}\bU^\top  = \bQ_2\bQ_1^\top \bZ_1.
\end{align*}

Because $\bQ_1^\top \bQ_1=\bQ_2^\top \bQ_2=\bI_r$, there exist $\bQ_1^\perp,\bQ_2^\perp\in R^{p\times (p-r)}$ such that  $\bV_1 = \begin{bmatrix}
\bQ_1 & \bQ_1^{\perp} 
\end{bmatrix}$ and $\bV_2 = \begin{bmatrix}
\bQ_2 & \bQ_2^{\perp} 
\end{bmatrix}$ are both orthogonal matrices. Thus, we have
\begin{align*}
    \bZ_1 = \bQ_1\bQ_2^\top \bZ_2=(\bV_1\bV_2^\top  - \bQ_1^{\perp}(\bQ_2^{\perp})^\top )\bZ_2\\
    \bZ_2 = \bQ_2\bQ_1^\top \bZ_1=(\bV_2\bV_1^\top  - \bQ_2^{\perp}(\bQ_1^{\perp})^\top )\bZ_1
\end{align*}
Substituting $\bZ_1 = \bQ_1\bQ_2^\top \bZ_2$ in $\bZ_2 = \bQ_2\bQ_1^\top \bZ_1$, we have
$$\bZ_2 = \bQ_2\bQ_1^\top \bQ_1\bQ_2^\top \bZ_2 = \bQ_2\bQ_2^\top \bZ_2$$
and thus 
$$\bQ_1^{\perp}(\bQ_2^{\perp})^\top )\bZ_2=\bQ_2\bQ_2^\top \bZ_2=\mathbf{0}.$$
Thus, these exists an orthogonal matrix $\bQ = \bV_1\bV_2^\top $ such that $\bZ_1 = \bQ\bZ_2$.

\end{proof}

We can then prove Proposition \ref{prop:partial_ghost} as follows.
By Lemma \ref{lm:orthogonal}, since $[\widecheck{\bX}\ \widecheck{\bY}]^\top [\widecheck{\bX}\ \widecheck{\bY}] = [\bX\ \bY]^\top [\bX\ \bY]$, we know that $[\widecheck{\bX}\ \widecheck{\bY}] = \bQ^\top [\bX \ \bY]$ for some orthogonal matrix $\bQ$. 

Let $\bE \in \bbR^{n \times p}$ be a matrix with i.i.d.~standard Gaussian entries, we have $\bQ\bE$ is also a matrix with i.i.d.~standard Gaussian entries (i.e. $\bE\stackrel{d}{=}\bQ\bE$) and \begin{align*}
(\bE^\top [\widecheck{\bX}\ \widecheck{\bY}],\bE^\top \bE)\mid \bX,\bY &\stackrel{}{=} (\bE^\top \bQ^\top [\bX\ \bY],\bE^\top \bQ^\top\bQ\bE)\mid \bX,\bY \\ &\stackrel{d}{=} (\bE^\top [\bX\ \bY],\bE^\top \bE)\mid \bX,\bY.
\end{align*}

By the construction of $[\widecheck{\bX}\ \widecheck{\bY}]^\top [\widecheck{\bX}\ \widecheck{\bY}]$, we have that $\widecheck{\bX}^\top \widecheck{\bX}=\bX^\top \bX$, $\widecheck{\bX}^\top \widecheck{\bY}=\bX^\top \bY$ and $\norm{\widecheck{\bY}}_2 =\norm{\bY}_2$. Therefore, we focus on the third, fifth and sixth arguments of $\mathcal{T}$ where
\begin{align*}
&(\widetilde{\bX}^\top \bY,\bX^\top \bX,\widetilde{\bX}^\top \widetilde{\bX})\mid \bX,\bY \\ \stackrel{}{=}& \; (\bP^\top \bX^\top \bY+\bV^{1/2}\bE^\top \bY,\bP^\top \bX^\top \bX+\bV^{1/2}\bE^\top \bX, \bP^\top \bX^\top \bX \bP+\bV^{1/2}\bE^\top \bE\bV^{1/2}+\\& \bP^\top \bX^\top \bE\bV^{1/2}+\bV^{1/2}\bE^\top \bX \bP)\mid \bX,\bY \\ \stackrel{d}{=}& \; (\bP^\top \widecheck{\bX}^\top \widecheck{\bY}+\bV^{1/2}\bE^\top \widecheck{\bY},\bP^\top \widecheck{\bX}^\top \widecheck{\bX}+\bV^{1/2}\bE^\top \widecheck{\bX}, \bP^\top \widecheck{\bX}^\top \widecheck{\bX}\bP+\bV^{1/2}\bE^\top \bE\bV^{1/2}+\\& \bP^\top \widecheck{\bX}^\top \bE\bV^{1/2}+\bV^{1/2}\bE^\top \widecheck{\bX}\bP)\mid \bX,\bY \\ \stackrel{}{=}& \; (\widetilde{\widecheck{\bX}}^\top \widecheck{\bY},\widetilde{\widecheck{\bX}}^\top \widecheck{\bX},\widetilde{\widecheck{\bX}}^\top \widetilde{\widecheck{\bX}})\mid \bX,\bY. 
\end{align*} 
Hence,
$$\mathcal{T}(\bX,\widetilde{\bX},\bY)\mid \bX,\bY  \stackrel{d}{=} \mathcal{T}(\widecheck{\bX},\widetilde{\widecheck{\bX}},\widecheck{\bY})\mid \bX,\bY.$$

\section{Construction of $[\widecheck{\bX} \ \widecheck{\bY}]$ via eigen-decomposition} \label{construction}

In this section, we give details on how to construct $[\widecheck{\bX} \ \widecheck{\bY}]$ such that $[\widecheck{\bX} \ \widecheck{\bY}]^\top [\widecheck{\bX} \ \widecheck{\bY}] = [\bX \ \bY]^\top [\bX \ \bY]$ using eigen-decomposition,
$$[\bX \ \bY]^\top [\bX \ \bY]=\bU\bD\bU^\top,\quad \text{where }\bU=[\textbf{u}_1\ \ldots\ \textbf{u}_{p+1}]\text{ is an orthogonal matrix, }\bD=\text{diag}(d_1,\ldots,d_{p+1}),$$
with $d_1\geq\cdots\geq d_{p+1}$.
We consider two cases as follows. \\ \\
\noindent
\textbf{Case 1} ($n < p+1$): Since $\text{rank}([\bX \ \bY]^\top [\bX \ \bY]) \leq n$, we have $d_{n+1}=\cdots=d_{p+1}=0$ and 
$$[\bX \ \bY]^\top [\bX \ \bY] = \bU_1\bD_n\bU_1^\top ,$$
where $\bU_1=[\textbf{u}_1\ \ldots\ \textbf{u}_{n}]$, and $\bD_n=\text{diag}(d_1,\ldots,d_{n})$. Under this case, we let $[\widecheck{\bX} \ \widecheck{\bY}]=\bD_n^{1/2}\bU_1^\top$ such that $[\widecheck{\bX} \ \widecheck{\bY}]^\top [\widecheck{\bX} \ \widecheck{\bY}] = [\bX \ \bY]^\top [\bX \ \bY]$ is satisfied.
\\ \\
\noindent
\textbf{Case 2} ($n \ge p+1$): Under this case, we let $$[\widecheck{\bX} \ \widecheck{\bY}] = 
\begin{bmatrix} 
\bD^{1/2}\bU^\top  \\ \bzero_{(n-p-1) \times (p+1)}
\end{bmatrix}$$
such that
$$[\widecheck{\bX} \ \widecheck{\bY}]^\top [\widecheck{\bX} \ \widecheck{\bY}] = \bU\bD\bU^\top  = [\bX \ \bY]^\top [\bX \ \bY]$$ 
is satisfied.

\section{Computation of the tuning parameter $\lambda$ for the lasso-min method} 
\label{app:tuning_parameter}

Suppose we had access to individual level data such that Gaussian knockoffs $\widetilde{\bX}$ can be constructed, we can follow the method of \citet{variance-estimation} to estimate the noise level $\sigma$ by \begin{equation}
\label{eqn:sigma_hat}
\widehat{\sigma}_0=\sqrt{\text{max}\Lp\frac{2p+n+1}{n(n+1)}\norm{\bY}_2^2-\frac{\bY^\top \begin{bmatrix} \bX & \widetilde{\bX} \end{bmatrix}\bG^{-1}\begin{bmatrix} \bX & \widetilde{\bX} \end{bmatrix}^\top \bY}{n(n+1)},0\Rp},\quad \text{ where } \bG=\begin{bmatrix}
\bSigma & \bSigma-\bD 
\\ \bSigma-\bD & \bSigma
\end{bmatrix}.\end{equation} We could then compute $\lambda = \kappa \cdot \frac{\widehat{\sigma}_0}{n} \cdot \mathbb{E}[\norm{\bR^\top \bepsilon}_\infty]$, where $\bR \in \mathbb{R}^{n\times 2p}$ is a data matrix whose rows are i.i.d. samples from $\mathcal{N}(\bzero,\bG)$, and $\bepsilon \sim \mathcal{N}(\bzero,\bI_n)$ is independent of $\bR$. In the summary statistics setting, we replace $\widetilde{\bX}^\top \bY$ in \eqref{eqn:sigma_hat} by $\bP^\top \bX^\top \bY+\norm{\bY}_2\bZ$, where $\bP$ and $\bZ$ are obtained in Algorithm \ref{alg:fullghost}.

The expectation $\mathbb{E}[\norm{\bR^\top \bepsilon}_\infty]$ can be computed using Monte Carlo integration. However, when both $n$ and $p$ are very large, sampling $\bR$ and $\bepsilon$ becomes too time-consuming. Observing that $$\bR^\top \bepsilon=\bR^\top \frac{\bepsilon}{\norm{\bepsilon}_2}\norm{\bepsilon}_2$$ where 
$\bR^\top \frac{\bepsilon}{\norm{\bepsilon}_2}\sim \mathcal{N}(\bzero,\bG)$ and $\bepsilon$ are independent, we have $$\mathbb{E}[\norm{\bR^\top \bepsilon}_\infty]=\mathbb{E}[\norm{N(\bzero,\bG)}_\infty]\mathbb{E}[\norm{\bepsilon}_2]=\mathbb{E}[\norm{N(\bzero,\bG)}_\infty]\cdot \sqrt{2}\frac{\Gamma\{(n+1)/2\}}{\Gamma(n/2)}.$$ By Stirling's formula that
$${\displaystyle \Gamma (z)={\sqrt {\frac {2\pi }{z}}}\,{\left({\frac {z}{e}}\right)}^{z}\left(1+O\left({\frac {1}{z}}\right)\right),}$$
we have $$\sqrt{2}\frac{\Gamma\{(n+1)/2\}}{\Gamma(n/2)} \sim \sqrt{n}.$$
Therefore, we may approximate 
$$\mathbb{E}[\norm{\bR^\top \bepsilon}_\infty]\approx \sqrt{n}\;\mathbb{E}[\norm{\bL\bZ}_\infty],$$ 
where $\bL$ is the Cholesky decomposition of $\bG$ and $\bZ\sim \mathcal{N}(\bzero,\bI_{2p})$.

In practice, the simulated $\norm{\bL\bZ}_\infty$ usually concentrates around its mean as shown in Figure \ref{fig:concentration}. Thus, only several Monte Carlo samples are needed to accurately estimate $\mathbb{E}[\norm{\bL\bZ}_\infty]$, and we draw $10$ Monte Carlo samples throughout numerical experiments of this paper.

\begin{figure}
\centering
\includegraphics[width=0.8\textwidth]{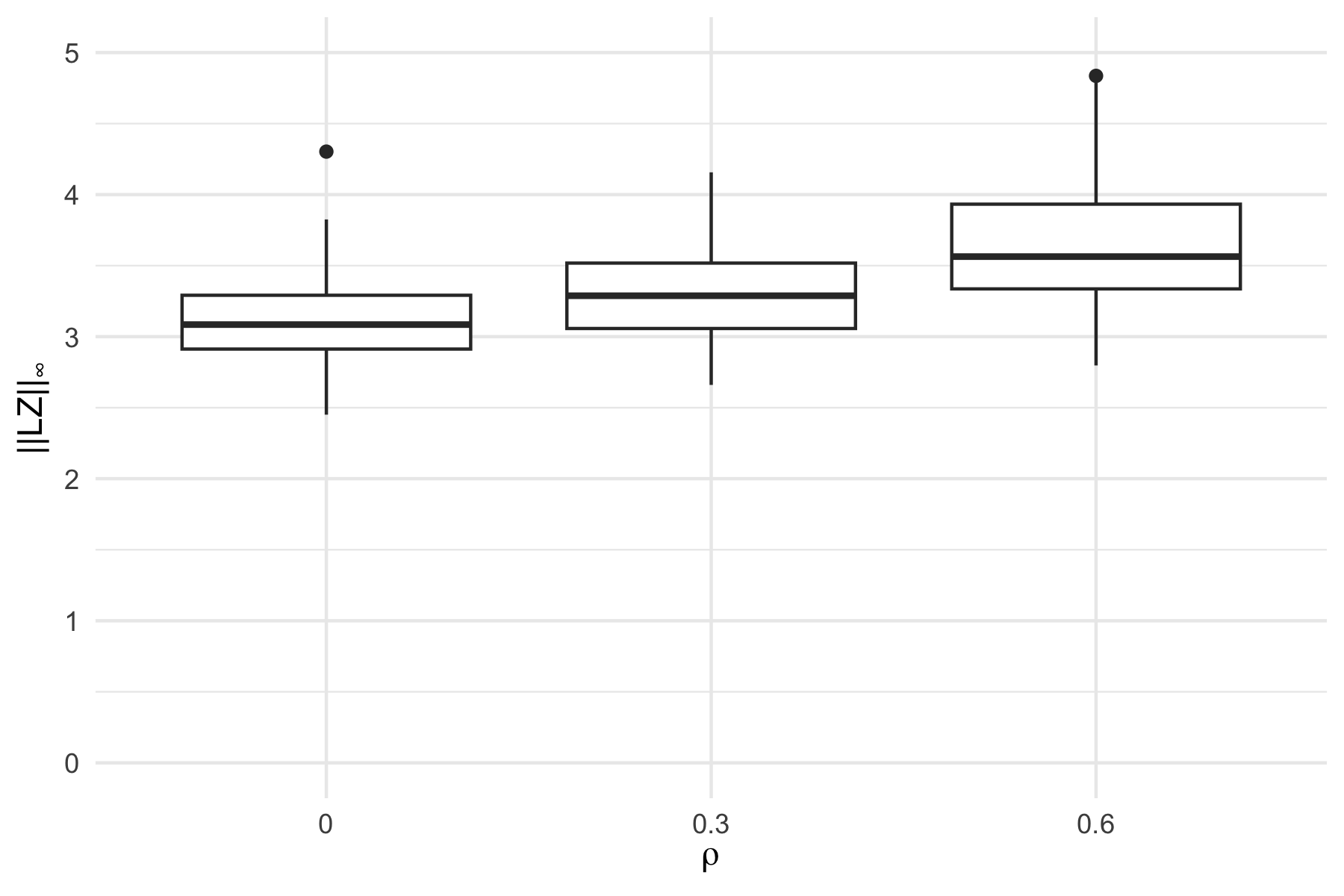}
\caption{Boxplots of 100 simulated samples $\norm{\bL\bZ}_\infty$ with $p=200$, $\bSigma_{i,j} = \rho^{|i-j|}$ and $\bD$ obtained via \eqref{eqn:sdp} for different $\rho$ values.}
\label{fig:concentration}
\end{figure}

Next, we prove that Algorithm \ref{alg:fullghost} maintains FDR control when $\lambda$ is computed as described in Section \ref{section:full_tuning}. By Proposition \ref{proposition:full_ghost}, it suffices to show that \eqref{eqn:fullghost} with the computed $\lambda$  produces feature importance statistics that satisfy the flip sign property. By $\lambda \approx \kappa \cdot\frac{\widehat{\sigma}_0}{n}\cdot\mathbb{E}[\norm{\bR^\top \bepsilon}_\infty]$, it suffices to show that $\hat{\sigma}_0$ is invariant to swapping variables with their knockoffs \citep{model-x}.

Let $\bPi_j \in \mathbb{R}^{2p\times 2p}$ be the permutation matrix that swaps the $j$-th column with the $(j+p)$-th column of a matrix. Thus, we have $\bPi_j^{-1}=\bPi_j^\top=\bPi_j$. 
This leads to 
\begin{align*}
    \bY^\top \begin{bmatrix} \bX & \widetilde{\bX} \end{bmatrix}\bPi_j(\bG)^{-1}(\begin{bmatrix} \bX & \widetilde{\bX} \end{bmatrix}\bPi_j)^\top \bY =& \bY^\top \begin{bmatrix} \bX & \widetilde{\bX} \end{bmatrix}\bPi_j(\bG)^{-1}\bPi_j^\top \begin{bmatrix} \bX & \widetilde{\bX} \end{bmatrix}^\top \bY\\=& \bY^\top \begin{bmatrix} \bX & \widetilde{\bX} \end{bmatrix}(\bPi_j\bG\bPi_j)^{-1} \begin{bmatrix} \bX & \widetilde{\bX} \end{bmatrix}^\top \bY\\=& \bY^\top \begin{bmatrix} \bX & \widetilde{\bX} \end{bmatrix}(\bG)^{-1}\begin{bmatrix} \bX & \widetilde{\bX} \end{bmatrix}^\top \bY,
\end{align*}
suggesting $\hat{\sigma}_0$ is invariant to swapping variables with their knockoffs \citep{model-x}. Therefore, the FDR of Algorithm \ref{alg:fullghost} is controlled when $\lambda$ is computed as described in Section \ref{section:full_tuning}.

Since all variables in $\widetilde{\bX}$ are null, in practice we may replace $\begin{bmatrix} \bX & \widetilde{\bX} \end{bmatrix}$ by $\bX$, $\bG$ by $\bSigma$ and $2p+n+1$ by $p+n+1$ in \eqref{eqn:sigma_hat} to reduce the dimension when estimating $\sigma$. Although this would, in theory, break the flip-sign property required for FDR control, no FDR inflation is observed in our simulations.

\section{Connection with the scout procedure} \label{app:scout}

In this section, we explain the connection of the feature importance statsitic defined in Algorithm \ref{alg:fullghost} and the scout procedure  \citep{scout}.

For covariates $X \in \bbR^p$ and response $Y\in \bbR$, \cite{scout} assume that $\begin{bmatrix}
X \\ Y
\end{bmatrix} \sim \mathcal{N}(\bzero,\bSigma_{X,Y})$. The population linear regression coefficient of $Y$ on $X$, which induces a linear predictor that achieves the minimal mean squared prediction error, is given by $\bbeta = -\bTheta_{XY}/\bTheta_{YY}$, where $\bTheta=\begin{bmatrix} \bTheta_{XX} & \bTheta_{XY} \\ \bTheta_{YX} & \bTheta_{YY} \end{bmatrix} = \bSigma_{X,Y}^{-1}$ is the precision matrix. Let $\bS$ be the empirical covariance matrix of $X$ and $Y$, they consider the following covariance-regularized regression approach to estimate $\bbeta$,

\begin{enumerate}
\item Compute $\hat{\bTheta}_{XX}$ to maximize $\log\{\det(\bTheta_{XX})\} - \text{tr}(\bS_{XX}\bTheta_{XX}) - J_1(\bTheta_{XX})$
\item Compute $\hat{\bTheta}$ to maximize $\log\{\det(\bTheta)\} - \text{tr}(\bS\bTheta) - J_2(\bTheta)$ subject to $\bTheta_{XX} = \hat{\bTheta}_{XX}$ obtained from Step 1.
\item Compute $\hat{\bbeta}=-\hat{\bTheta}_{XY}/\hat{\Theta}_{YY}$.
\item Compute $\hat{\bbeta}^*=c\hat{\bbeta}$ where $c$ is the regression coefficient of $\bY$ onto $\bX \hat{\bbeta}$.
\end{enumerate}
Here, $J_1$ and $J_2$ are two penalty functions. The first two steps are to appropriately separate true conditional correlations from those purely due to noise. As shown in \cite{scout}, when $J_2(\Theta)=\lambda_2{\norm{\bTheta}_1}$ (resp. $\lambda_2{\norm{\bTheta}_2^2}$), the solution to step 3 is proportional to the solution  of $$\hat{\bbeta} = \argmin_{\bbeta} \bbeta^\top \bG_{XX}\bbeta-2\bS_{XY}^\top \bbeta+\lambda_2\norm{\bbeta}_1 \; (\text{resp.} \; \lambda_2\norm{\bbeta}_2^2),$$
where $\bG_{XX}$ is the inverse of the solution $\hat{\bTheta}_{XX}$ from step 1. In other words, the Lasso corresponds to the setting that $J_1=0$ and $\bG_{XX}=\bS_{XX}$. \cite{scout} consider various settings in which they demonstrate the superiority of the scout procedure over the Lasso, Ridge and Elastic Net. 
In the setting of Section \ref{section:fullghost}, we have $\text{cov}(X,\widetilde{X})=\begin{bmatrix}
\bSigma & \bSigma-\bD \\
\bSigma-\bD & \bSigma
\end{bmatrix}$. Therefore, the objective function \eqref{eqn:fullghost} corresponds to the case that the true $\bTheta_{XX}$ is used in step 1 (here we include both $X$ and $\widetilde{X}$ as explanatory variables). 

\section{Construction of group knockoffs and examples of importance scores at the group level}\label{sec:group_knockoff_appendix}

For group knockoffs, we test the group conditional independence hypothesis:
\begin{align*}
H_{\gamma}^0: X_{\gamma} \indep Y \mid X_{-\gamma}
\end{align*} 
where $\gamma \in \{1,...,g\}$ denotes a group and $X_\gamma$ is the vector of features in group $\gamma$.

 In addition to the 
 conditional independence \eqref{Conditional independence}, group knockoffs $\widetilde{\bX}$ must satisfy 
 the group exchangeability condition that
$$\textbf{(Group exchangeability):} \; (\bX_\gamma, \widetilde{\bX}_\gamma, \bX_{-\gamma}, \widetilde{\bX}_{-\gamma})\stackrel{d}{=} (\widetilde{\bX}_\gamma, \bX_\gamma, \bX_{-\gamma}, \widetilde{\bX}_{-\gamma}),\; \forall \; \gamma \in \{1,...,g\}.$$
No exchangeability property is required for features within the same group, which allows greater flexibility in the construction knockoffs.

When $X \sim \mathcal{N}(\mathbf{0}, \bSigma)$, the group exchangeability condition allows the diagonal matrix $\bD = \text{diag}(\bs)$ described in Sections \ref{section:model-x} and \ref{section:fullghost} becomes a block-diagonal matrix $\bD = \text{diag}(\bS_1,...,\bS_g)$, where $\bS_\gamma$ is a symmetric matrix whose dimension equals the number of variables in group $\gamma$ ($\gamma \in \{1,...,g\}$). With the block-diagonal matrix $\bD$ obtained following the SDP construction of \cite{chu2023second} in step 3, Algorithm \ref{alg:gaussiankf} can construct valid group knockoffs $\widetilde{\bX}$ of $\bX$ with respect to $g$ feature groups. Analogously, Algorithms \ref{alg:ghostkf}-\ref{alg:fullghost} can also be modified correspondingly to perform inference of $H_{\gamma}^0$'s.

Although it is conceptually straightforward to modify $\bD$ from a diagonal to a block-diagonal matrix, note that doing so introduces significantly more optimization variables. To reduce computational burden in practice, we exploit a form of conditional independence across groups, described in section 4 of \citep{chu2023second}. The main idea is to select a few key variables in each group that capture most within-group variations, and perform a reduced optimization problem only on the key variables. In the real data analysis result, we defined groups via average-linkage hierarchical clustering with correlation cutoff $0.5$, selected representatives within groups via Algorithm A1 of \cite{chu2023second} with $c=0.5$, and replaced objective \eqref{eqn:sdp} by the maximum entropy (ME) objective, which has improved power over SDP constructions in simulations. 

In this paper, we use $M$ multi-knockoffs. To define the importance score for group $\gamma$ and its knockoffs, we sum the effect for variants in each group. With $M$ knockoff copies, we explicitly compute $Z_{\gamma} = \sum_{i \in \mathcal{A}_{\gamma}} |\beta_i|$ and $\widetilde{Z}_{\gamma}^{(\ell)} = \sum_{i \in \mathcal{A}_{\gamma}} |\widetilde{\beta}^{(\ell)}_{i}| $ for $\ell= 1,...,M$, where $\bbeta = (\bbeta, \widetilde{\bbeta}^1,...,\widetilde{\bbeta}^M)$ is the estimated effect sizes from step 4 of Algorithm \ref{alg:fullghost}. One may use other choices of feature importance such as the $l_2$ norm. The group-wise Lasso coefficient difference is then defined as 
\begin{align*}
    W_\gamma = (Z_\gamma - \operatorname{median}(\widetilde{Z}_\gamma^{(1)},...,\widetilde{Z}_\gamma^{(M)}))I_{Z_\gamma \ge \operatorname{max}(\widetilde{Z}^{(1)}_\gamma,...,\widetilde{Z}^{(M)}_\gamma)}
\end{align*}
and groups with $W_\gamma > \tau$ are selected, where $\tau$ is calculated from the multiple knockoff filter \citep{multipleknockoffs}. Note that $W_\gamma$ is the feature importance statistic first introduced in \cite{he2021identification}. 

\section{Ghostknockoffs for CRT (\textit{GhostCRT})} \label{ghostcrt}

Let $\bX\in \mathbb{R}^{n\times p}$ and $\bY\in\mathbb{R}^n$ be the covariate matrix and the response vector respectively. Recall that in the conditional randomization test, to test $H_j:X_j\indep Y\mid X_{-j}$, \cite{model-x} draw i.i.d. samples $\widetilde{\bX}_j^1,\ldots,\widetilde{\bX}_j^B\sim \mathcal{L}(\bX_j|\bX_{-j})$ ($\bX_j$ is the $j$-th column of the covariate matrix $\bX$) and compute the CRT $p$-value as \begin{equation}\label{CRT_pvalue}
    p_j =\frac{1}{B+1}\left[1+\sum_{b=1}^B\mathbbm{1}_{T(\widetilde{\bX}_j^b,\bX_{-j},\bY)\ge T(\bX_j,\bX_{-j},\bY) }\right],
\end{equation} for some feature importance function $T$. 

Under the assumption that rows of $\bX$ are i.i.d. samples from $\mathcal{N}(\bzero,\bSigma)$, we can generate $\widetilde{\bX}_j^1,\ldots,\widetilde{\bX}_j^B$ by \begin{equation}\label{CRT_j}
    \widetilde{\bX}_j^b=\bX_{-j}\bgamma_j+v_j^{1/2} \bE_j^b,
\end{equation} where $\bgamma_j=\bSigma_{-j,-j}^{-1}\bSigma_{-j,j}\in\mathbb{R}^{p-1}$, $v_j=\Sigma_{j,j}-\bSigma_{j,-j}\bSigma_{-j,-j}^{-1}\bSigma_{-j,j}$, and $\bE_j^1,\ldots,\bE_j^B \stackrel{iid}{\sim} \mathcal{N}(0,\bI_n)$ are independent of everything else.
Utilizing the analogy between \eqref{CRT_all}
and \eqref{CRT_j}, we develop the \textit{GhostCRT} with counterparts of Algorithms \ref{alg:ghostkf}-\ref{alg:partialghost} as follows, while the counterpart Algorithm \ref{alg:fullghost} is derived in the similar way.

\begin{algorithm}[H]
\caption{GhostKnockoffs with Marginal Correlation Difference Statistic for CRT}\label{alg:ghostkf_GhostCRT}
\begin{algorithmic}[1]
\STATE \textbf{Input}: $\bX^\top \bY$, $||\bY||_2^2$, and $\bSigma$.
\FOR{$j=1,\ldots,p$}
\STATE Compute $\bgamma_j=\bSigma_{-j,-j}^{-1}\bSigma_{-j,j}\in\mathbb{R}^{p-1}$ and $v_j=\Sigma_{j,j}-\bSigma_{j,-j}\bSigma_{-j,-j}^{-1}\bSigma_{-j,j}$.
\FOR{$b=1,\ldots,B$}
\STATE Generate $\widetilde{Z}^b_j = \bgamma_j^\top\bX_{-j}^\top \bY + ||\bY||_2Z_j^b$ where $Z_j^b\sim \mathcal{N}(0,v_j)$ and is independent of everything else.
\ENDFOR
\STATE Compute the CRT $p$-value $p_j$ via \eqref{CRT_pvalue} with $T(\bX_j,\bX_{-j},\bY)=|\bX_j^\top \bY|$ and $T(\widetilde{\bX}_j^b,\bX_{-j},\bY)=\widetilde{Z}^b_j$.
\ENDFOR
\STATE \textbf{Output}: Selection set by conducting existing multiple testing procedures on CRT $p$-values $p_1,\ldots,p_p$.
\end{algorithmic}
\end{algorithm}

\begin{algorithm}[H]
\caption{GhostKnockoffs with Penalized Regression for CRT: Known Empirical Covariance}\label{alg:partialghost_GhostCRT}
\begin{algorithmic}[1]
\STATE \textbf{Input}: $\bX^\top \bX, \bX^\top \bY, ||\bY||_2^2$, $\bSigma$, and $n$. \vspace{1mm}
\STATE Find $\widecheck{\bX}$ and $\widecheck{\bY}$ such that $[\widecheck{\bX}\ \widecheck{\bY}]^\top [\widecheck{\bX}\ \widecheck{\bY}] = [\bX\ \bY]^\top [\bX\ \bY]$ by eigen-decomposition or Cholesky decomposition.
\FOR{$j=1,\ldots,p$}
\STATE Compute $\bgamma_j=\bSigma_{-j,-j}^{-1}\bSigma_{-j,j}\in\mathbb{R}^{p-1}$ and $v_j=\Sigma_{j,j}-\bSigma_{j,-j}\bSigma_{-j,-j}^{-1}\bSigma_{-j,j}$.
\FOR{$b=1,\ldots,B$}
\STATE Generate $\widetilde{\widecheck{\bX}}_j^b$ via \eqref{CRT_j}
using $\widecheck{\bX}_{-j}$ as input.
\ENDFOR
\STATE Compute the CRT $p$-value $p_j$ via \eqref{CRT_pvalue} and replacing $\widetilde{\bX}_j^b$ by $\widetilde{\widecheck{\bX}}_j^b$ with feature importance statistic defined by
$T(\bX_j,\bX_{-j},\bY)=|\hat{\beta}_j|,$ where 
$$(\hat{\beta}_j,\hat{\bbeta}_{-j}) = \argmin_{(\beta_j,\bbeta_{-j}) \in \mathbb{R}^p}\frac{1}{2}||\bY-\bX_j{\beta}_j-\bX_{-j}{\bbeta}_{-j}||_2^2+\lambda||(\beta_j,\bbeta_{-j})||_1.$$
\ENDFOR
\STATE \textbf{Output}: Selection set by conducting existing multiple testing procedures on CRT $p$-values $p_1,\ldots,p_p$.
\end{algorithmic}
\end{algorithm}

As \eqref{CRT_j} is a special case of \eqref{CRT_all} where \begin{itemize}
    \item $\textbf{P}$ is obtained by substituting the $(j,j)$-entry and other entries in the $j$-th column of $\bI_p$ by $0$ and $\bgamma_j$ respectively;
    \item $\textbf{V}$ is a matrix of zeros expect the $(j,j)$-entry equals $v_j$,
\end{itemize}
all theoretical results in Sections \ref{section:model-x}-\ref{section:fullghost} remain true for the \textit{GhostCRT}.

\begin{remark}
In Algorithm \ref{alg:partialghost_GhostCRT}, the tuning parameter $\lambda$ is allowed to depend on $\bX^\top \bX$, $\bX^\top \bY$, $\bY^\top \bY$ and $n$. We may also use the square-root Lasso or the Lasso-max importance statistic as outlined in Sections \ref{GK-sqrtlasso} and \ref{GK-lassomax}.
\end{remark}

\section{Additional results for Section \ref{subsubsec:full_ghost_indep}} \label{app:add_full_gk_plots}
To further demonstrate the effect of sample size on the new GhostKnockoffs methods in comparison to the individual level knockoffs with (cross-validated) Lasso coefficient difference, we consider additional experiments with $p=600$ and $n=600/1800/3000$ under the same setting of Section \ref{subsubsec:full_ghost_indep}. Note that the noise level scales in the order of $\sqrt{n}$ such that the signal to noise ratio does not change dramatically. From Figure \ref{fig:full_ghost_increased_n_indep}, we observe that as $n$ increases, all three new methods proposed in this paper have comparable power with KF-lassocv and outperform GK-marginal \citep{ghostknockoffs} consistently, with FDR controlled in all cases.

\begin{figure}[htbp] 
  \centering
  \includegraphics[width=0.8\textwidth]{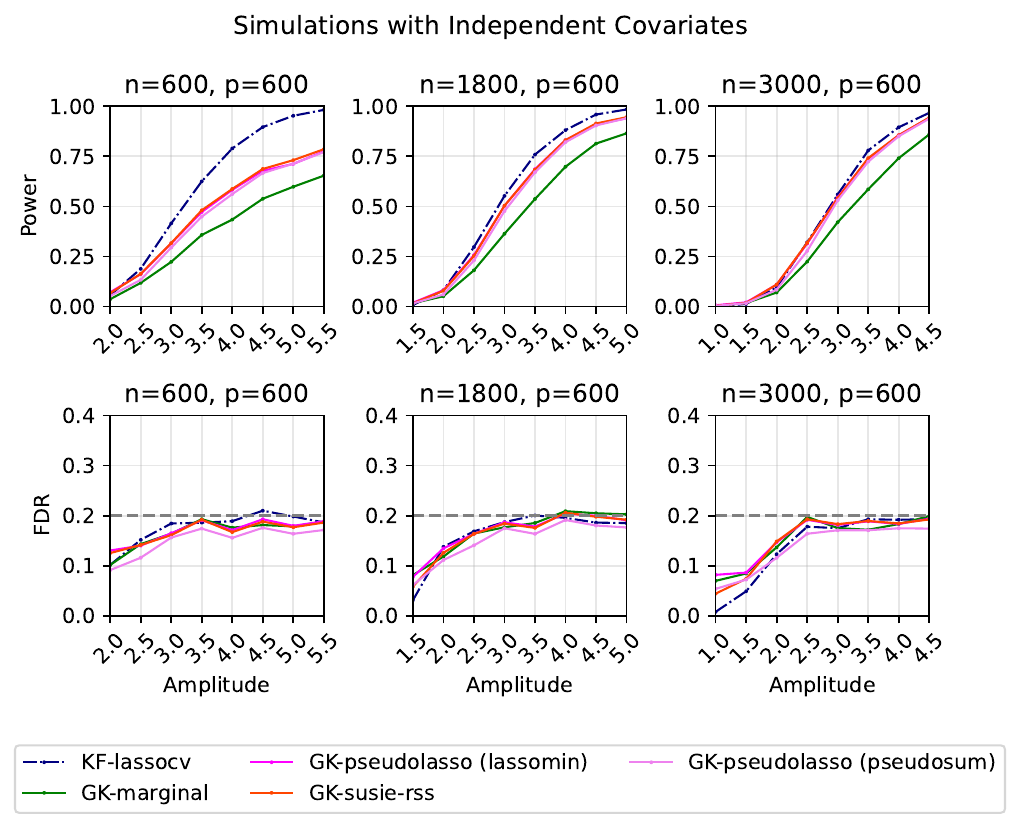}
  \caption{Power and FDR plots for independent features and a Gaussian linear model with varying sample sizes and a fixed feature dimension. Each point shown is an average over 200 replications.}
  \label{fig:full_ghost_increased_n_indep}
\end{figure}

\section{Supplementary plots for Section \ref{subsubsec:full_ghost_real}} \label{app:supplementary_semi_plots}

Let $\bZ = \bX^\top \bY$ and $\tilde{\bZ} = \bP^\top \bX^\top \bY+\norm{\bY}_2\bZ$ be defined as in Algorithm \ref{alg:fullghost}. Note that for Algorithm \ref{alg:fullghost} to control the FDR, it suffices to require that swapping the $j-$th entry of $\bZ$ with the $j-$th entry of $\tilde{\bZ}$ does not change the joint distribution of $(\bZ, \tilde{\bZ})$ for each null $j$.

By the Central Limit Theorem, short sets of entries (e.g. single entries, pairs, and triples etc.) of $(\bZ, \tilde{\bZ})$ are approximately Gaussian. Additionally, in Figure \ref{fig:exchangeability_Z}, we show empirically that the covariance of $(\bZ, \tilde{\bZ})$ (approximately) satisfies the required swap-invariance for null positions. These approximations, coupled with the robustness of the knockoff framework, empirically yield the FDR control. This is similar to the robustness of second-order knockoffs observed empirically in \citet{model-x}.

In the setting from Section \ref{subsubsec:full_ghost_real}, Figure \ref{fig:normality_Z} depicts the ordered empirical values of $Z_j$ (respectively $\tilde{Z}_j$) plotted against an equal-size ordered random sample from a Gaussian distribution with matching mean and variance as the empirical mean and variance of $Z_j$ (respectively $\tilde{Z}_j$), for three randomly selected indices. This comparison is based on the 1000 sub-sampled data replications from Section \ref{subsubsec:full_ghost_real}. In Figure \ref{fig:normality_Z_all}, we overlay the plots for all indices. We observe that $Z_j$ and $\tilde{Z}_j$ approximately follow Gaussian distributions. In Figure \ref{fig:exchangeability_Z}, we present the scatter plots of relevant empirical covariances. We observe that all the points roughly concentrate around the $y=x$ line. This shows the approximate swap-invariance of $\bZ$ and $\tilde{\bZ}$ (for null indices).

\begin{figure}[htbp] 
  \centering
  \includegraphics[width=0.8\textwidth]{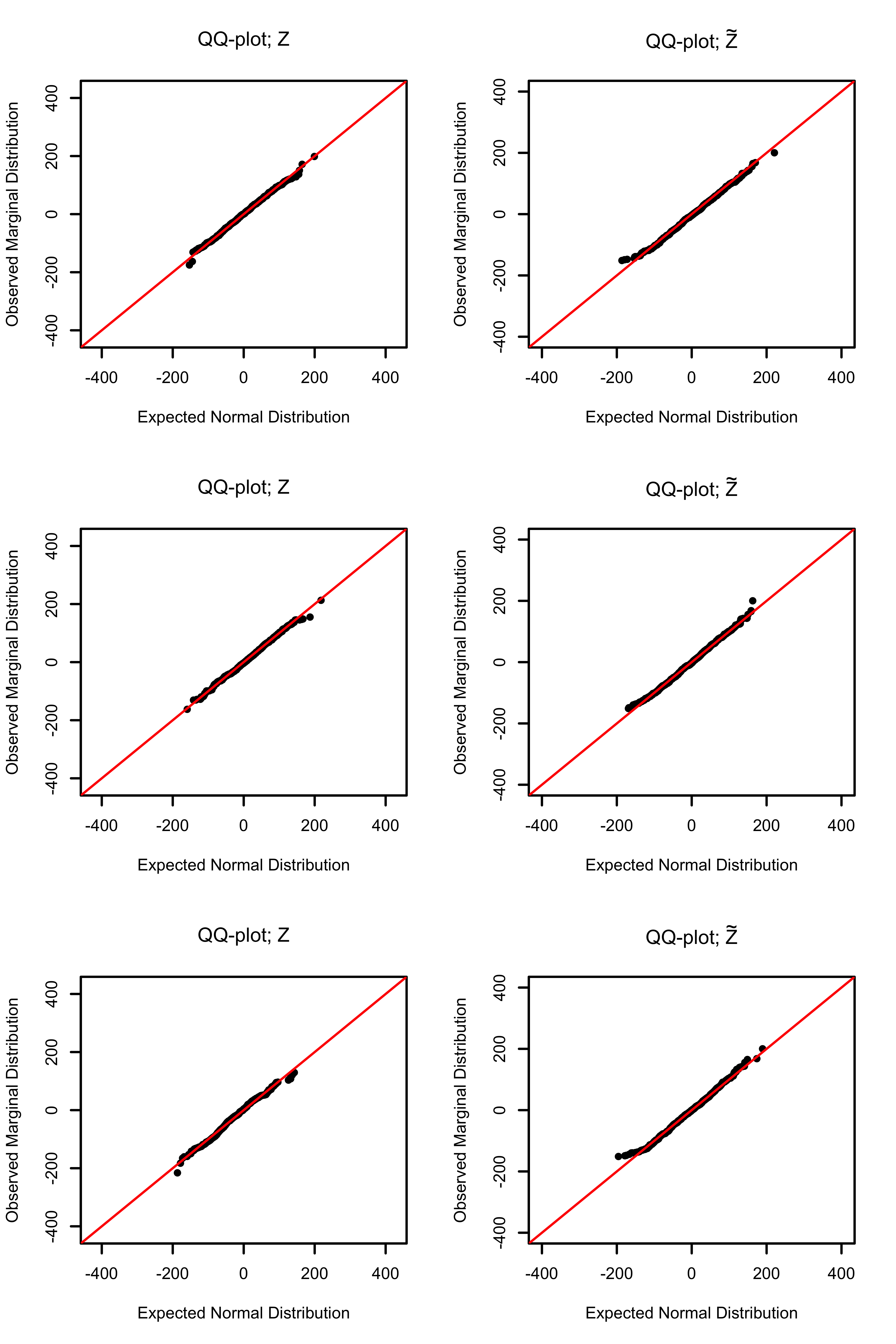}
  \caption{QQ plots of $Z_j$'s (left) and $\tilde{Z}_j$'s (right) against Gaussian samples with matching mean and variance for three randomly sampled indices.}
  \label{fig:normality_Z}
\end{figure}

\begin{figure}[htbp] 
  \centering
  \includegraphics[width=0.8\textwidth]{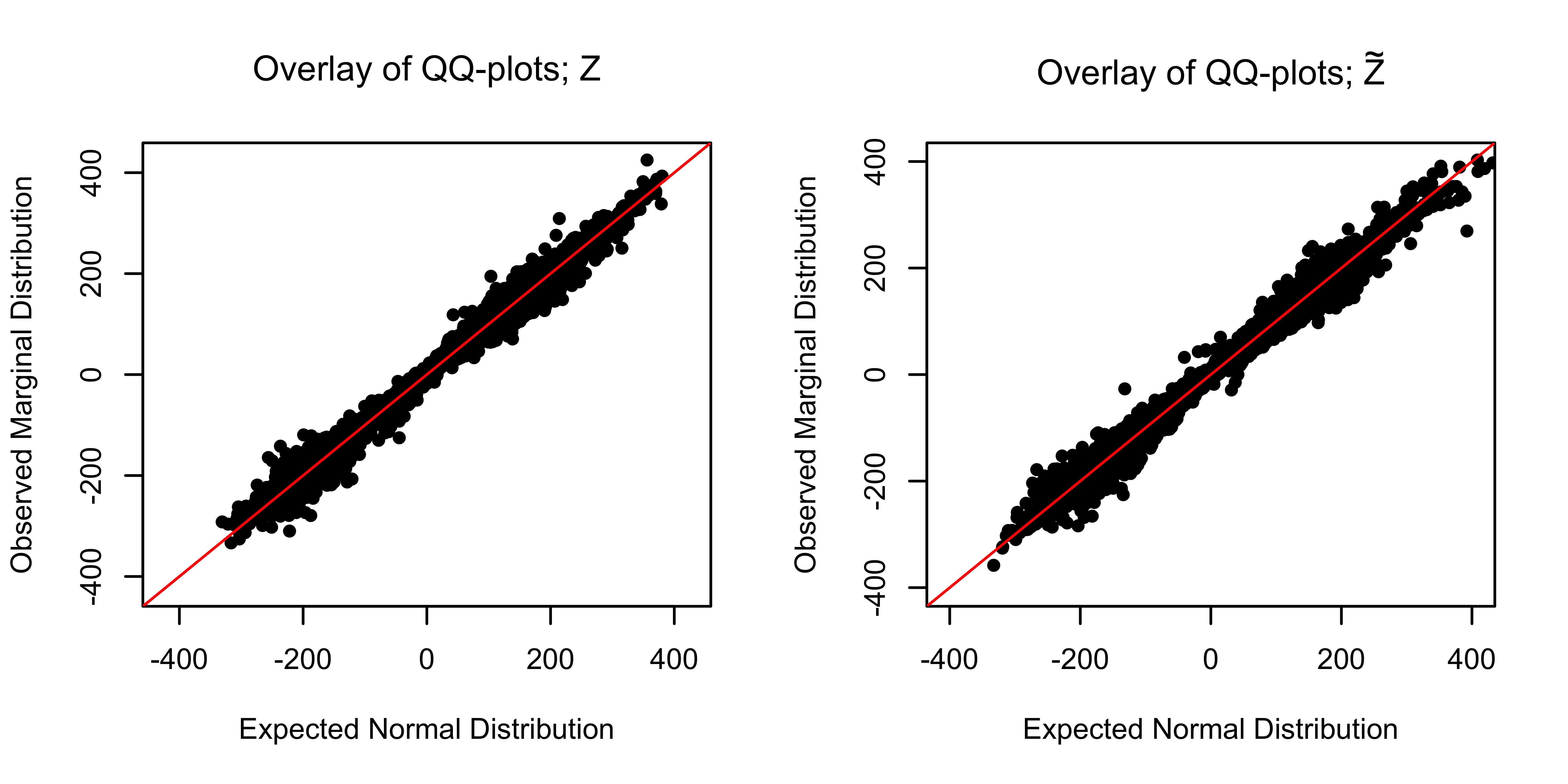}
  \caption{QQ plots of $Z_j$'s (left) and $\tilde{Z}_j$'s (right) against Gaussian samples with matching mean and variance for all indices overlaid.}
  \label{fig:normality_Z_all}
\end{figure}

\begin{figure}[htbp] 
  \centering
  \includegraphics[width=0.8\textwidth]{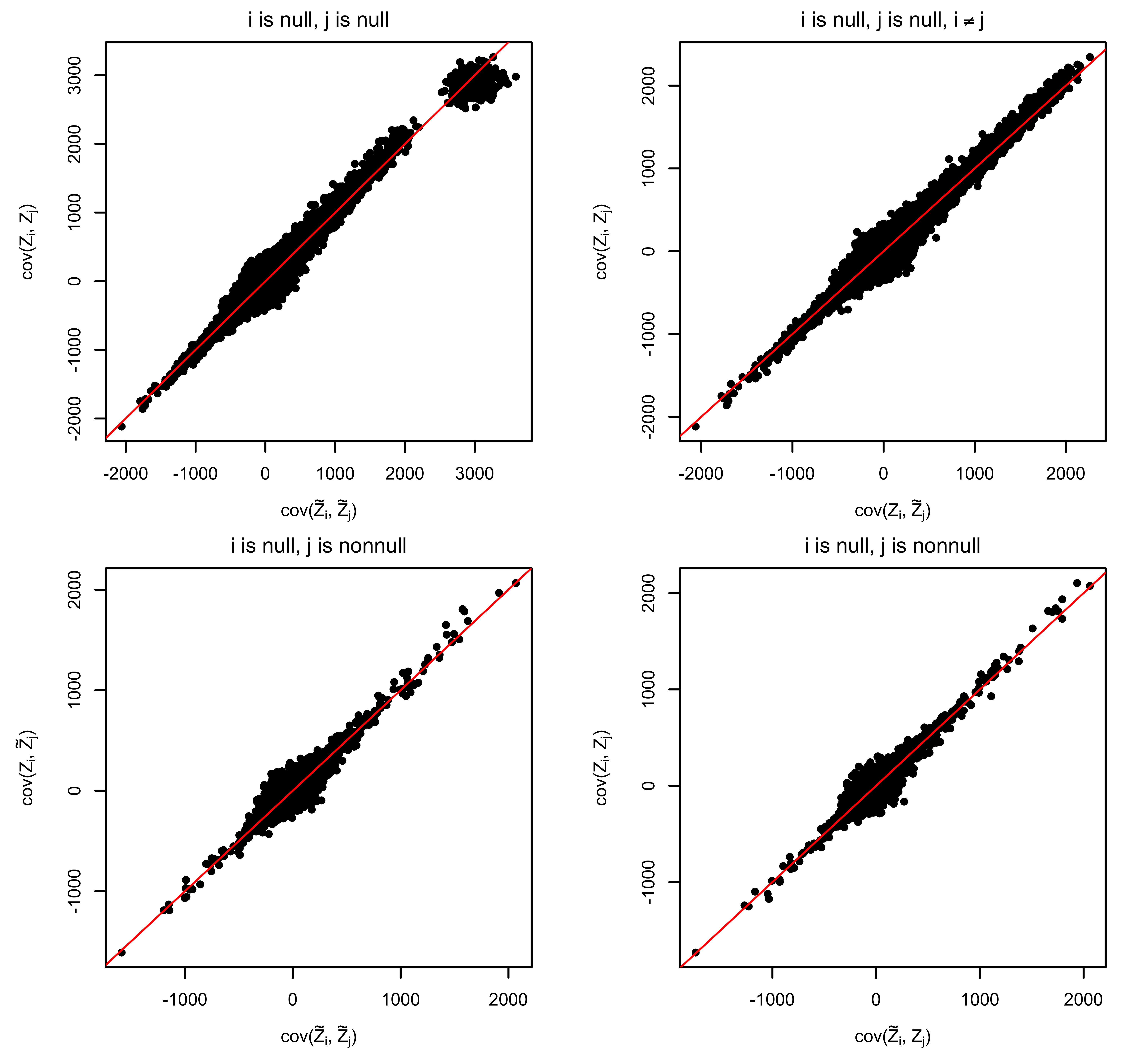}
  \caption{Scatter plots for relevant empirical covariances. Each correlation is estimated from 1000 samples drawn as described in Section \ref{subsubsec:full_ghost_real}.}
  \label{fig:exchangeability_Z}
\end{figure}

\section{Details of the nine studies in Section \ref{section:gwas}}
\label{app:nine_studies}

Section \ref{section:gwas} considers the following nine studies for Alzheimer's disease:
\begin{enumerate}
\item[1.] The genome-wide association study performed by \citet{huang2017common}.
\item[2.] The genome-wide meta-analysis of clinically diagnosed AD and AD-by-proxy by \citet{jansen2019genome}.
\item[3.] The genome-wide meta-analysis of clinically diagnosed AD by \citet{kunkle2019genetic}.
\item[4.] The genome-wide meta-analysis by \citet{schwartzentruber2021genome}.
\item[5.] In-house genome-wide associations study of 15,209 cases and 14,452 controls aggregating 27 cohorts across 39 SNP array data sets, imputed using the TOPMed reference panels \citep{belloy2022challenges}.
\item[6.] A whole-exome sequencing analyse of data from ADSP by \citet{bis2020whole}.
\item[7.] A whole-exome sequencing analyse of data from ADSP by \citet{le2021novel}.
\item[8.] In-house whole-exome sequencing analysis of ADSP (6155 cases, 5418 controls). 
\item[9.] In-house whole-genome sequencing analysis of the 2021 ADSP release (3584 cases, 2949 controls) \citep{belloy2022fast}. 
\end{enumerate}

\section{Calculation of meta-analysis Z-score} \label{app:meta_z_scores}

Based on $Z$-scores $\bZ_1,\bZ_2,...,\bZ_K$ from $K$ studies, we adopt the definition in \citet{ghostknockoffs} that meta-analysis $Z$-score with overlapping samples is $$\bZ_{\text{meta}} = \bH\sum_{k=1}^Kw_k\bC_k\bZ_k.$$ Specifically,
\begin{itemize}
    \item  optimal weights $w_1,\ldots,w_K$ are obtained by solving the optimization problem $$\begin{aligned}
\textrm{minimize }  & \sum_{k=1}^{K}\sum_{l=1}^{K} w_kw_lcor.S_{kl}\quad
\textrm{subject to }\sum_{k=1}^{K}w_k\sqrt{n_k}=1,\text{ }w_1,\ldots,w_K \ge 0;\\
\end{aligned}$$
\item for $k=1,\ldots,K$, $\bC_k = \text{diag}\{c_{k1},...,c_{kp}\}$  is a diagonal matrix where $c_{kj}=1$ if $Z$-score of the $j$-th variant is observed in the $k$-th study and $c_{kj}=0$ otherwise ($j=1,\ldots,p$);
\item $\bH=\text{diag}\{h_1,...,h_p\}$ is a diagonal matrix where $h_j=(\sum_{k}\sum_{l} w_kw_lc_{kj}c_{lj}cor.S_{kl})^{-1/2}$ ($j=1,\ldots,p$);
\item $cor.S_{kl}$ is the study correlation between the $k$-th study and the $l$-th study.
\end{itemize}

In practice, when calculating $cor.S_{kl}$, we only use variants whose $Z$-scores are bounded in $[-1.96,1.96]$ in both the $k$-th study and the $l$-th study to eliminate the impact of polygenic effects. This meta-analysis approach is a generalization of the METAL method proposed by \citet{willer2010metal}.

\section{Obtaining the covariance matrix $\bSigma$ in meta-analysis for AD}\label{sec:LDmatrices}

To perform meta-analysis for AD, we need a suitable estimate of the covariance matrix $\bSigma$. In this paper, we adopt strategies in \citet{zihuai2023app} and \citet{chu2023second} as follows.

We first download the covariance matrix from the Pan-UKB consortium (\url{https://pan.ukbb.broadinstitute.org}), which contains about $24$ million variants across the human genome derived from about $ 500,000$ British samples. We then extract $p=650,576$ variants which satisfy the following three conditions: (a) the variant is recorded in the UK Biobank genotype array, (b) its MAF exceeds 0.01, (c) its reference/alternate allele pair matches with the ones listed in all the nine studies in meta-analysis. Based on the covariance matrix of extracted variants, we further partition extracted variants into 1703 quasi-independent blocks using the partition given by  \citet{berisa2016approximately}. Finally, we compute the block-diagonal covariance matrix
$$\bSigma=\begin{bmatrix}
        \bSigma_1 & & \\
        & \ddots & \\
        & & \bSigma_{1703}
    \end{bmatrix},$$
where $\bSigma_l$ is the shrinkage estimator of the covariance matrix of variants in the $l$-th block using the R package \texttt{corpcor} \citep{schafer2005shrinkage}. To ensure that all blocks $\bSigma_1,...,\bSigma_{1703}$ are positive definite, we perform eigen-decomposition and increase all their eigenvalues not larger than $ 10^{-5}$ to $10^{-5}$.

\section{Supplementary tables of meta-analysis for AD} \label{app:gwas_add_tables}

Tables \ref{table:real_data_table1} and \ref{table:real_data_table2} provide more details of the meta-analysis for AD in Section \ref{section:gwas}. Specifically, Table \ref{table:real_data_table1} presents the number of loci, average signals per locus, standard deviation of the number of signals per locus, average groups per locus, and standard deviation of the number of groups per locus identified by conventional marginal association test, GK-marginal, GK-pseudolasso, and GK-susie-rss. Here, the $p$-value threshold of the conventional marginal association test is $5\times 10^{-8}$, and GK-pseudolasso uses the tuning parameter chosen by the lasso-min method in Section \ref{section:full_tuning}. For GK-marginal, GK-pseudolasso, and GK-susie-rss, we display results with respect to target FDR levels 0.05, 0.1, and 0.2. Table \ref{table:real_data_table2} provides details of top variants of identified loci given by GK-pseudolasso (target FDR level: 0.20), including their positions (columns ``Chr." and ``SNP"), their reference alleles (column ``Ref.") and alternative alleles (column ``Alt."), genes that they potentially regulate (column ``TopS2GGene"), their closest genes, their $Z$-scores from different individual studies, their meta-analysis $Z$-scores, their feature importance scores ($W$), and the marginal $p$-values obtained from meta-analysis $Z$-scores. 

\vspace{2mm}
\begin{table}[h]
\caption{Summary of results by applying different methods on meta-analysis for AD}
\resizebox{\columnwidth}{!}{\begin{tabular}{lrrrrrr}
 \hline
 \multirow{2}*{Method}&Target&Number of &Average signals& SD of signals & Average groups& SD of groups \\
 &FDR level&identified loci&per locus &per locus & per locus &per locus\\
 \hline
 \multicolumn{2}{l}{Marginal association test} &29 & 15.517 & 32.231 & 1.000 & 0.000\\
 \hline
 \multirow{3}*{GK-marginal}&0.05&3&107.667&97.027&3.667&3.055\\
 &0.10&10 & 95.600 & 214.568 & 2.700 & 3.622\\
 &0.20&17 & 76.176 & 174.714 & 2.412 & 3.318\\
 \hline
 \multirow{3}*{GK-pseudolasso}&0.05&30 & 21.500 & 44.323 & 2.333 & 4.722\\
 &0.10&42 & 17.214 & 38.019 & 2.024 & 4.015\\
 &0.20&63 & 15.889 & 47.074 & 1.794 & 3.561\\
 \hline
 \multirow{3}*{GK-susie-rss}&0.05&21 & 14.619 & 27.902 & 1.286 & 0.644\\
 &0.10&35 & 12.000 & 23.506 & 1.257 & 0.657\\
 &0.20&47 & 10.191 & 20.591 & 1.319 & 0.695\\
 \hline
\end{tabular}}
\label{table:real_data_table1}
\end{table}

\title{}
	\date{}                     
	\maketitle
		\renewcommand{\arraystretch}{1}
	\begin{table}[h]
		\centering
		\caption{Details of top variants of identified loci given by GK-pseudolasso (target FDR level: 0.20).}
		\resizebox{\columnwidth}{!}{
			\begin{tabular}{rrrrrrrrrrrrrrrrrr}
				\toprule
				\multirow{2}{*}{Chr.}&\multirow{2}{*}{SNP}&\multirow{2}{*}{Ref.}&\multirow{2}{*}{Alt.}&\multirow{2}{*}{TopS2GGene}&\multirow{2}{*}{Closest gene}&\multicolumn{9}{c}{$Z$-scores from different individual studies}&Meta-analysis&\multirow{2}{*}{$W$}&Marginal\\
				\cline{7-15}
				&&&&&&Study 1&Study 2&Study 3&Study 4&Study 5&Study 6&Study 7&Study 8&Study 9&$Z$-scores&&$p$-values\\
				\midrule

1&20853688&C&T&EIF4G3&EIF4G3&2.72&3.93&3.29&4.29&2.89&--&--&0.68&1.65&4.95&2.516$\times10^{-3}$&3.697$\times10^{-7}$\\
1&200984367&A&G&KIF21B&KIF21B&-2.21&-3.73&-4.33&-3.60&-2.92&--&--&--&-0.67&-4.36&1.837$\times10^{-3}$&6.490$\times10^{-6}$\\
1&207611623&A&G&CR1&CR1&-4.84&-8.81&-7.97&-9.65&-6.37&--&--&--&-3.96&-10.97&1.228$\times10^{-2}$&2.802$\times10^{-28}$\\
2&37270395&G&A&--&NDUFAF7&1.50&4.02&3.73&4.03&2.05&--&--&--&--&4.62&1.998$\times10^{-3}$&1.953$\times10^{-6}$\\
2&44026309&T&C&--&LRPPRC&-1.23&-3.80&-1.88&-3.55&-2.26&--&--&--&-3.64&-4.37&2.089$\times10^{-3}$&6.208$\times10^{-6}$\\
2&65409567&G&A&--&SPRED2&--&-3.93&-2.41&-4.05&-0.23&--&--&--&0.15&-4.44&1.993$\times10^{-3}$&4.538$\times10^{-6}$\\
2&105805908&T&C&--&NCK2&0.10&-3.94&-2.80&-4.67&-2.08&--&--&--&--&-4.72&2.490$\times10^{-3}$&1.185$\times10^{-6}$\\
2&127136908&A&T&BIN1&BIN1&3.77&10.94&8.68&11.95&8.74&--&--&--&4.90&13.36&1.141$\times10^{-2}$&5.466$\times10^{-41}$\\
2&233117202&G&C&NGEF&INPP5D&2.03&6.15&5.16&6.42&2.29&--&--&--&2.40&7.21&4.762$\times10^{-3}$&2.826$\times10^{-13}$\\
3&136105288&G&A&SLC35G2&PPP2R3A&-1.24&-3.66&-2.03&-4.64&-1.98&--&--&--&-3.04&-4.84&2.250$\times10^{-3}$&6.607$\times10^{-7}$\\
4&11024404&A&G&--&CLNK&-2.47&-6.00&-4.06&-6.50&-4.32&--&--&--&-2.52&-7.32&5.039$\times10^{-3}$&1.275$\times10^{-13}$\\
4&71303158&G&A&--&SLC4A4&-2.60&-3.77&-3.65&-3.34&-1.49&--&--&--&-1.53&-4.28&1.817$\times10^{-3}$&9.466$\times10^{-6}$\\
4&112082387&A&C&--&FAM241A&2.47&4.82&1.76&3.36&0.40&--&--&--&-0.71&4.68&2.478$\times10^{-3}$&1.418$\times10^{-6}$\\
4&143428212&C&T&--&GAB1&--&-3.68&-2.84&-3.95&-1.56&--&--&--&-1.36&-4.37&2.017$\times10^{-3}$&6.081$\times10^{-6}$\\
4&158808801&G&A&RAPGEF2&FNIP2&-2.43&-3.95&-2.39&-3.82&-2.40&--&--&--&-1.91&-4.66&1.894$\times10^{-3}$&1.553$\times10^{-6}$\\
5&4068226&C&T&--&IRX1&--&4.53&1.20&3.62&0.34&--&--&--&-0.91&4.50&2.144$\times10^{-3}$&3.323$\times10^{-6}$\\
5&14707491&C&T&ANKH&ANKH&-3.92&-3.20&-3.95&-4.36&-3.60&--&--&--&0.60&-4.66&2.480$\times10^{-3}$&1.602$\times10^{-6}$\\
5&86923485&A&G&--&LINC02059&2.49&4.70&2.45&3.76&3.49&--&--&--&2.49&5.12&3.059$\times10^{-3}$&1.517$\times10^{-7}$\\
5&177559423&G&A&RAB24&FAM193B&1.96&3.85&3.99&4.16&2.48&--&--&--&1.38&4.71&2.313$\times10^{-3}$&1.248$\times10^{-6}$\\
5&179373099&C&T&--&ADAMTS2&1.52&2.54&4.29&4.80&3.12&--&--&--&2.35&4.36&1.938$\times10^{-3}$&6.512$\times10^{-6}$\\
6&935171&T&C&--&LINC01622&-2.80&-3.20&-3.33&-4.55&-3.37&--&--&--&-2.17&-4.75&2.380$\times10^{-3}$&1.040$\times10^{-6}$\\
6&32686937&T&C&HLA-DQA2&HLA-DQB1&-3.88&-6.46&-4.86&-7.53&-2.29&--&--&--&-1.10&-8.13&4.461$\times10^{-3}$&2.090$\times10^{-16}$\\
6&41066261&G&C&OARD1&OARD1&2.69&3.78&6.91&7.12&4.06&--&--&--&--&6.37&5.558$\times10^{-3}$&9.364$\times10^{-11}$\\
6&47484147&C&T&CD2AP&CD2AP&2.95&5.74&5.21&6.10&5.33&--&--&--&2.24&7.05&3.271$\times10^{-3}$&8.942$\times10^{-13}$\\
7&1543652&A&G&TMEM184A&MAFK&2.33&4.06&2.93&3.64&2.36&--&--&--&0.33&4.54&1.810$\times10^{-3}$&2.868$\times10^{-6}$\\
7&37842715&G&A&--&NME8&2.95&4.15&3.81&3.74&3.20&--&--&--&1.13&4.79&2.045$\times10^{-3}$&8.230$\times10^{-7}$\\
7&100406823&C&T&--&ZCWPW1&4.25&7.53&4.01&8.41&5.04&--&--&3.59&1.29&9.35&8.987$\times10^{-3}$&4.266$\times10^{-21}$\\
7&143410495&G&T&EPHA1-AS1&EPHA1&1.19&6.56&4.37&6.81&2.70&--&--&--&1.63&7.52&3.795$\times10^{-3}$&2.751$\times10^{-14}$\\
8&27362470&C&T&PTK2B&PTK2B&3.84&6.79&6.12&7.94&5.19&--&--&--&2.12&8.70&4.345$\times10^{-3}$&1.668$\times10^{-18}$\\
8&95041772&C&T&--&NDUFAF6&4.06&3.96&4.03&4.50&2.81&--&--&--&0.36&5.17&2.207$\times10^{-3}$&1.172$\times10^{-7}$\\
8&97359646&A&G&--&SNORD3H&2.70&3.01&3.70&3.99&2.42&--&--&--&1.30&4.25&1.767$\times10^{-3}$&1.067$\times10^{-5}$\\
8&102564430&G&A&--&ODF1&1.72&4.00&2.66&3.53&1.42&--&--&--&-0.48&4.29&1.855$\times10^{-3}$&8.825$\times10^{-6}$\\
8&111515902&C&T&--&LINC02237&--&4.13&0.44&3.77&-0.41&--&--&--&--&4.40&2.051$\times10^{-3}$&5.387$\times10^{-6}$\\
8&144042819&T&C&PARP10&SPATC1&0.17&4.66&2.47&4.57&3.68&--&--&--&--&5.16&3.389$\times10^{-3}$&1.210$\times10^{-7}$\\
10&29966853&G&A&--&JCAD&--&3.72&2.05&4.56&1.31&--&--&--&0.48&4.68&2.501$\times10^{-3}$&1.443$\times10^{-6}$\\
10&42722997&T&C&--&LOC283028&0.39&4.79&2.57&4.34&1.14&--&--&--&0.25&5.02&2.128$\times10^{-3}$&2.616$\times10^{-7}$\\
10&59962515&T&G&--&LINC01553&1.43&3.63&3.30&5.14&3.48&--&--&--&3.17&5.18&2.031$\times10^{-3}$&1.130$\times10^{-7}$\\
10&80494228&C&T&TSPAN14&TSPAN14&3.23&3.22&4.17&5.83&2.03&--&--&--&--&5.35&2.041$\times10^{-3}$&4.295$\times10^{-8}$\\
11&60254475&G&A&--&MS4A4E&-5.74&-7.97&-8.27&-9.09&-6.66&--&--&--&-3.32&-10.30&8.499$\times10^{-3}$&3.570$\times10^{-25}$\\
11&65888811&G&A&FIBP&FIBP&-2.13&-4.59&-1.22&-3.57&-1.62&--&--&--&-0.38&-4.74&2.589$\times10^{-3}$&1.070$\times10^{-6}$\\
11&86156833&A&G&PICALM&PICALM&6.78&8.67&8.07&10.55&5.11&--&--&--&3.08&11.50&1.074$\times10^{-2}$&6.418$\times10^{-31}$\\
11&121578263&T&C&--&SORL1&-3.10&-4.40&-3.82&-5.59&-3.38&--&--&--&-0.52&-5.90&3.920$\times10^{-3}$&1.768$\times10^{-9}$\\
13&43679792&C&T&--&ENOX1&0.19&3.79&1.22&4.28&0.01&--&--&--&-1.03&4.30&1.865$\times10^{-3}$&8.441$\times10^{-6}$\\
13&93594511&A&T&--&GPC6-AS2&--&-0.04&-1.09&-0.85&-2.34&--&--&--&-0.57&-0.62&7.282$\times10^{-2}$&2.672$\times10^{-1}$\\
14&32478306&T&C&AKAP6&AKAP6&-1.45&-4.35&-1.77&-3.63&-0.28&--&--&--&0.77&-4.44&1.869$\times10^{-3}$&4.449$\times10^{-6}$\\
14&52924962&A&G&--&FERMT2&4.68&4.58&4.97&6.27&2.90&--&--&--&1.32&6.58&4.682$\times10^{-3}$&2.429$\times10^{-11}$\\
14&92470949&C&T&--&SLC24A4&-3.83&-6.10&-5.16&-6.67&-2.90&--&--&--&-2.58&-7.57&4.647$\times10^{-3}$&1.836$\times10^{-14}$\\
15&50735410&C&T&HDC&SPPL2A&-3.16&-4.81&-4.09&-6.02&-2.45&--&--&--&0.09&-6.29&5.133$\times10^{-3}$&1.547$\times10^{-10}$\\
15&58753575&A&G&--&ADAM10&-2.86&-5.90&-4.16&-5.97&-2.81&--&--&--&-2.16&-6.94&3.385$\times10^{-3}$&1.910$\times10^{-12}$\\
15&63277703&C&T&APH1B&APH1B&1.20&5.52&3.68&5.72&2.58&2.46&1.61&0.98&2.05&6.45&3.285$\times10^{-3}$&5.482$\times10^{-11}$\\
16&31120929&A&G&KAT8&KAT8&-2.28&-5.50&-2.72&-5.84&-2.89&--&--&--&-1.45&-6.56&3.913$\times10^{-3}$&2.702$\times10^{-11}$\\
17&5233752&G&A&SCIMP&SCIMP&3.30&6.04&3.82&5.48&1.93&--&--&--&2.40&6.79&3.297$\times10^{-3}$&5.560$\times10^{-12}$\\
17&7581494&G&A&CD68&LOC100996842&-1.82&-3.60&-1.57&-3.49&-3.37&-1.95&-1.61&-2.72&-3.18&-4.42&1.933$\times10^{-3}$&4.941$\times10^{-6}$\\
17&49219935&T&C&ABI3&ABI3&--&-4.94&--&--&-4.75&-2.68&0.20&--&-2.61&-5.25&2.982$\times10^{-3}$&7.430$\times10^{-8}$\\
17&58331728&G&C&BZRAP1&MIR142&-1.00&-4.94&-5.09&-5.12&-3.81&--&--&--&-1.35&-5.75&3.909$\times10^{-3}$&4.412$\times10^{-9}$\\
17&63482562&C&T&ACE&ACE&2.73&5.07&3.54&5.25&3.92&1.93&2.67&2.09&2.45&6.32&5.299$\times10^{-3}$&1.268$\times10^{-10}$\\
19&1058177&A&G&--&ABCA7&-0.93&-4.61&-2.73&-4.94&-3.96&-1.16&-1.48&-0.38&0.52&-5.45&4.973$\times10^{-3}$&2.534$\times10^{-8}$\\
19&6876985&T&C&VAV1&ADGRE1&1.05&3.04&3.58&4.42&1.59&--&--&--&0.42&4.21&2.119$\times10^{-3}$&1.254$\times10^{-5}$\\
19&44888997&C&T&PVRL2&NECTIN2&20.83&51.85&--&--&--&--&--&--&--&53.66&8.573&0.000\\
19&51224706&C&A&CD33&CD33&-3.40&-5.84&-5.09&-5.69&-3.76&--&--&--&-3.97&-6.x96&4.936$\times10^{-3}$&1.696$\times10^{-12}$\\
19&54664811&A&G&LILRB4&LILRB4&-2.61&-3.61&-3.13&-3.89&-1.05&--&--&--&0.54&-4.37&1.958$\times10^{-3}$&6.300$\times10^{-6}$\\
20&56409712&G&T&CASS4&CASS4&-3.82&-5.84&-4.56&-6.07&-5.14&--&--&--&--&-7.12&6.582$\times10^{-3}$&5.526$\times10^{-13}$\\
21&26775872&C&T&ADAMTS1&ADAMTS1&-1.60&-2.90&-5.17&-5.54&-3.39&--&--&--&-0.22&-4.87&2.469$\times10^{-3}$&5.668$\times10^{-7}$\\

				\bottomrule
			\end{tabular}
		}
            \label{table:real_data_table2}
	\end{table}

\section{Supplementary figures of meta-analysis for AD} \label{app:gwas_add_figures}

Analogous to Figure \ref{fig:real_data_ghostlasso} in Section \ref{section:gwas}, Figures \ref{fig:real_data_gwas},  \ref{fig:real_data_marginal} and \ref{fig:real_data_susie} respectively present Manhattan plots of the meta-analysis of the nine studies via conventional marginal association test (with $p$-value cutoff $5\times 10^{-8}$), GK-marginal (with target FDR level 0.10), and GK-susie-rss (with target FDR level 0.10).

\begin{figure}[htbp]
  \centering
  \begin{subfigure}[b]{0.98\textwidth} 
    \includegraphics[width=\textwidth]{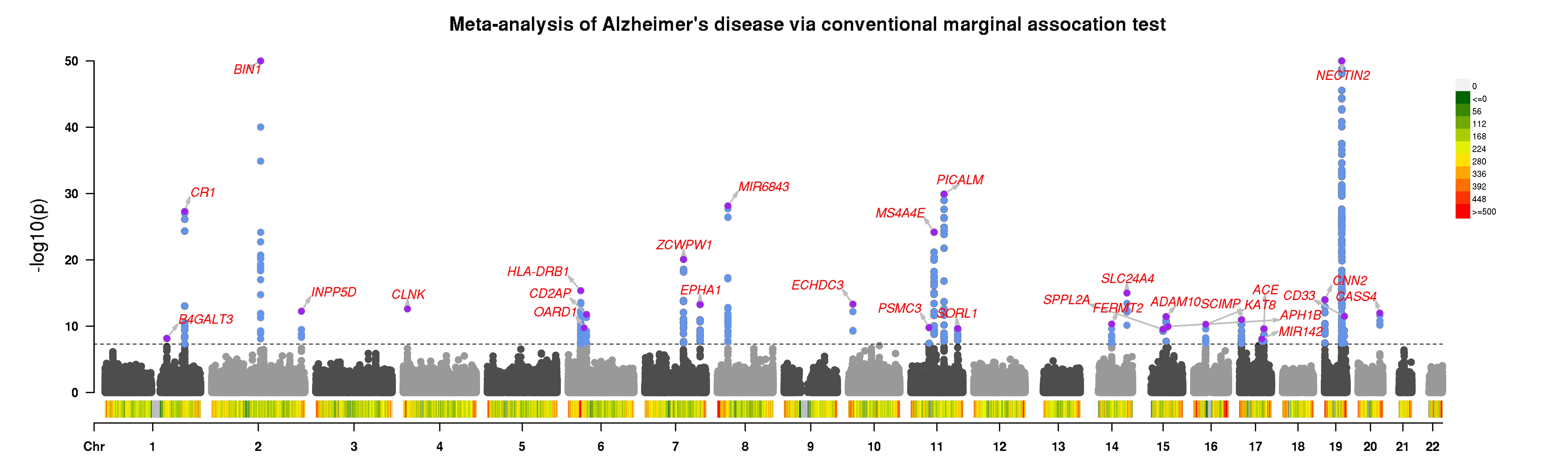}
  \end{subfigure}
  \caption{Graphical illustration of the result by applying conventional marginal association test on meta-analysis for AD. The dotted line represents the conventional genome-wide p-value threshold of $5\times10^{-8}$. P-values are truncated at $10^{-50}$ for better visualization. The results are obtained from the meta-analysis $p$-values calculated based on Section \ref{app:meta_z_scores}. Variant density is shown at the bottom of plot (number of variants per 1Mb).}
  \label{fig:real_data_gwas}
\end{figure}

\begin{figure}[htbp]
  \centering
  \begin{subfigure}[b]{0.98\textwidth} 
    \includegraphics[width=\textwidth]{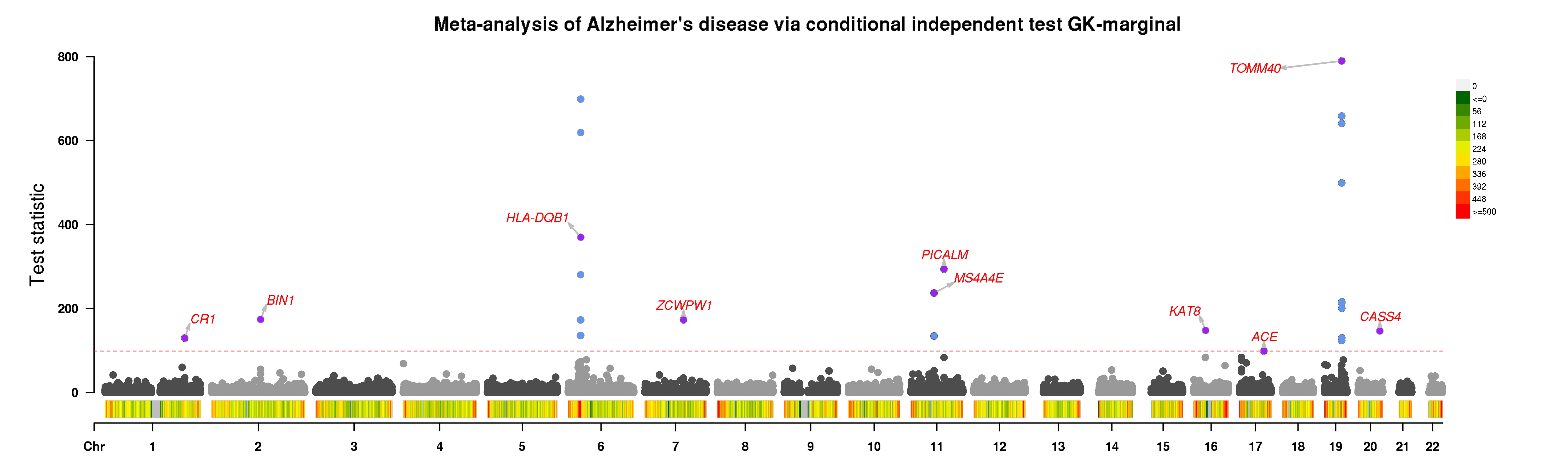}
  \end{subfigure}
  \caption{Graphical illustration of the result by applying the GK-marginal on meta-analysis for AD. Each point represents a group of genetic variants. With respect to the target FDR level 0.1, points of identified groups are highlighted in blue or purple. For each locus with at least one identified group, the name of the locus is presented at the variant group with the largest importance statistic (highlighted in purple). Variant density is shown at the bottom of plot (number of variants per 1Mb).}
  \label{fig:real_data_marginal}
\end{figure}

\begin{figure}[htbp]
  \centering
  \begin{subfigure}[b]{0.98\textwidth} 
    \includegraphics[width=\textwidth]{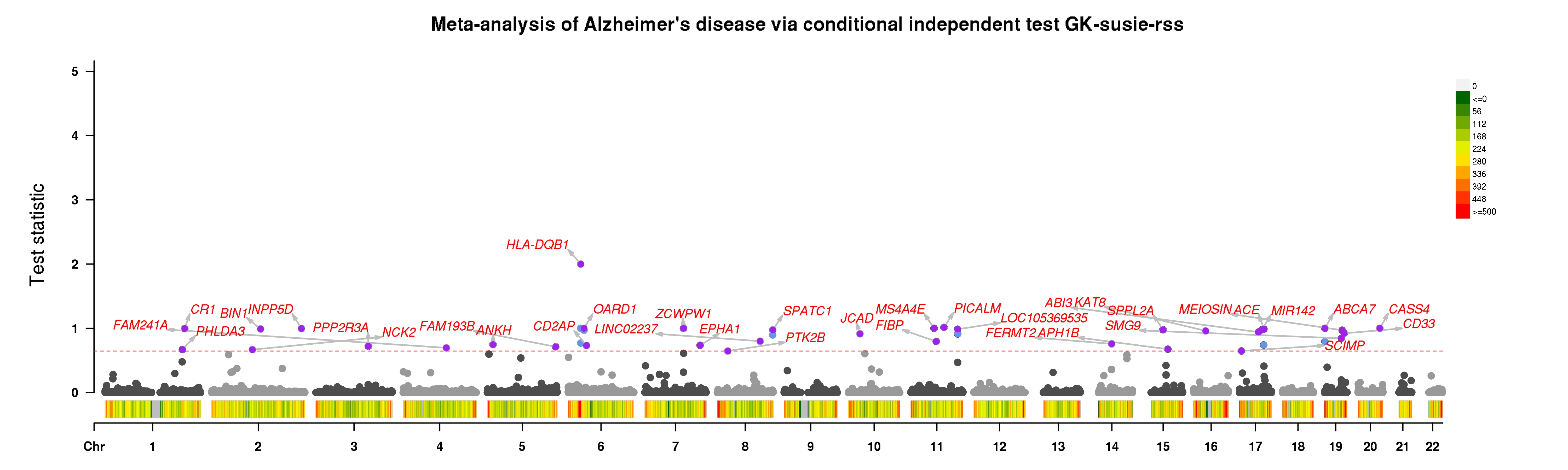}
  \end{subfigure}
  \caption{Graphical illustration of the result by applying the GK-susie-rss on meta-analysis for AD. Each point represents a group of genetic variants. With respect to the target FDR level 0.1, points of identified groups are highlighted in blue or purple. For each locus with at least one identified group, the name of the locus is presented at the variant group with the largest importance statistic (highlighted in purple). Variant density is shown at the bottom of plot (number of variants per 1Mb).}
  \label{fig:real_data_susie}
\end{figure}

The conventional marginal association test selects many feature groups because it focuses on marginal correlations between feature groups and the response while ignoring spurious correlation induced by linkage disequilibrium. This is shown in Figure \ref{fig:real_data_gwas}, where the conventional marginal association test tends to select many nearby loci. This issue is alleviated by the GhostKnockoffs approach that tests conditional independence as seen in Figures \ref{fig:real_data_ghostlasso}, \ref{fig:real_data_marginal}, and \ref{fig:real_data_susie}. 

\section{Running Lasso on binary responses} \label{app:binary response}

In genetic datasets, the response $Y$ is often binary. Performing Lasso or Lasso-type regressions on binary response may sound unreasonable since it violates the usual linear model assumption. One might assume that utilizing penalized logistic regression to generate feature importance statistics would be much more effective. However, a bit surprisingly, we demonstrate that this intuition may not be correct through the following two simulations.

For the first column of Figure \ref{fig:binary_response}, we generate $X_i \stackrel{iid} {\sim} \mathcal{N}(\bzero,\frac{1}{\sqrt{n}}\bI_p)$, and, conditional on $X_i$, $\bbP(Y_i=1)=\frac{1}{1+e^{-\bbeta^\top X_i}}$ and $\bbP(Y_i=0)=1-\bbP(Y_i=1)$, where $n=1000$ and $p=300$. We create $\bbeta$ by uniformly randomly selecting 30 coordinates to be non-zero. The signs of these non-zero coordinates are assigned to be either positive or negative with equal probability. The dark curve represents the knockoffs procedure with Lasso coefficient difference statistic (with tuning parameter chosen by cross-validation), i.e., KF-lassocv. The red curve represents the knockoffs procedure with coefficient difference statistic generated by $L_1$-penalized logistic regression. We vary the signal amplitudes such that we observe relatively complete power profiles below. The target FDR is $0.1$. Each point on the curves represents an average over 200 replications. For the second column of Figure \ref{fig:binary_response}, we show the result for AR(1) features. Here, $n=600$, $p=200$ and the signal amplitude (i.e., the magnitude of non-zero $\beta$ values) is fixed to be 0.5. Otherwise, the simulation setting is exactly the same as the independent case. We observe that the two methods considered have almost the same power and FDR, so the use of penalized logistic regression does not meaningfully affect the results.

\begin{figure}[htbp] 
  \centering
  \includegraphics[width=0.8\textwidth]{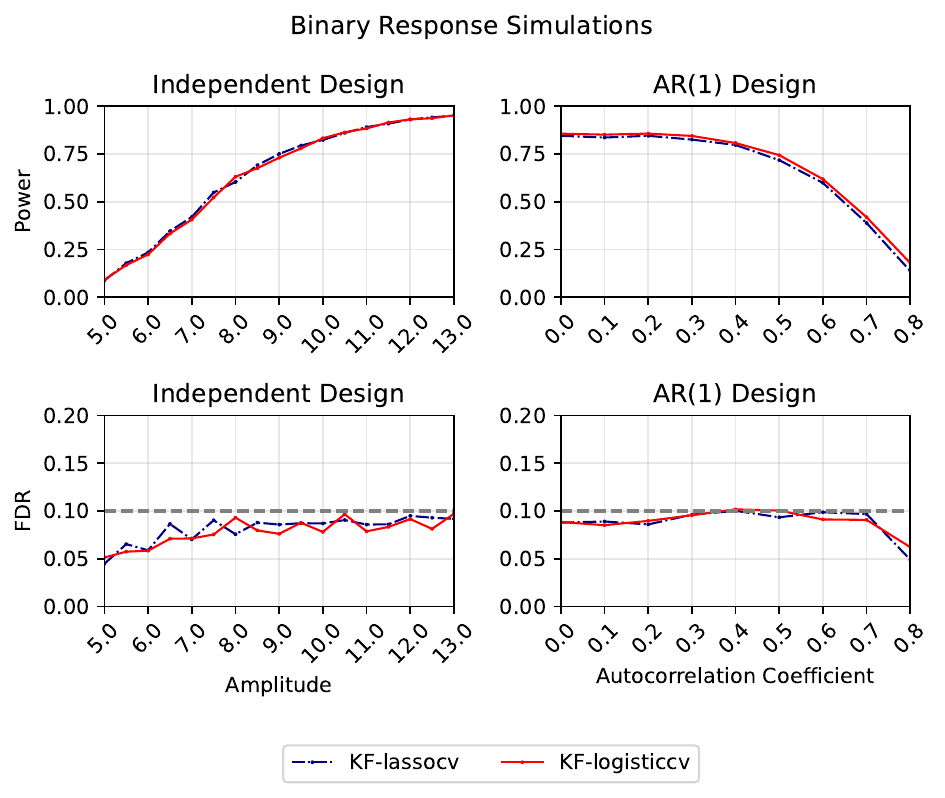}
  \caption{Power and FDR plots when the response is generated by a logistic regression model. Each point is an average over 200 replications.}
  \label{fig:binary_response}
\end{figure}

\end{document}